\newtheorem{theorem}{Theorem}
\newtheorem{proposition}[theorem]{Proposition}
\newtheorem{lemma}[theorem]{Lemma}
\newtheorem{definition}[theorem]{Definition}
\newtheorem{remark}[theorem]{Remark}
\newtheorem{example}[theorem]{Example}
\newtheorem{construction}[theorem]{Construction}
\numberwithin{equation}{section}
\numberwithin{theorem}{section}
\newcommand{\R}{\ensuremath{\mathbb{R}}}
\newcommand{\Lt}{\ensuremath{\mathrm{L}}}
\newcommand{\Rt}{\ensuremath{\mathrm{R}}}
\newcommand{\norm}[1]{\ensuremath{\left| #1 \right|}}
\newcommand{\ora}[1]{\ensuremath{\overrightarrow{#1}}}
\newcommand{\inn}{\ensuremath{\mathrm{in}}}
\newcommand{\out}{\ensuremath{\mathrm{out}}}
\newcommand{\conv}{\ensuremath{\mathrm{conv}}}
\newcommand{\imp}{\ensuremath{\mathrm{imp}}}
\begin{document}
%%%%%%%%%%%%%%%%%%%%%%%%%%%%%%%%%%%%%%%%%%%%%%%%%%
\title[Flat-back 3D gadgets in origami extrusions completely compatible with the conventional]
{Improved flat-back 3D gadgets in origami extrusions\\
completely downward compatible with\\
the conventional pyramid-supported 3D gadgets}

\author{Mamoru Doi}
\address{11-9-302 Yumoto-cho, Takarazuka, Hyogo 665-0003, Japan}
\email{doi.mamoru@gmail.com}
\maketitle
\noindent{\bfseries Abstract. }
An origami extrusion is a folding of a $3$D object in the middle of a flat piece of paper, using $3$D gadgets which create faces with solid angles. 
In this paper we focus on $3$D gadgets which create a top face parallel to the ambient paper and two side faces sharing a ridge, 
with two outgoing simple pleats, where a simple pleat is a pair of a mountain fold and a valley fold. 
There are two such types of $3$D gadgets. 
One is the conventional type of $3$D gadgets with a triangular pyramid supporting the two side faces from inside. 
The other is the newer type of $3$D gadgets presented in our previous paper, which improve the conventional ones in several respects: 
They have flat back sides above the ambient paper and no gap between the side faces; 
they are less interfering with adjacent gadgets so that we can make the extrusion higher at one time; 
they are downward compatible with conventional ones if constructible; 
they have a modified flat-back gadget used for repetition which does not interfere with adjacent gadgets; 
the angles of their outgoing pleats can be changed under certain conditions. 
However, there are cases where we can apply the conventional gadgets while we cannot our previous ones. 
The purpose of this paper is to improve our previous $3$D gadgets to be completely downward compatible with the conventional ones, 
in the sense that any conventional gadget can be replaced by our improved one with the same outgoing pleats, 
but the converse is not always possible. 
To be more precise, we prove that for any given conventional $3$D gadget there are an infinite number of improved $3$D gadgets which are compatible with it, 
and the conventional $3$D gadget can be replaced with any of these $3$D gadgets without affecting any other conventional $3$D gadget. 
Also, we see that our improved $3$D gadget keep all of the above advantages over the conventional ones.

\section{Introduction}
An \emph{origami extrusion} is a folding of a $3$D object in the middle of a flat piece of papar, with the paper around the $3$D object kept flat.
The mechanisms in origami extrusions which create faces with solid angles are called \emph{$3$D gadgets}.

In general, to construct such a $3$D gadget we begin with its development, which is a flat piece of paper with a net of the faces of the extruded object on it.
Also, we prescribe the creases of some pleats going out of the extruded object (which we call \emph{outgoing pleats})
by which we `inflate' the paper, and then design the rest of the creases in the region which is hidden behind after the folding.
Thus given a net of a $3$D object and prescribed creases of outgoing pleats, we may have more than one way of making an extrusion with the same appearance.
If two $3$D gadgets have the same net of the extruded object and prescribed creases of outgoing pleats, then they are compatible if we ignore other $3$D gadgets.
However, in the presence of other $3$D gadgets, even if the replacement of one gadget (say gadget A) with the other (say gadget B) is possible, 
the converse replacement (of gadget B with gadget A) may not because gadget A may collide with other gadgets. 
In such a case we say that gadget B is \emph{downward compatible} with gadget A.

This paper is a sequel to our previous paper \cite{Doi}, and 
we focus on $3$D gadgets which create a top face parallel to the ambient paper and two side faces sharing a ridge, 
with two outgoing simple pleats, where a simple pleat is a pair of a mountain fold and a valley fold.
There are two known types of such $3$D gadgets. 
The older is that of the conventional pyramid-supported gadgets.
For example, in the crease pattern (also abbreviated as `CP') of the conventional cube gadget shown in Figure $\ref{fig:cube_conv_CP}$,
the shadowed part (being a square for the cube gadget but a hexagon in general) forms a triangular pyramid and supports the side faces
from inside in the resulting extrusion as in Figure $\ref{fig:cube_conv_back}$.
The newer is that of the flat-back gadgets which we presented in the previous paper.
For example, in the crease pattern of our new cube gadget in Figure $\ref{fig:cube_new_CP}$, 
the pleats formed by the upper left and right shadowed kites, which we call `ears' in \cite{Doi}, support the side faces,
and the pleat formed by the lower kite, which we call a `tongue' lies in the bottom plane and thrusts inside the angle between the side faces
in the resulting extrusion as in Figure $\ref{fig:cube_new_back}$. 

Our $3$D gadgets presented in our previous paper have several advantages over the conventional ones as follows:
\begin{itemize}
\item They have flat back sides above the ambient paper, which enables us to make origami extrusions being flat-foldable or having curved creases.
Also, they have no gap between the side faces, which makes the appearance better, while the conventional ones have a gap from which we can see the inside space 
bounded by the faces of the supporting pyramid.
\item They are less interfering with adjacent $3$D gadgets. 
Thus we can make the extrusion higher at one time. 
\item We can easily change the angles of the outgoing pleats of our $3$D gadgets under certain conditions, 
which is impossible or at least difficult for the conventional ones.
\item If our $3$D gadget is compatible with a conventional one, it is also downward compatible.
\item There are modified flat-back $3$D gadgets used for repeating or dividing our $3$D gadgets, 
which have a good property that they do not interfere with other gadgets.
\end{itemize}
Note that the fourth item does not say that there always exists a $3$D gadget of ours which is compatible with a given conventional one.
For example, if a conventional $3$D gadget has a side face with inner angle less than or equal to $\pi /4$ at the point where the top and the side faces assemble,
we cannot construct a $3$D gadget of ours compatible with it. 
This is one of the few inferior points of our previous $3$D gadgets to the conventional ones.
We show in Figure $\ref{const:conv}$ an example of the crease pattern of such a conventional $3$D gadget, where $\beta_\Lt =\pi /4$.

The purpose of this paper is to improve our $3$D gadgets to be \emph{completely downward compatible} with the conventional pyramid-supported ones.
To be more precise, we prove that for any given conventional $3$D gadget there are an infinite number of improved $3$D gadgets which are compatible with it,
and the conventional $3$D gadget can be replaced with any of these $3$D gadgets without affecting any other conventional $3$D gadget. 
Also, wee see that our improved $3$D gadgets keep all of the above advantages over the conventional ones. 

To explain the main idea to achieve this end, we consider the crease pattern as given in Figure $\ref{fig:deformation_cube_new_CP}$.
In this crease pattern, creases $\ell_\sigma ,m_\sigma$ for $\sigma =\Lt ,\Rt$ are prescribed so that the resulting gadget is compatible with 
a conventional one, and $C$ is determined as the point to which $B_\Lt$ and $B_\Rt$ swing.
However, since $D$ is a point such that segments $AB_\Lt$ and $AB_\Rt$ swing to overlap with segment $AD$, 
it is necessary for $D$ to lie on the circular arc $B_\Lt B_\Rt$ with center $A$, but not on segment $AC$.
Indeed, we see in Figure $\ref{fig:deformation_cube_new_CP}$ that as we move point $D$ on the circular arc $B_\Lt B_\Rt$, 
the shadowed kites forming the ears and the tongue deform correspondingly, so that the crease pattern gives another cube gadget.
Thus $\phi_\sigma =\angle B_\sigma AD$ parametrizes the deformations of the gadget.
This is the main starting idea of this paper.
\begin{figure}[htbp]
  \begin{center}
    \begin{tabular}{c}
\addtocounter{theorem}{1}
      \begin{minipage}{0.5\hsize}
        \begin{center}
          \includegraphics[width=\hsize]{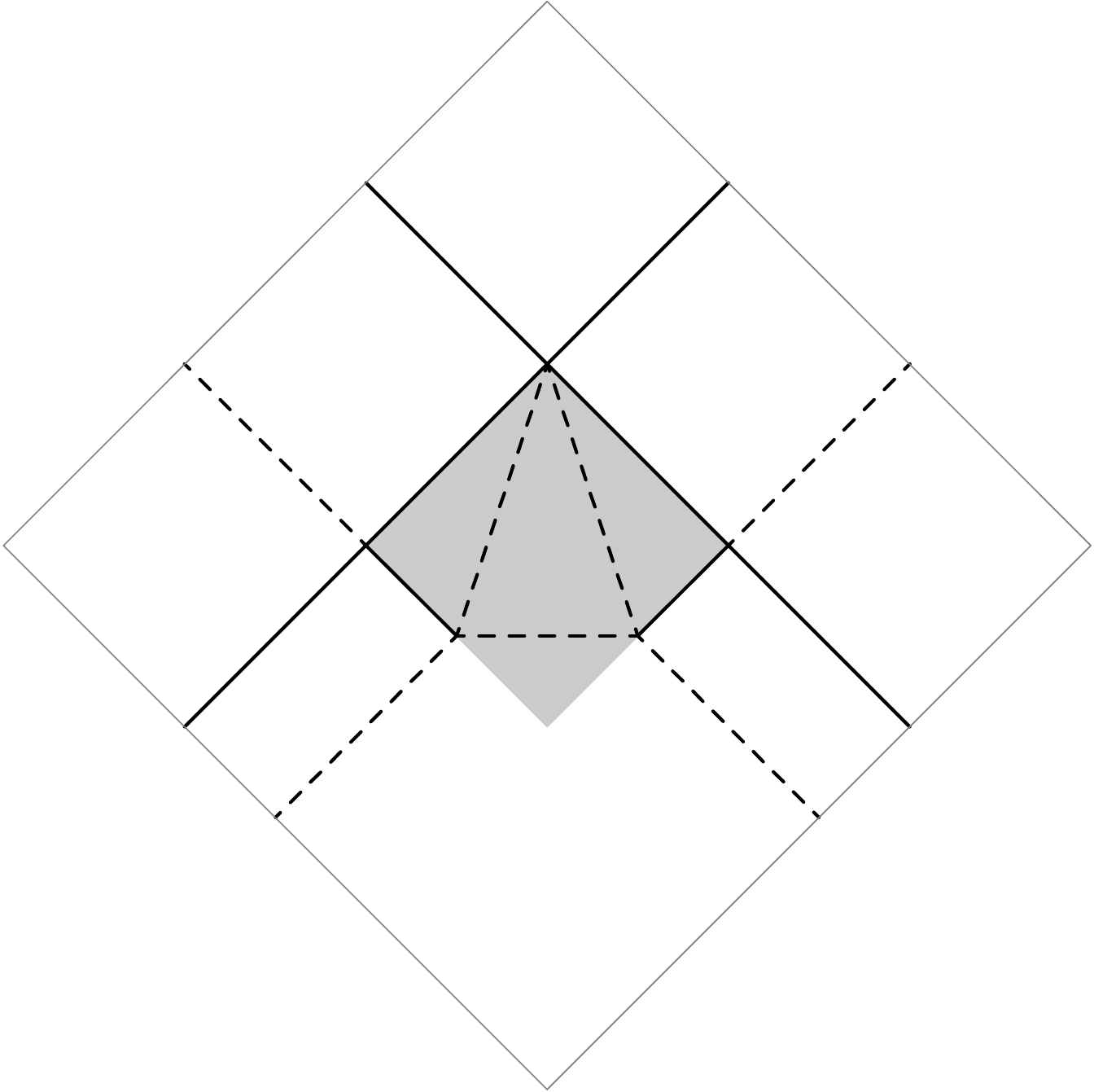}
        \end{center}
    \caption{CP of the conventional cube gadgets}
    \label{fig:cube_conv_CP}
      \end{minipage}
\addtocounter{theorem}{1}
      \begin{minipage}{0.5\hsize}
        \begin{center}
          \includegraphics[width=\hsize]{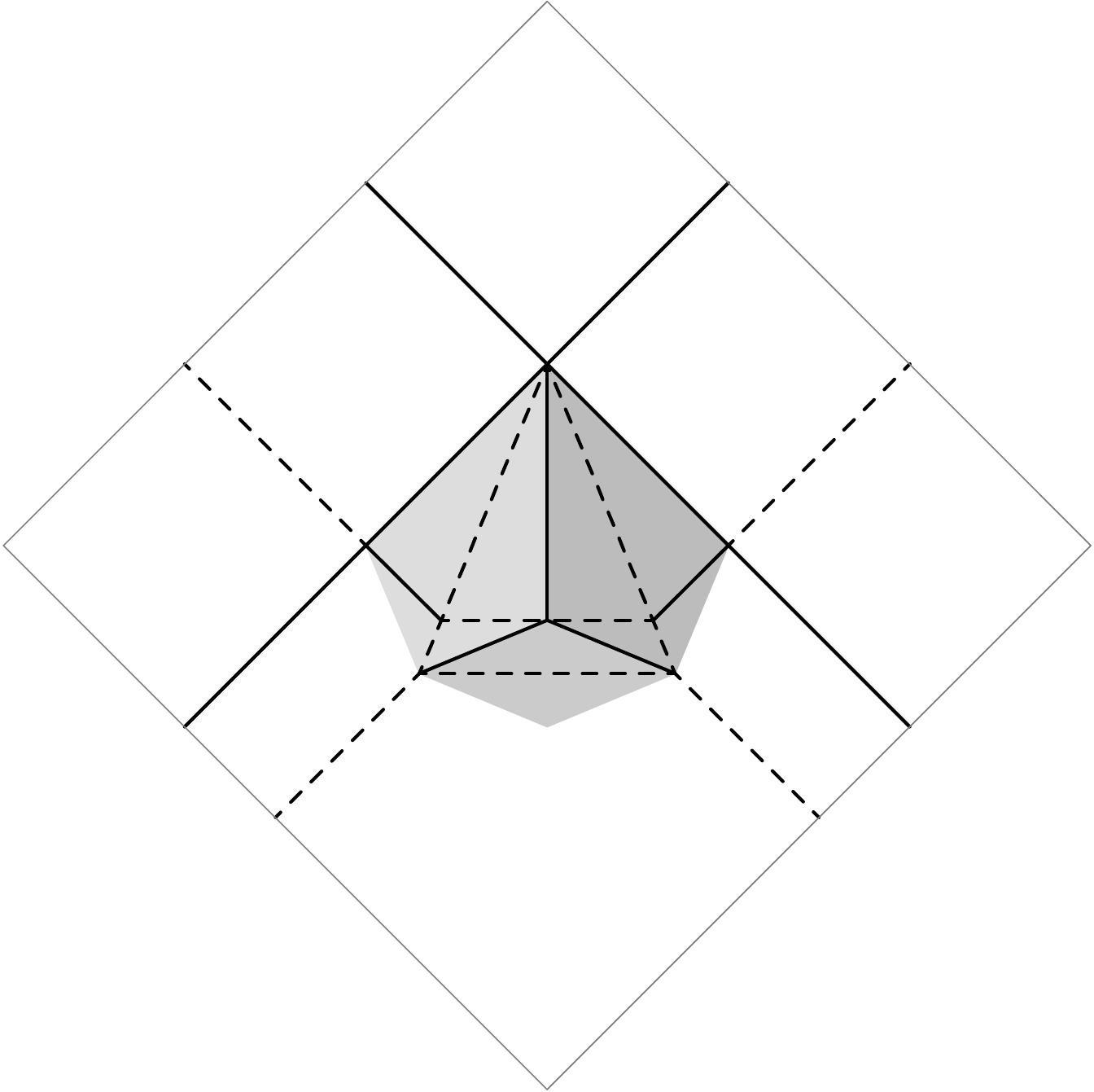}
        \end{center}
    \caption{CP of our new cube gadget}
    \label{fig:cube_new_CP}
      \end{minipage}
    \end{tabular}
  \end{center}
\end{figure}
\begin{figure}[htbp]
  \begin{center}
    \begin{tabular}{c}
\addtocounter{theorem}{1}
      \begin{minipage}{0.5\hsize}
        \begin{center}
          \includegraphics[width=\hsize]{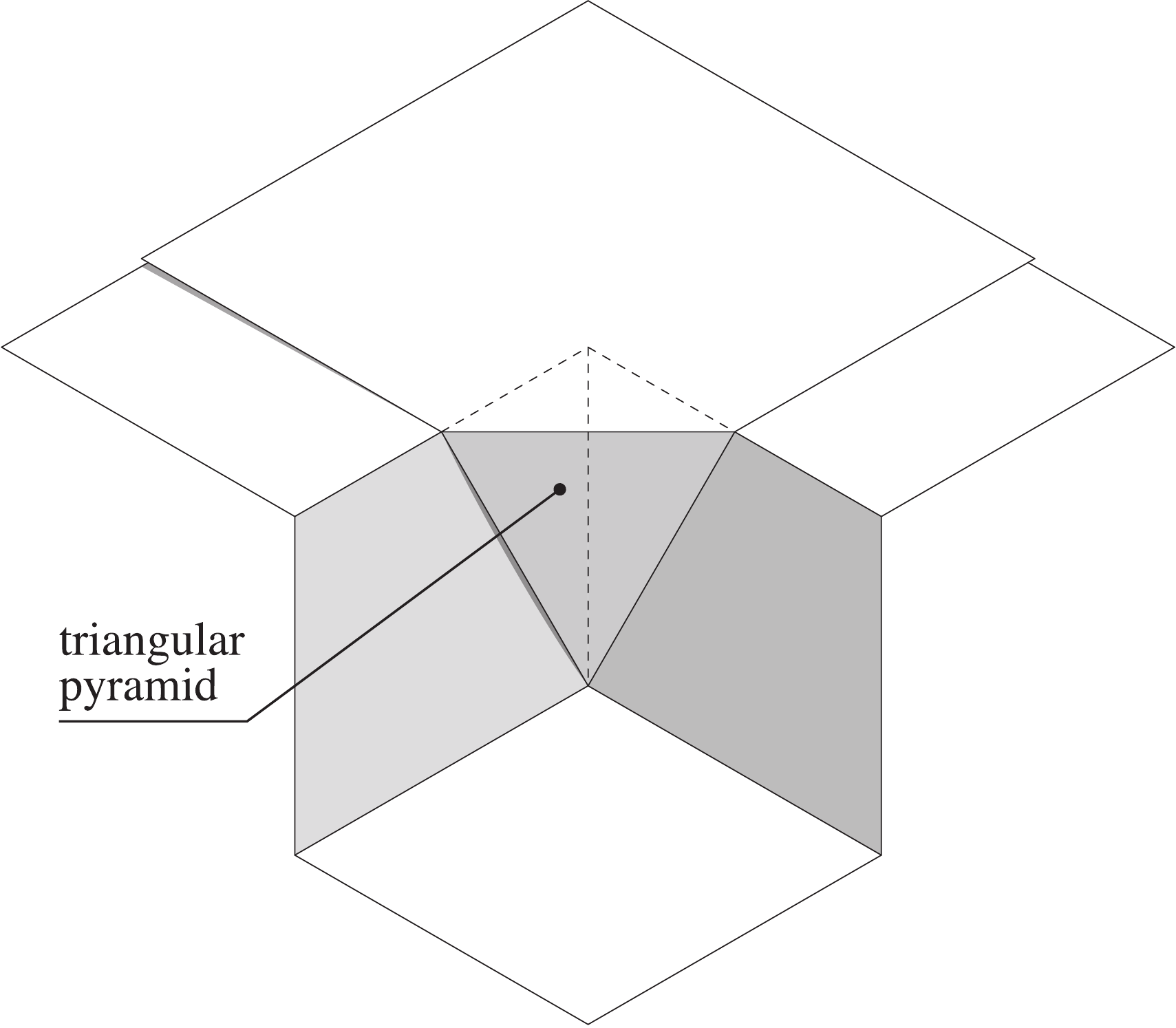}
        \end{center}
    \caption{Back view of the conventional cube gadgets}
    \label{fig:cube_conv_back}
      \end{minipage}
\addtocounter{theorem}{1}
      \begin{minipage}{0.5\hsize}
        \begin{center}
          \includegraphics[width=\hsize]{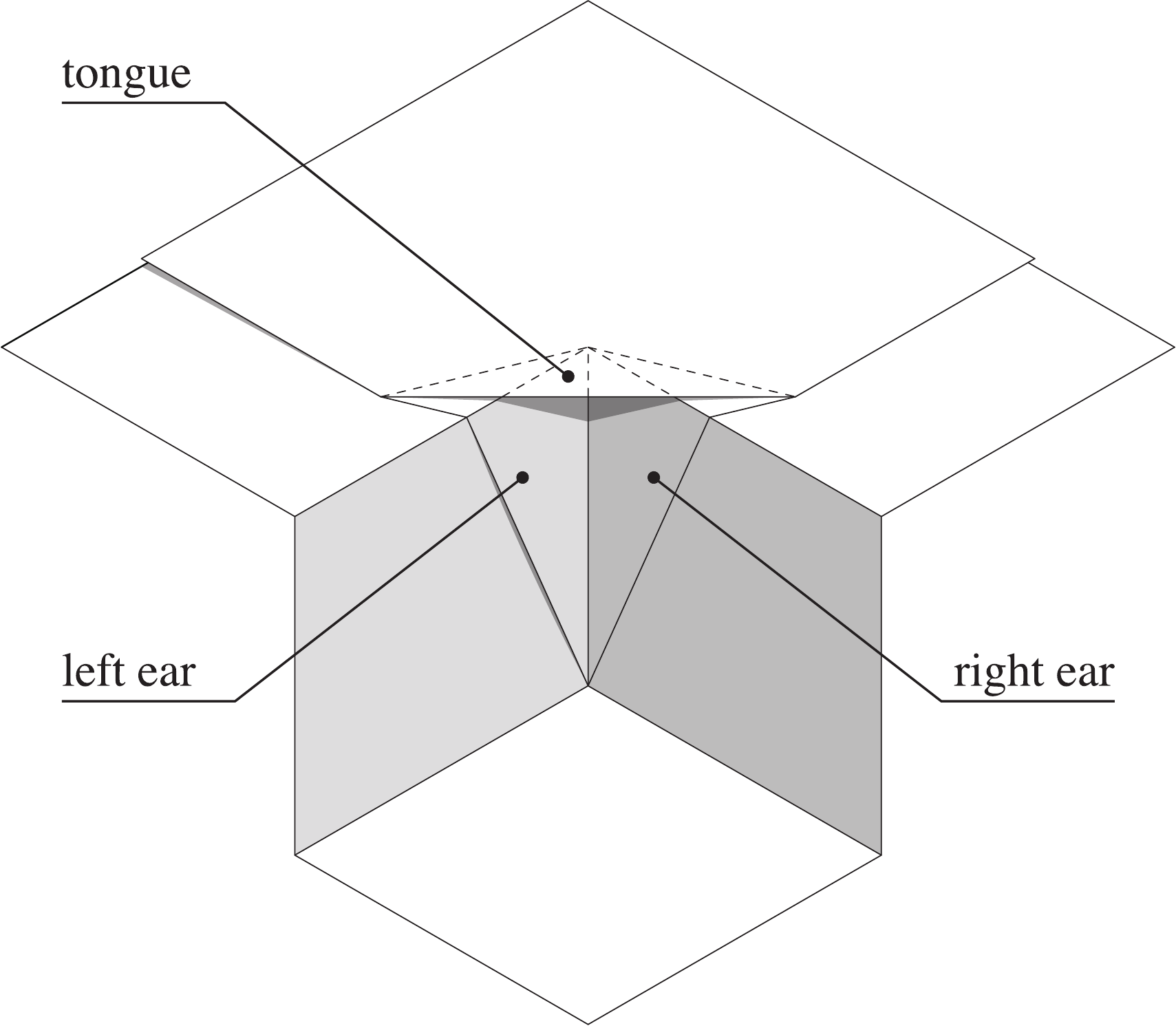}
        \end{center}
    \caption{Back view of our new cube gadget}
    \label{fig:cube_new_back}
      \end{minipage}
    \end{tabular}
  \end{center}
\end{figure}
\begin{figure}[htbp]
  \begin{center}
\addtocounter{theorem}{1}
          \includegraphics[width=0.5\hsize]{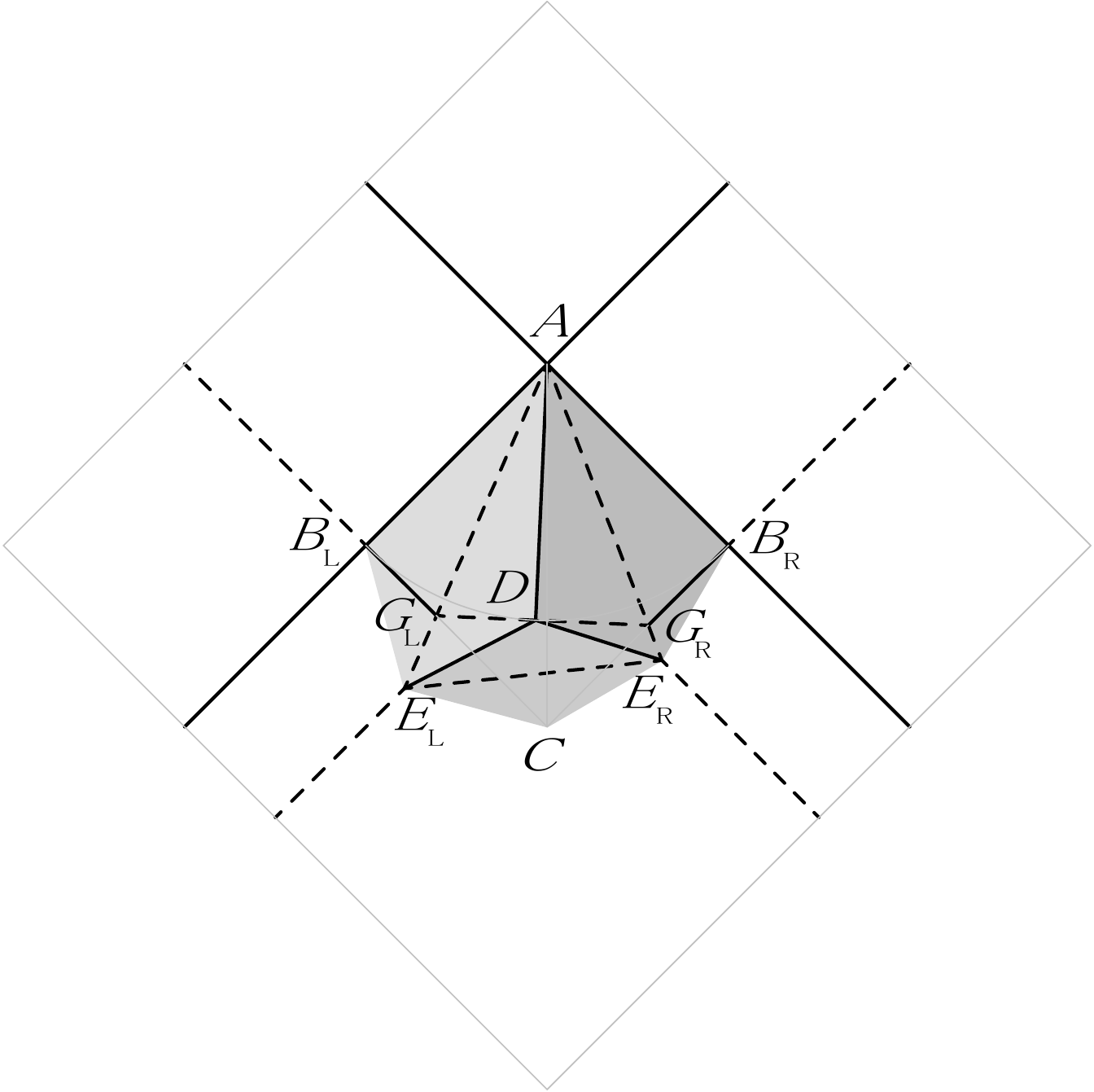}
    \caption{CP of a deformation of our new cube gadget}
    \label{fig:deformation_cube_new_CP}
\end{center}
\end{figure}

In this paper, we will consider a condition $\eqref{ineq:angle_ABE}$ for the construction of our improved $3$D gadget to be possible 
in terms of $\angle AB_\sigma E_\sigma$ for $\sigma =\Lt ,\Rt$ in Figure $\ref{fig:deformation_cube_new_CP}$.
However, since condition $\eqref{ineq:angle_ABE}$ cannot be checked apriori, we rewrite it in terms of $\phi_\sigma =\angle B_\sigma AD$
to obtain a condition $\eqref{ineq:bound_phi}$ which is geometrically easy to check and gives a construction of our improved gadgets
(Construction $\ref{const:new}$).
Then we further rewrite it in terms of $\psi_\sigma =\angle B_\sigma AC-\angle B_\sigma AD$
to obtain another condition $\eqref{ineq:bound_psi}$, which is a natural extension of the condition obtained in \cite{Doi}.
Also, we prove in Theorem $\ref{thm:zeta_L+R}$ that there always exists $\psi_\Lt$ and $\psi_\Rt$ 
which satisfy $\eqref{ineq:bound_psi}$ for $\sigma =\Lt ,\Rt$ without additional conditions.
Lastly, we obtain another equivalent condition $\eqref{ineq:range_epsilon}$ in terms of $\epsilon_\sigma =\angle E_\sigma DG_\sigma$ 
(its general definition is given by $\eqref{eq:epsilon}$), which gives an alternative construction of our gadgets (Construction $\ref{const:new_alt}$).

This paper is organized as follows. 
Section $\ref{sec:2}$ recalls the construction of the conventional pyramid-supported $3$D gadgets
and check that the construction is indeed possible under the given conditions.
Section $\ref{sec:3}$ gives the construction of our improved $3$D gadgets. 
To determine the range of $\phi_\sigma =\angle B_\sigma AD$ in the construction, we introduce \emph{critical angles}
and impose a condition which the critical angles must satisfy.
In Section $\ref{sec:4}$ we discuss the validity of the conditions given in the construction of the improved $3$D gadgets in Section $\ref{sec:3}$.
We also check the constructibility and the foldability of the crease pattern.
For this purpose, we give the constructibility condition in two ways (Theorems $\ref{thm:bound_phi}$ and $\ref{thm:bound_psi})$, 
where the latter gives a natural extension of \cite{Doi}.
Furthermore, we prove in two ways that the above condition we imposed on the critical angles in Section $\ref{sec:3}$ always holds,
and thus can be removed (Theorem $\ref{thm:zeta_L+R}$). 
In Section $\ref{sec:5}$ we recall the interference coefficients introduced in our previous paper, 
and calculate those of the improved $3$D gadgets compatible with a given conventional one, 
all of which are proved to be less than or at worst equal to those of the conventional one.
This result and Theorem $\ref{thm:zeta_L+R}$ together prove
the downward compatibility of our improved $3$D gadgets with the conventional ones (Theorem $\ref{thm:downward_compatibility}$).
In Section $\ref{sec:6}$ we give some useful gadgets for typical choices of $\phi_\sigma$ among infinite possible choices.
Also, we give an alternative construction of our $3$D gadgets specifying certain angles.
Then we study some examples of our improved $3$D gadgets.
Section $\ref{sec:7}$ gives the construction of flat-back $3$D gadgets which are used to repeat our improved gadgets and 
make the extrusion higher with the same interference distances.
Section $\ref{sec:8}$ gives our conclusion.

Details of the background of this research and our settings are explained in our previous paper, and we will not repeat them here.

\noindent {\bfseries Notation and terminology.} 
To keep consistency with the previous paper, we will use the same notation and terminology as possible as we can. 
One exception is the use of $\phi_\sigma$, which is defined to be 
$\phi_\sigma =\beta_\sigma +\gamma_\sigma /2-\pi /2$ in the previous paper, while $\phi_\sigma =\angle B_\sigma AD$ in the present paper.

We use subscript $\sigma$ for $\Lt$ and $\Rt$ which stand for `left' and `right' respectively.
Also, we use subscript $\sigma'$ to indicate the other side of $\sigma$, that is,
\begin{equation*}
\sigma' =
\begin{dcases}
\Rt&\text{if }\sigma =\Lt ,\\
\Lt&\text{if }\sigma =\Rt .
\end{dcases}
\end{equation*}

\section{The conventional pyramid-supported $3$D gadgets developed by Natan}\label{sec:2}
Before constructing our improved $3$D gadgets, let us recall the construction of the conventional pyramid-supported $3$D gadgets,
which was developed by Carlos Natan \cite{Natan} as a generalization of the conventional cube gadget shown in Figure $\ref{fig:cube_conv_CP}$.
We consider the following local model as in \cite{Doi}.
\begin{enumerate}[(A)]
\item The object we want to extrude in the middle of the paper (which we suppose to be in the $xy$-plane $H_0=\set{z=0}$) with a $3$D gadget 
is a part of a polyhedron such that its top face is bounded by two rays $j_\Lt$ and $j_\Rt$ starting from $A$ in $H_h=\set{z=h}$,
and its bottom face is bounded by two rays $k_\Lt$ and $k_\Rt$ with $k_\sigma$ parallel to $j_\sigma$ for $\sigma =\Lt ,\Rt$, 
starting from $B$ in $H_0=\set{z=0}$.
Suppose the inner angle $\alpha$ of the top face at $A$ (and so the inner angle of the bottom face at $B$) satisfies $0<\alpha <\pi$.
\item There are only simple pleats outside the extruded object.
\end{enumerate}

Let $\beta_\sigma =\angle BAj_\sigma =\pi -\angle ABk_\sigma$ for $\sigma =\Lt ,\Rt$ and $\gamma =2\pi -\alpha -\beta_\Lt -\beta_\Rt$.
Then we develop the top and side faces on the paper as in Figure $\ref{fig:development_conv}$, 
where $\ell_\Lt$ and $\ell_\Rt$ are the mountain folds of the outgoing pleats, and $A B_\Lt$ and $A B_\Rt$ assemble to form ridge $AB$.
The construction of the conventional pyramid-supported $3$D gadgets are given as follows.
\begin{figure}[htbp]
  \begin{center}
\addtocounter{theorem}{1}
          \includegraphics[width=0.75\hsize]{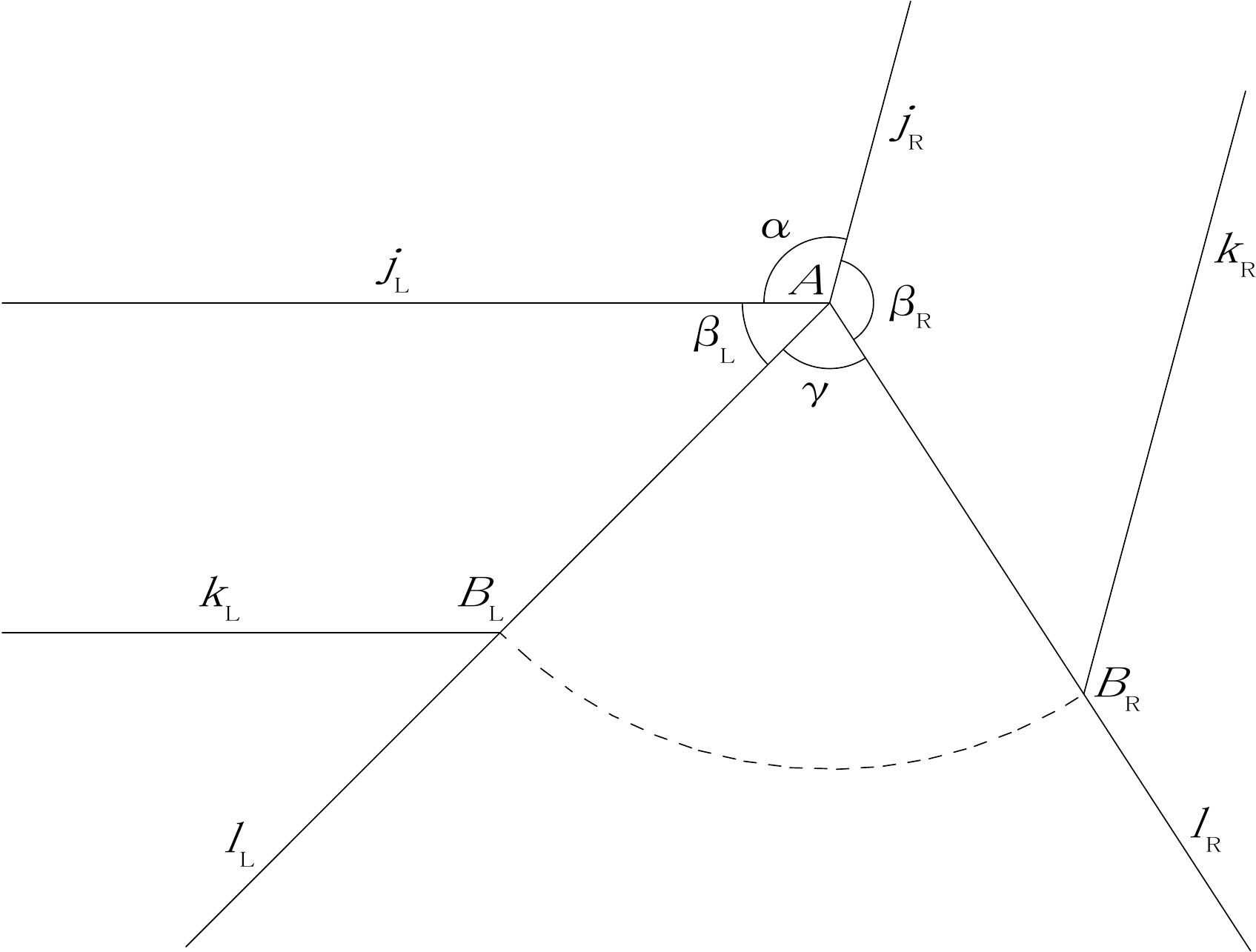}
    \caption{Development to which we apply a conventional $3$D gadget}
    \label{fig:development_conv}
\end{center}
\end{figure}
\begin{construction}\label{const:conv}\rm
Consider a development as shown in Figure $\ref{fig:development_conv}$, for which we require the following conditions.
(Here we use the term `development' instead of `net' to indicate that the piece of paper we consider includes not only a net of the extruded object, but also
prescribed creases and regions which are hidden behind after the folding.)
\begin{enumerate}[(i)]
\item $\alpha <\beta_\Lt + \beta_\Rt$, $\beta_\Lt <\alpha +\beta_\Rt$ and $\beta_\Rt <\alpha+ \beta_\Lt$.
\item $\alpha +\beta_\Lt +\beta_\Rt <2\pi$.
\item $\alpha +\beta_\Lt +\beta_\Rt >\pi$.
\end{enumerate}
Then the crease pattern of the pyramid-supported $3$D gadget is constructed as follows, where all procedures are done for both $\sigma =\Lt ,\Rt$.
\begin{enumerate}
\item Draw a perpendicular to $\ell_\sigma$ through $B_\sigma$ for both $\sigma =\Lt ,\Rt$, letting $C$ be the intersection point.
\item Draw a perpendicular bisector $m_\sigma$ to segment $B_\sigma C$.
\item Determine a point $D_\sigma$ on $m_\sigma$ such that $\angle AB_\sigma D_\sigma=\pi -\beta_\sigma$, and restrict $m_\sigma$ to a ray 
starting from $D_\sigma$ and going in the same direction as $\ell_\sigma$.
\item The desired crease pattern is shown as the solid lines in Figure $\ref{fig:conv_CP}$, 
and the assignment of mountain folds and valley folds is given in Table $\ref{tbl:assignment_conv}$.
\end{enumerate}
\end{construction}
\begin{figure}[htbp]
  \begin{center}
\addtocounter{theorem}{1}
          \includegraphics[width=0.75\hsize]{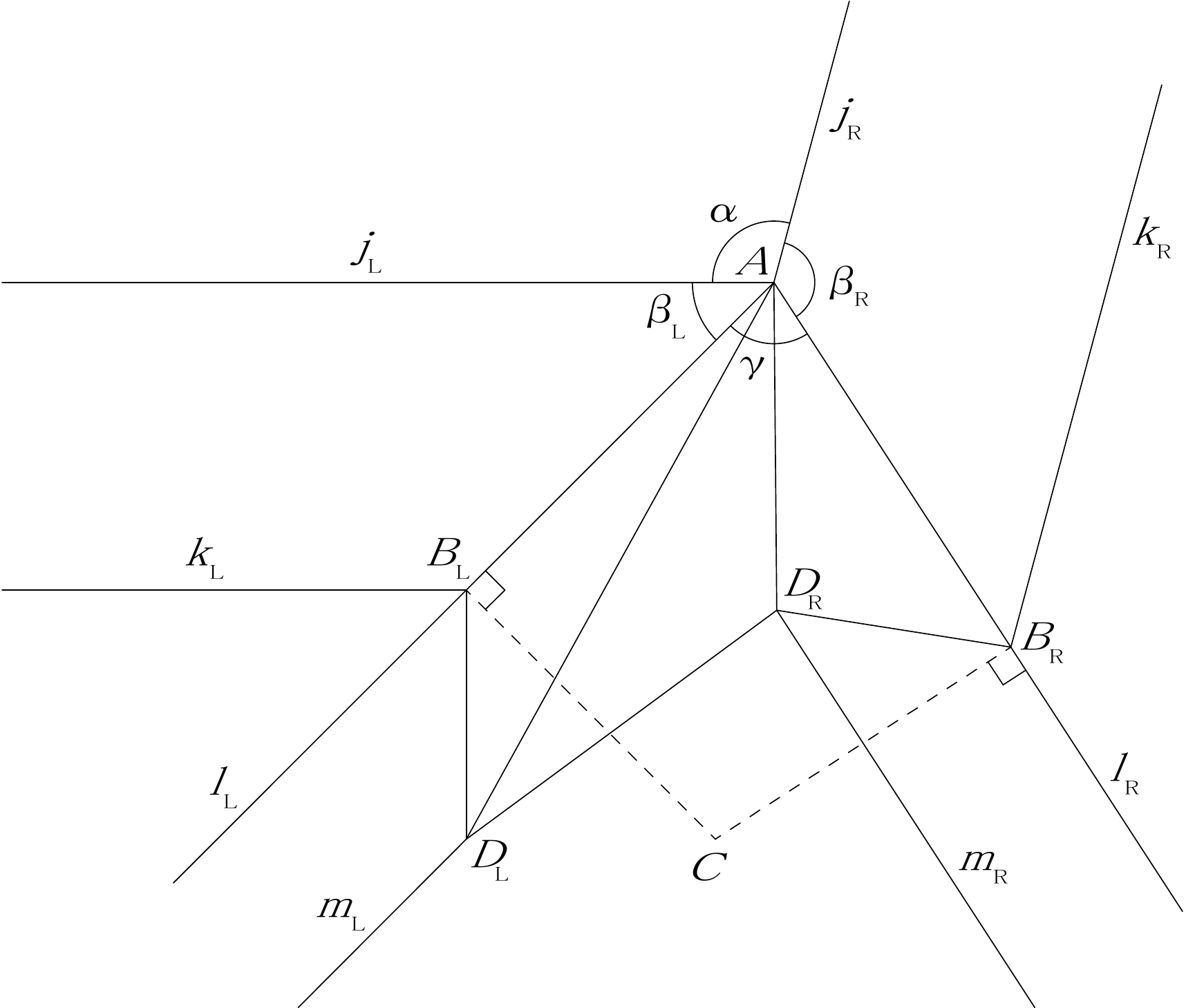}
    \caption{CP of the conventional $3$D gadget}
    \label{fig:conv_CP}
\end{center}
\end{figure}
\renewcommand{\arraystretch}{1.5}
\addtocounter{theorem}{1}
\begin{table}[h]
\begin{tabular}{c|c}
mountain folds&$j_\sigma ,\ell_\sigma ,AB_\sigma ,B_\sigma D_\sigma$\\ \hline
valley folds&$k_\sigma ,m_\sigma ,AD_\sigma ,D_\Lt D_\Rt$
\end{tabular}\vspace{0.5cm}
\caption{Assignment of mountain folds and valley folds to the conventional $3$D gadget}
\label{tbl:assignment_conv}
\end{table}

In the rest of this section, we check that 
Construction $\ref{const:conv}$ of the conventional pyramid-supported $3$D gadgets is possible under conditions $\mathrm{(i)}$--$\mathrm{(iii)}$.

From conditions $\mathrm{(i)}$ and $\mathrm{(ii)}$, we see that $\alpha <\pi$ and 
\begin{equation}\label{ineq:beta+gamma/2}
\beta_\sigma +\gamma /2<\pi\quad\text{for }\sigma =\Lt ,\Rt .
\end{equation}
Also, condition $\mathrm{(iii)}$ ensures that point $C$ exists inside $\angle B_\Lt AB_\Rt =\gamma$.

Let $P$ be the intersection point of the extensions of $m_\Lt$ and $m_\Rt$.
Then since $P$ is the excenter of $\triangle B_\Lt B_\Rt C$, $AP$ is a perpendicular bisector of $B_\Lt B_\Rt$ (see also Lemma $\ref{lem:angle_ABP}$).
Thus it follows from $\eqref{ineq:beta+gamma/2}$ that
\begin{equation*}
\angle AB_\sigma P=\gamma /2 <\pi -\beta_\sigma =\angle AB_\sigma D_\sigma\quad\text{for }\sigma =\Lt ,\Rt ,
\end{equation*}
which implies that points $D_\Lt$ and $D_\Rt$ in procedure $(3)$ exist and segment $AD_\Lt$ is located to the left of segment $AD_\Rt$.

Now we check that triangles $\triangle AB_\Lt D_\Lt ,\triangle AB_\Rt D_\Rt ,\triangle AD_\Lt D_\Rt$ and $\triangle CD_\Lt D_\Rt$ form a triangular pyramid.
As $B_\Lt$ and $B_\Rt$ swing to $C$, $AB_\Lt ,B_\Lt D_\Lt$ and $B_\Rt D_\Rt$ glue to $AB_\Rt ,CD_\Lt$ and $CD_\Rt$ of the same lengths respectively.
Also, $\angle AB_\Lt D_\Lt ,\angle AB_\Rt D_\Rt$ and $\angle D_\Lt CD_\Rt$ form a solid angle around vertex $C$ because we have
\begin{equation*}
\angle AB_\Lt D_\Lt +\angle AB_\Rt D_\Rt +\angle D_\Lt CD_\Rt =(\pi -\beta_\Lt )+(\pi -\beta_\Rt )+\alpha =2\pi -(\beta_\Lt +\beta_\Rt -\alpha )<2\pi
\end{equation*}
by condition $\mathrm{(i)}$. 
The sums of angles which assembles at other vertices $A, D_\Lt$ and $D_\Rt$ are obviously less than $2\pi$ respectively.
Hence hexagon $AB_\Lt D_\Lt CD_\Rt B_\Rt$ with creases $AD_\Lt ,AD_\Rt$ and $D_\Lt D_\Rt$ is a net of triangular pyramid $ACD_\Lt D_\Rt$,
which fits and supports the side faces $j_\sigma AB_\sigma k_\sigma$ for $\sigma =\Lt ,\Rt$ from inside.
Flat-foldability of the region below polygonal chain $k_\Lt B_\Lt D_\Lt D_\Rt B_\Rt k_\Rt$ is obvious.
Hence Constuction $\ref{const:conv}$ is possible under conditions $\mathrm{(i)}$--$\mathrm{(iii)}$.

Note that the existence of the triangular pyramid $ACD_\Lt D_\Rt$ implies that
\begin{equation}\label{ineq:pyramid}
\angle B_\Lt AD_\Lt +\angle B_\Rt AD_\Rt >\angle D_\Lt AD_\Rt ,
\end{equation}
which will be crucial in constucting our improved $3$D gadgets.

\section{Construction of improved flat-back $3$D gadgets}\label{sec:3}
We begin with the following definition, which will be needed in the construction of our improved flat-back $3$D gadgets.
\begin{figure}[htbp]
  \begin{center}
\addtocounter{theorem}{1}
          \includegraphics[width=0.75\hsize]{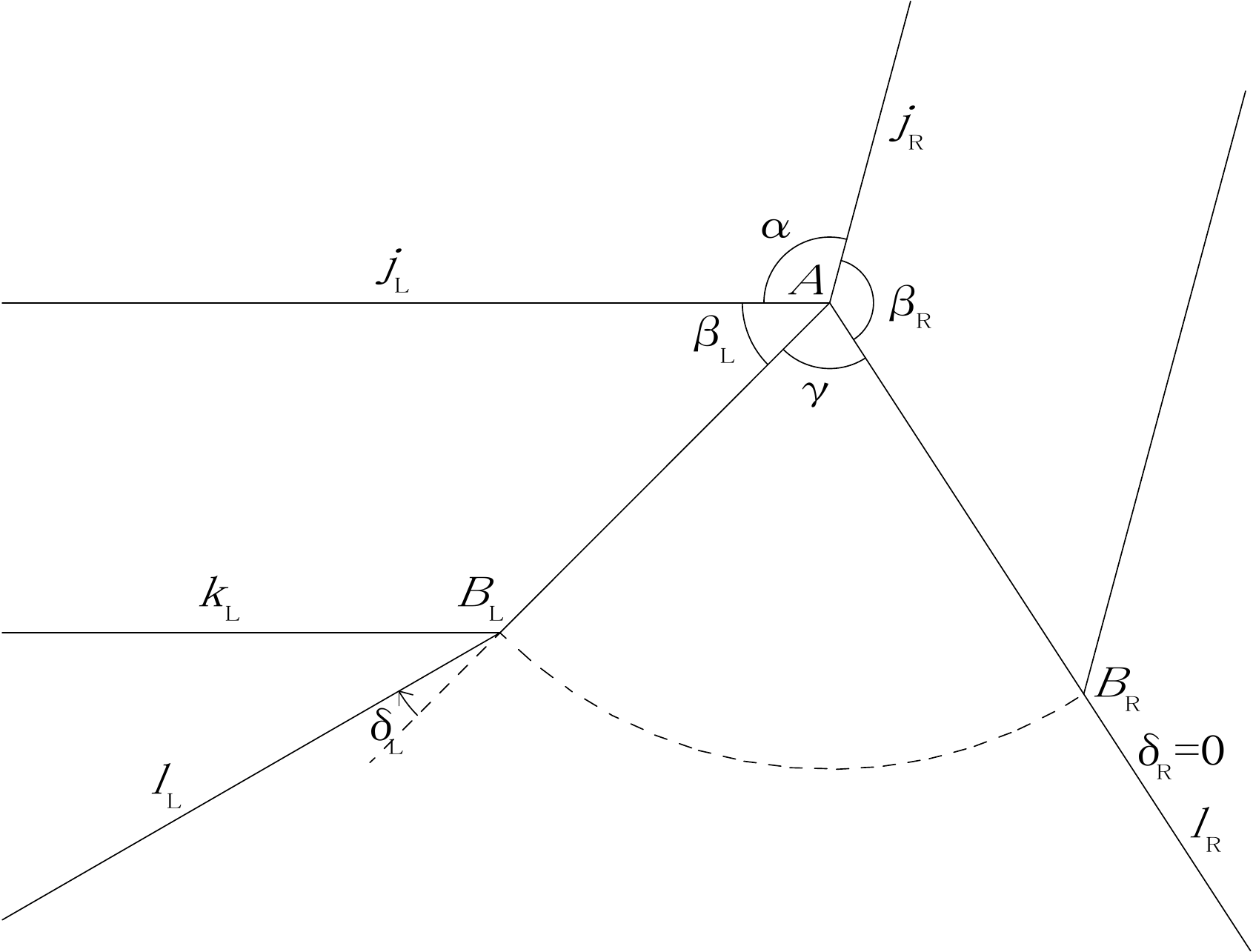}
    \caption{Development to which we apply our flat-back $3$D gadget}
    \label{fig:development_new}
\end{center}
\end{figure}
\begin{definition}\label{def:zeta}\rm
Consider a development as in Figure $\ref{fig:development_new}$, which is different from Figure $\ref{fig:development_conv}$ 
in that we allow changes $\delta_\sigma >0$ of angles of the outgoing pleats.
We require the following conditions.
\begin{enumerate}[(i)]
\item $\alpha <\beta_\Lt + \beta_\Rt$, $\beta_\Lt <\alpha +\beta_\Rt$ and $\beta_\Rt <\alpha+ \beta_\Lt$.
\item $\alpha +\beta_\Lt +\beta_\Rt <2\pi$.
\item[(iii.a)] $\delta_\Lt ,\delta_\Rt\geqslant 0$, where we take clockwise (resp. counterclockwise) direction as positive for $\sigma =\Lt$ (resp. $\sigma=\Rt$).
\item[(iii.b)] $\delta_\sigma <\beta_\sigma$ and $\delta_\sigma <\pi /2$ for $\sigma =\Lt ,\Rt$. 
\item[(iii.c)] $\alpha +\beta_\Lt +\beta_\Rt -\delta_\Lt -\delta_\Rt >\pi$, or equivalently, $\gamma +\delta_\Lt +\delta_\Rt <\pi$.
\setcounter{enumi}{3}
\end{enumerate}
We define a \emph{critical angle} $\zeta_\sigma$ for $\sigma =\Lt ,\Rt$ with $0<\zeta_\sigma\leqslant\gamma /2$ by the following construction,
where all procedures are done for both $\sigma =\Lt ,\Rt$.
\begin{enumerate}[(1)]
\item Draw a perpendicular to $\ell_\sigma$ through $B_\sigma$ for both $\sigma =\Lt ,\Rt$, letting $C$ be the intersection point.
\item Draw a perpendicular bisector $m_\sigma$ to segment $B_\sigma C$.
\item Let $P$ be the intersection point of $m_\Lt$ and $m_\Rt$, and restrict $m_\sigma$ to a ray starting from $P$ and going in the same direction as $\ell_\sigma$.
Since $P$ is the excenter of $\triangle C B_\Lt B_\Rt$, 
segment $AP$ is a perpendicular bisector of segment $B_\Lt B_\Rt$ and also a bisector of $\angle B_\Lt AB_\Rt$.
\item Draw a ray $n_\sigma$ starting from $B_\sigma$ and going inside $\angle B_\Lt AB_\Rt$ so that 
$\angle AB_\sigma n_\sigma =\pi -\beta_\sigma +\delta_\sigma$.
\item Let $Q_\sigma$ be the intersection point of ray $n_\sigma$ and polygonal chain $APm_\sigma$.
\item Then we define $\zeta_\sigma$ by
\begin{equation}\label{eq:zeta}
\zeta_\sigma =\angle B_\sigma AQ_\sigma .
\end{equation}
\end{enumerate}
\end{definition}
\begin{figure}[htbp]
  \begin{center}
\addtocounter{theorem}{1}
          \includegraphics[width=0.75\hsize]{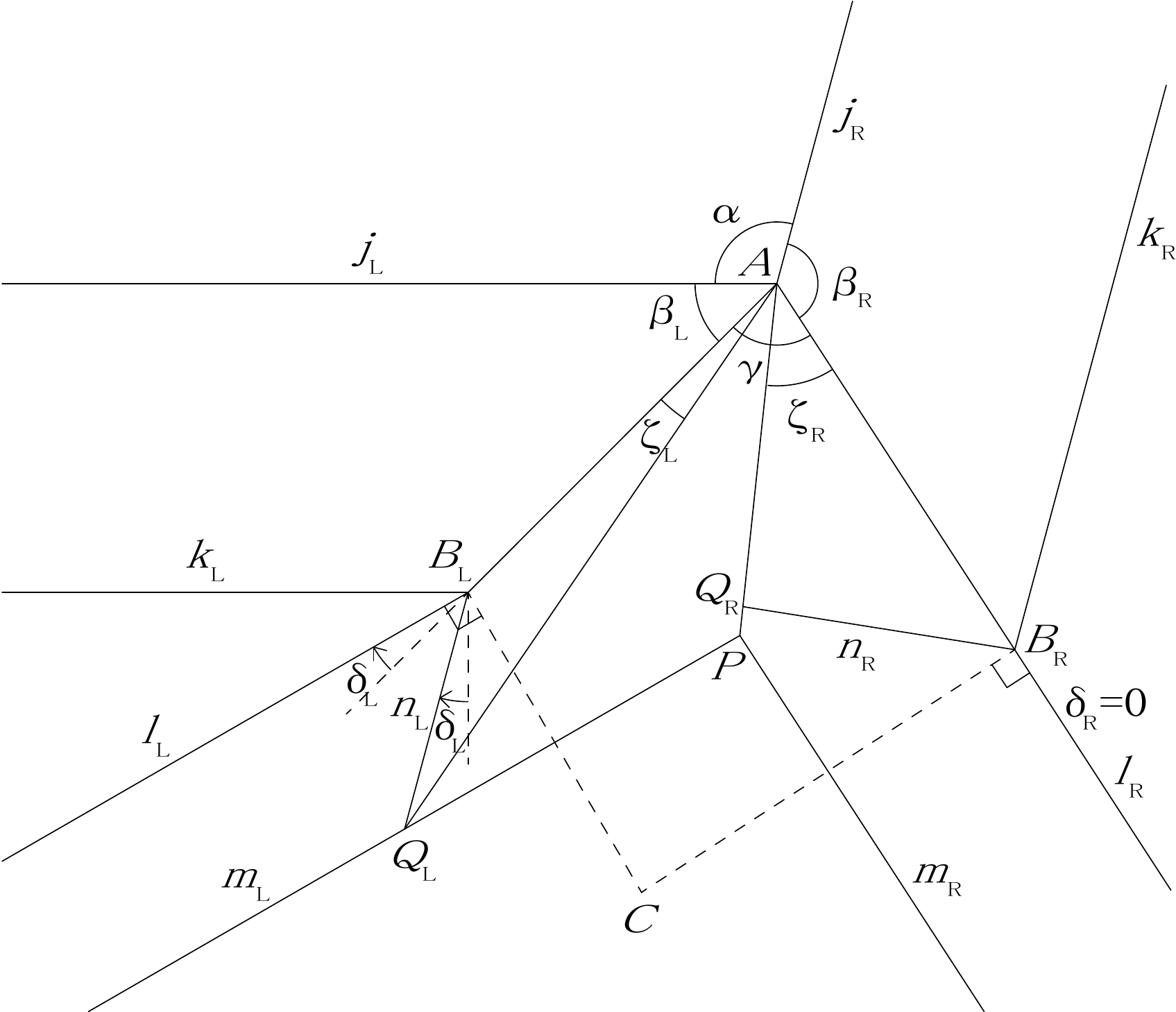}
    \caption{Construction of the critical angles $\zeta_\Lt$ and $\zeta_\Rt$}
    \label{fig:const_zeta}
\end{center}
\end{figure}
\begin{lemma}\label{lem:pyramid}
Suppose $\delta_\Lt =\delta_\Rt =0$ in Definition $\ref{def:zeta}$. 
Then $Q_\sigma$ coincides with point $D_\sigma$ in Construction $\ref{const:conv}$.
Also, we have $\zeta_\Lt +\zeta_\Rt >\gamma /2$.
\end{lemma}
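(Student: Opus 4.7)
The plan is to split the proof into two parts: first, identifying $Q_\sigma$ with $D_\sigma$ once $\delta_\sigma=0$; second, deducing the inequality from the pyramid inequality $\eqref{ineq:pyramid}$ obtained at the end of Section \ref{sec:2}.

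For the identification, I would first observe that setting $\delta_\sigma=0$ in procedure $(4)$ of Definition \ref{def:zeta} gives $\angle AB_\sigma n_\sigma=\pi-\beta_\sigma$, which matches exactly the defining angle of $D_\sigma$ in procedure $(3)$ of Construction \ref{const:conv}. So ray $n_\sigma$ passes through $D_\sigma$, and since $D_\sigma\in m_\sigma$ by construction, $D_\sigma$ is the candidate intersection. What still needs checking is that $D_\sigma$ actually lies on the restricted portion of the polygonal chain $APm_\sigma$: namely, on the ray of $m_\sigma$ starting at $P$ in the direction of $\ell_\sigma$, and that ray $n_\sigma$ does not meet segment $AP$ before reaching $D_\sigma$. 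Both facts should follow from the validity discussion of Construction \ref{const:conv}: the estimate $\angle AB_\sigma P=\gamma/2<\pi-\beta_\sigma$ together with $P,D_\sigma\in m_\sigma$ forces $D_\sigma$ to lie on the $\ell_\sigma$-side of $P$, while the statement there that $AD_\sigma$ lies on the $\sigma$-side of the bisector $AP$ prevents $n_\sigma$ from hitting $AP$ first. I expect this side-of-line bookkeeping to be the one delicate part of the argument, though essentially a restatement of what was already verified in Section \ref{sec:2}.

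For the inequality, once $Q_\sigma=D_\sigma$ is in hand, $\eqref{eq:zeta}$ yields $\zeta_\sigma=\angle B_\sigma AD_\sigma$. Since $AD_\Lt$ and $AD_\Rt$ both lie inside $\angle B_\Lt AB_\Rt=\gamma$ with $AD_\Lt$ to the left of $AD_\Rt$, the angle $\gamma$ decomposes as
\[
\gamma=\angle B_\Lt AD_\Lt+\angle D_\Lt AD_\Rt+\angle D_\Rt AB_\Rt=\zeta_\Lt+\zeta_\Rt+\angle D_\Lt AD_\Rt.
\]
Substituting this into the pyramid inequality $\eqref{ineq:pyramid}$, which now reads $\zeta_\Lt+\zeta_\Rt>\angle D_\Lt AD_\Rt$, gives $\zeta_\Lt+\zeta_\Rt>\gamma-(\zeta_\Lt+\zeta_\Rt)$, i.e., $\zeta_\Lt+\zeta_\Rt>\gamma/2$, as required.
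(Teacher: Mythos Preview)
Your proposal is correct and follows essentially the same approach as the paper's proof, which simply states that the first part is ``obvious from the constructions of $Q_\sigma$ and $D_\sigma$'' and that the second part ``follows immediately from $\eqref{ineq:pyramid}$.'' You have merely made explicit the side-of-line verifications for the identification $Q_\sigma=D_\sigma$ and the angle decomposition $\gamma=\zeta_\Lt+\zeta_\Rt+\angle D_\Lt AD_\Rt$ that the paper leaves implicit.
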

\begin{proof}
The first part is obvious from the constructions of $Q_\sigma$ and $D_\sigma$. 
The second part follows immediately from $\eqref{ineq:pyramid}$.
\end{proof}
\begin{lemma}\label{lem:angle_ABP}
In Definition $\ref{def:zeta}$ we have
\begin{equation}\label{eq:angles_ABP_PBl}
\begin{aligned}
\angle AB_\Lt P&=\angle AB_\Rt P=\gamma /2+\delta_\Lt +\delta_\Rt ,\\
\angle PB_\sigma \ell_\sigma&=\pi -\gamma /2-\delta_{\sigma'} .
\end{aligned}
\end{equation}
\end{lemma}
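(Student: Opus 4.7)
The plan is a careful angle-chase, using two structural facts from the construction. First, $m_\sigma$ is the perpendicular bisector of $B_\sigma C$ and so perpendicular to $B_\sigma C$, while $\ell_\sigma$ is perpendicular to $B_\sigma C$ by the very definition of $C$ in procedure $(1)$; hence $m_\sigma\parallel\ell_\sigma$. Second, $P\in m_\Lt\cap m_\Rt$ yields $|PB_\Lt|=|PC|=|PB_\Rt|$, and together with $|AB_\Lt|=|AB_\Rt|$ (both equal the ridge length $|AB|$) this places both $A$ and $P$ on the perpendicular bisector of $B_\Lt B_\Rt$. Thus line $AP$ is that bisector and is also the bisector of $\angle B_\Lt AB_\Rt=\gamma$, so $\angle B_\sigma AP=\gamma/2$. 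By side-side-side congruence, $\triangle APB_\Lt\cong\triangle APB_\Rt$, which immediately gives the equality $\angle AB_\Lt P=\angle AB_\Rt P$ claimed in the first line of $\eqref{eq:angles_ABP_PBl}$.

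To pin down the common value, I would use that $P$ is the circumcenter of $\triangle B_\Lt CB_\Rt$. Angle-chasing inside that triangle expresses $\angle B_\Lt CB_\Rt$ via the angles $\angle CB_\sigma B_{\sigma'}$; these in turn are determined by the directions of $\ell_\sigma$, and through the interpretation of $\delta_\sigma$ as the rotational change of the outgoing pleats (Definition~\ref{def:zeta}) the computation reduces to an expression in $\gamma$ and $\delta_\sigma$. The inscribed angle theorem then converts $\angle B_\Lt CB_\Rt$ into $\angle B_\Lt PB_\Rt$, which must agree with the value $2(\pi-\gamma/2-\angle AB_\sigma P)$ obtained by summing angles in the isoceles triangles $\triangle APB_\sigma$. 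Equating these two expressions for $\angle B_\Lt PB_\Rt$ yields $\angle AB_\sigma P=\gamma/2+\delta_\Lt+\delta_\Rt$. For the second equation, I use $m_\sigma\parallel\ell_\sigma$ together with the right angle $\angle CB_\sigma\ell_\sigma=\pi/2$ to write $\angle PB_\sigma\ell_\sigma$ in terms of $\angle PB_\sigma C$; the latter is then evaluated from the isoceles triangle $\triangle PB_\sigma C$ (using $|PB_\sigma|=|PC|$) together with the value of $\angle PCB_\sigma$ already extracted from the circumcircle analysis, and rearrangement produces $\pi-\gamma/2-\delta_{\sigma'}$.

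The hard part will be the bookkeeping of angular positions of $A$, $P$, $C$, and $\ell_\sigma$ around $B_\sigma$, since several intermediate identities carry a sign that depends on which side of various lines the points lie (for instance, whether the inscribed angle theorem contributes $2\angle B_\Lt CB_\Rt$ or $2\pi-2\angle B_\Lt CB_\Rt$, and whether $\angle PB_\sigma\ell_\sigma=\pi/2+\angle PB_\sigma C$ or $\pi/2-\angle PB_\sigma C$). Conditions $\mathrm{(i)}$--$\mathrm{(iii.c)}$ of Definition~\ref{def:zeta}, in particular $\gamma+\delta_\Lt+\delta_\Rt<\pi$ and $\delta_\sigma<\beta_\sigma$, are precisely what keep the configuration in the single regime where all these sign choices resolve in the way needed for the stated formulas.
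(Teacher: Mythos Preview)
Your outline is correct and follows essentially the same route as the paper. Both arguments rest on the fact that $P$, lying on the perpendicular bisectors $m_\Lt$ and $m_\Rt$, is equidistant from $B_\Lt,C,B_\Rt$ (the paper calls $P$ the ``excenter'' of $\triangle CB_\Lt B_\Rt$, but your identification as the \emph{circumcenter} is the right terminology), and both then chase angles through $\triangle B_\Lt B_\Rt C$. Your inscribed angle step is exactly the paper's relation $2(\nu_\Lt+\nu_\Rt-\mu)=\pi$: with $\nu_\sigma=\angle B_\sigma B_{\sigma'}C$ and $\mu=\angle PB_\sigma B_{\sigma'}$, the central angle $\angle B_\Lt PB_\Rt=\pi-2\mu$ equals $2\angle B_\Lt CB_\Rt=2(\pi-\nu_\Lt-\nu_\Rt)$, which rearranges to the paper's identity. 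The paper then computes $\nu_\sigma=\gamma/2+\delta_\sigma$ directly from $B_\sigma C\perp\ell_\sigma$ and $\angle AB_\sigma B_{\sigma'}=(\pi-\gamma)/2$, which is the concrete version of your phrase ``the computation reduces to an expression in $\gamma$ and $\delta_\sigma$.''

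For the second identity your plan via $\triangle PB_\sigma C$ works but is more laborious than necessary. Once $\angle AB_\sigma P=\gamma/2+\delta_\Lt+\delta_\Rt$ is known, the paper simply observes that on the inner side of $B_\sigma$ one has $\angle AB_\sigma\ell_\sigma=\pi+\delta_\sigma$ (since $\angle AB_\sigma k_\sigma=\pi-\beta_\sigma$ and $\angle k_\sigma B_\sigma\ell_\sigma=\beta_\sigma-\delta_\sigma$ on the outer side), so $\angle PB_\sigma\ell_\sigma=(\pi+\delta_\sigma)-(\gamma/2+\delta_\Lt+\delta_\Rt)=\pi-\gamma/2-\delta_{\sigma'}$ in one line.
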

\begin{proof}
Define angles $\mu$ and $\nu_\sigma$ by
\begin{equation*}
\mu =\angle AB_\Lt P-\angle AB_\Lt B_\Rt =\angle AB_\Rt P-\angle AB_\Rt B_\Lt ,\quad\nu_\sigma =\angle B_\sigma B_{\sigma'}C.
\end{equation*}
Then we see from the sum of the inner angles of $\triangle B_\Lt B_\Rt C$ with excenter $P$ that
\begin{equation}\label{eq:mu_nu}
2(\nu_\Lt +\nu_\Rt -\mu )=\pi .
\end{equation}
Also, we have
\begin{equation}\label{eq:nu_delta}
\angle AB_\sigma B_{\sigma'}-\delta_\sigma +\nu_\sigma =\pi /2,
\end{equation}
and thus putting $\angle AB_\sigma B_{\sigma'}=(\pi -\gamma )/2$ into $\eqref{eq:nu_delta}$ gives that
\begin{equation}\label{eq:nu}
\nu_\sigma =\frac{\gamma}{2}+\delta_\sigma .
\end{equation}
Combining $\eqref{eq:mu_nu}$ and $\eqref{eq:nu}$, we get 
\begin{equation*}
\mu =\gamma +\delta_\Lt +\delta_\Rt -\frac{\pi}{2},
\end{equation*}
so that we have consequently 
\begin{equation*}
\angle AB_\sigma P=\angle AB_\sigma B_{\sigma'}+\mu =\frac{\pi -\gamma}{2}+\left(\gamma +\delta_\Lt +\delta_\Rt -\frac{\pi}{2}\right)
=\frac{\gamma}{2}+\delta_\Lt +\delta_\Rt
\end{equation*}
as desired. 
The second expression of $\eqref{eq:angles_ABP_PBl}$ is then immediate.
This completes the proof of Lemma $\ref{lem:angle_ABP}$.
\end{proof}
\begin{proposition}\label{prop:zeta}
The critical angle $\zeta_\sigma$ in Definition $\ref{def:zeta}$ is given by
\begin{equation}\label{eq:zeta_alt}
\zeta_\sigma =\begin{dcases}
\tan^{-1}\left(\frac{1-d_\sigma /c'}{1/c+1 /c'+(1+d_\sigma /c)/b_\sigma}\right)
&\text{if }\beta_\sigma +\gamma /2+\delta_{\sigma'}\leqslant\pi ,\\
\gamma /2&\text{if }\beta_\sigma +\gamma /2+\delta_{\sigma'}\geqslant\pi ,
\end{dcases}
\end{equation}
where we set
\begin{equation*}
b_\sigma=\tan (\beta_\sigma -\delta_\sigma ),\quad c=\tan (\gamma /2),\quad c'=\tan (\gamma /2+\delta_\Lt +\delta_\Rt ),
\quad\text{and }d_\sigma =\tan\delta_\sigma .
\end{equation*}
In particular, if $\delta_\Lt =\delta_\Rt =0$, then we have $c=c'=\tan (\gamma /2)$ and $d_\Lt =d_\Rt =0$, so that
\begin{equation}\label{eq:zeta_delta=0}
\zeta_\sigma =\tan^{-1}\left(\frac{1}{2/\tan (\gamma /2)+1/\tan\beta_\sigma}\right) <\frac{\gamma}{2}.
\end{equation}
\end{proposition}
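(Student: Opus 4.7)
The plan is to split into two geometric cases according to where the ray $n_\sigma$ meets the polygonal chain $APm_\sigma$, and in the main case to derive $\tan\zeta_\sigma$ via the sine rule together with a projection onto $m_\sigma$. The main obstacle will be the trigonometric identity work needed to recast the resulting sine-based expression in the symmetric tangent form of $\eqref{eq:zeta_alt}$.

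By Lemma $\ref{lem:angle_ABP}$ we have $\angle AB_\sigma P=\gamma /2+\delta_\Lt +\delta_\Rt$, while by construction $\angle AB_\sigma n_\sigma =\pi -\beta_\sigma +\delta_\sigma$. Comparing these shows that $n_\sigma$ meets the segment $AP$ (possibly at $P$) precisely when $\pi -\beta_\sigma +\delta_\sigma \leqslant \gamma /2+\delta_\Lt +\delta_\Rt$, i.e.\ when $\beta_\sigma +\gamma /2+\delta_{\sigma'}\geqslant \pi$; in that case $Q_\sigma$ lies on the bisector $AP$ of $\angle B_\Lt AB_\Rt$, so $\zeta_\sigma =\gamma /2$, which is the second branch of $\eqref{eq:zeta_alt}$. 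Otherwise $Q_\sigma$ lies on the ray $m_\sigma$ strictly beyond $P$, and we proceed to compute it.

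For this main case, applying the sine rule to $\triangle AB_\sigma Q_\sigma$, whose angles at $A,B_\sigma ,Q_\sigma$ are $\zeta_\sigma$, $\pi -\beta_\sigma +\delta_\sigma$, and $\beta_\sigma -\delta_\sigma -\zeta_\sigma$ respectively, gives
\[ \frac{|B_\sigma Q_\sigma|}{\sin\zeta_\sigma}=\frac{|AB_\sigma|}{\sin (\beta_\sigma -\delta_\sigma -\zeta_\sigma)}. \]
A second expression for $|B_\sigma Q_\sigma|$ comes from $m_\sigma \parallel \ell_\sigma$: if $h_\sigma$ denotes the perpendicular distance from $B_\sigma$ to $m_\sigma$ and $\theta$ the angle between ray $B_\sigma Q_\sigma$ and the line $m_\sigma$, then $|B_\sigma Q_\sigma|\sin\theta =h_\sigma$. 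Adding the two angles in Lemma $\ref{lem:angle_ABP}$ shows that the directed angle from $B_\sigma A$ to $B_\sigma \ell_\sigma$ going through $B_\sigma P$ is $\pi +\delta_\sigma$, while $B_\sigma Q_\sigma$ lies $\pi -\beta_\sigma +\delta_\sigma$ past $B_\sigma A$ on the same side; the remaining gap is $\beta_\sigma$, so $\theta =\beta_\sigma$. The sine rule applied to $\triangle AB_\sigma P$ yields $|B_\sigma P|=|AB_\sigma|\sin (\gamma /2)/\sin (\gamma +\delta_\Lt +\delta_\Rt)$, and then Lemma $\ref{lem:angle_ABP}$ gives $h_\sigma =|B_\sigma P|\sin (\gamma /2+\delta_{\sigma'})$. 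Equating the two expressions for $|B_\sigma Q_\sigma|$ and expanding $\sin (\beta_\sigma -\delta_\sigma -\zeta_\sigma)$ leads to a closed-form rational expression for $\tan\zeta_\sigma$ in terms of sines.

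The final step, which is the main algebraic obstacle, is to transform this sine-based formula into the tangent form of $\eqref{eq:zeta_alt}$. The key is three identities obtained from the sum and difference formulas: (i) $\sin (\gamma +\delta_\Lt +\delta_\Rt)=\sin (\gamma /2)\sin (\gamma /2+\delta_\Lt +\delta_\Rt)(1/c+1/c')$; (ii) $\sin (\gamma /2+\delta_{\sigma'})=\cos\delta_\sigma \cos (\gamma /2+\delta_\Lt +\delta_\Rt)\cdot c'(1-d_\sigma /c')$; and (iii) $\sin\beta_\sigma =\sin (\beta_\sigma -\delta_\sigma)\cos\delta_\sigma (1+d_\sigma /b_\sigma)$. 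Substituting these allows one to cancel the common factor $c'\cos (\gamma /2+\delta_\Lt +\delta_\Rt)$ and then divide numerator and denominator by $\sin (\beta_\sigma -\delta_\sigma)\cos\delta_\sigma$; the numerator collapses to $1-d_\sigma /c'$ and the denominator to $1/c+1/c'+(1+d_\sigma /c)/b_\sigma$, which is the first branch of $\eqref{eq:zeta_alt}$. Setting $\delta_\Lt =\delta_\Rt =0$ recovers $\eqref{eq:zeta_delta=0}$ immediately, and the strict inequality $\zeta_\sigma <\gamma /2$ there follows from clearing denominators in $\tan\zeta_\sigma <c$, which reduces to the manifest inequality $(c+d_\sigma)(1/b_\sigma +1/c')>0$.
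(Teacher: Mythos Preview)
Your proof is correct and complete, but takes a genuinely different route from the paper's. The paper proceeds by introducing Cartesian coordinates with $A=(0,0)$ and $B_\Lt=(1,0)$, writing $P$ and $Q_\Lt$ as intersections of lines with explicit slopes $c,-c',-d_\Lt,b_\Lt$, and then reading off $\tan\zeta_\Lt$ directly as the slope of $\ora{AQ_\Lt}$; the tangent form of $\eqref{eq:zeta_alt}$ falls out with essentially no trigonometric manipulation. Your approach is more synthetic: two applications of the sine rule (in $\triangle AB_\sigma Q_\sigma$ and $\triangle AB_\sigma P$) together with the projection $|B_\sigma Q_\sigma|\sin\beta_\sigma =h_\sigma$ give a sine-based expression for $\tan\zeta_\sigma$, and the three addition-formula identities you isolate then convert it cleanly to the tangent form. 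The coordinate argument is shorter and avoids the identity bookkeeping, while your argument has the virtue of making transparent which geometric lengths and angles actually enter the formula. One minor remark: calling the final inequality $(c+d_\sigma)(1/b_\sigma+1/c')>0$ ``manifest'' is a slight overstatement, since when $\beta_\sigma>\pi/2$ the second factor needs the observation $\gamma/2<\pi-\beta_\sigma$ (equivalently $\beta_\sigma+\gamma/2<\pi$, which is $\eqref{ineq:beta+gamma/2}$) to be seen as positive; but this is already available from the standing hypotheses, so the argument goes through.
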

\begin{proof}
First suppose $\angle AB_\sigma P\geqslant\angle AB_\sigma n_\sigma$ in Definition $\ref{def:zeta}$, (b), which is equivalent to 
$\beta_\sigma +\gamma /2 +\delta_{\sigma'}\geqslant\pi$ by Lemma $\ref{lem:angle_ABP}$.
Then $n_\sigma$ intersects segment $AP$ at $Q_\sigma$, and thus we have $\zeta_\sigma =\gamma /2$ from Definition $\ref{def:zeta}$.

Next suppose $\beta_\sigma +\gamma /2 +\delta_{\sigma'}\leqslant\pi$.
We may assume $\sigma =\Lt$ and $A=(0,0), B_\Lt =(1,0)$. 
Then $P$ is written as
\begin{equation}\label{eq:P_Q}
P=p(1,c)=(1,0)+q(1,-c')
\end{equation}
for some $p,q\in\R$. 
Solving $\eqref{eq:P_Q}$, we get 
\begin{equation*}
(p,q)=\frac{1}{c+c'}(c',c),\quad P=\frac{c'}{c+c'}(1,c).
\end{equation*}
Furthermore, since $Q_\Lt$ is the intersection point of $m_\Lt$ with slope $-d_\Lt$ and $n_\Lt$ with slope $b_\Lt$, $Q_\Lt$ is written as
\begin{equation}\label{eq:Q_L}
Q_\Lt =\frac{c'}{c+c'}(1,c)+r(1,-d_\Lt )=(1,0)+s(1,b_\Lt ).
\end{equation}
for some $r,s\in\R$. 
Solving $\eqref{eq:Q_L}$, we get
\begin{equation*}
(r,s)=\frac{c}{(b_\Lt +d_\Lt )(c+c')}(c'+b_\Lt ,c'-d_\Lt ), 
\end{equation*}
so that
\begin{equation*}
Q_\Lt =\frac{1}{(b_\Lt +d_\Lt )(c+c')}(b_\Lt (c+c')+(c+d_\Lt )c',b_\Lt c(c'-d_\Lt )).
\end{equation*}
This gives that
\begin{equation}\label{eq:tan_zeta}
\tan\zeta_\Lt =\tan\angle B_\Lt AQ_\Lt =\frac{1-d_\Lt /c'}{1/c+1 /c'+(1+d_\Lt /c)/b_\Lt}.
\end{equation}
Since $0<\zeta_\Lt\leqslant\gamma /2<\pi /2$, we obtain an expression of $\zeta_\Lt$ for
$\beta_\Lt +\gamma /2+\delta_\Rt\leqslant\pi$ by taking the arc tangent of $\eqref{eq:tan_zeta}$.
Also, interchanging $\Lt$ and $\Rt$ gives $\zeta_\Rt$. 
\textbf{}\end{proof}

Now we construct our improved flat-back $3$D gadget,
which is compatible with the conventional one obtained in Construction $\ref{const:conv}$ if $\delta_\Lt =\delta_\Rt =0$.
\begin{construction}\label{const:new}\rm
Consider a development as in Figure $\ref{fig:development_conv}$, for which we require the following conditions.
\begin{enumerate}[(i)]
\item $\alpha <\beta_\Lt + \beta_\Rt$, $\beta_\Lt <\alpha +\beta_\Rt$ and $\beta_\Rt <\alpha+ \beta_\Lt$.
\item $\alpha +\beta_\Lt +\beta_\Rt <2\pi$.
\item[(iii.a)] $\delta_\Lt ,\delta_\Rt\geqslant 0$, where we take clockwise (resp. counterclockwise) direction as positive for $\sigma =\Lt$ (resp. $\sigma=\Rt$).
\item[(iii.b)] $\delta_\sigma <\beta_\sigma$ and $\delta_\sigma <\pi /2$ for $\sigma =\Lt ,\Rt$. 
\item[(iii.c)] $\alpha +\beta_\Lt +\beta_\Rt -\delta_\Lt -\delta_\Rt >\pi$, or equivalently, $\gamma +\delta_\Lt +\delta_\Rt <\pi$.
\setcounter{enumi}{3}
\item $\zeta_\Lt +\zeta_\Rt\geqslant\gamma /2$, where $\zeta_\sigma$ is given by $\eqref{eq:zeta}$ or $\eqref{eq:zeta_alt}$.
(This condition is proved to be \emph{unnecessary} in Theorem $\ref{thm:zeta_L+R}$.)
\end{enumerate}
Conditions $\mathrm{(i)}$--$\mathrm{(iii)}$ are the same as in Construction $\ref{const:conv}$ if $\delta_\Lt =\delta_\Rt =0$,
and the same as in Definition $\ref{def:zeta}$ and \cite{Doi}, Construction $3.2$.
By Lemma $\ref{lem:pyramid}$, Condition $\mathrm{(iv)}$ holds automatically for $\delta_\Lt =\delta_\Rt =0$ if conditions $\mathrm{(i)}$--$\mathrm{(iii)}$ hold.
Therefore, our improved $3$D gadgets is \emph{completely compatible} with those in Construction $\ref{const:conv}$.
We discuss the validity of these conditions in Section $\ref{sec:4}$.

The crease pattern of our improved $3$D gadget is constructed as follows, where all procedures are done for both $\sigma =\Lt ,\Rt$.
\begin{enumerate}
\item Draw a perpendicular to $\ell_\sigma$ through $B_\sigma$ for both $\sigma =\Lt ,\Rt$, letting $C$ be the intersection point. 
\item Let $\zeta_\sigma$ for $\sigma =\Lt ,\Rt$ be the critical angles defined in Definition $\ref{def:zeta}$.
Choose a point $D$ on the circular arc $B_\Lt B_\Rt$ with center $A$ so that $\phi_\sigma =\angle B_\sigma AD$ satisfies
\begin{equation*}
\phi_\Lt\in [\gamma -2\zeta_\Rt ,2\zeta_\Lt ]\cap (0,\gamma ),\quad\text{or equivalently, }\phi_\Rt\in [\gamma -2\zeta_\Lt ,2\zeta_\Rt ]\cap (0,\gamma ).
\end{equation*}
\item Let $E_\sigma$ be the circumcenter of $\triangle B_\sigma CD$. 
Also, let $m_\sigma$ be a ray parallel to and going in the same direction as $\ell_\sigma$, starting from $E_\sigma$. 
Thus $m_\sigma$ is a perpendicular bisector to segment $B_\sigma C$ and $AE_\sigma$ bisects $\angle B_\sigma AC$.
\item Let $F$ be the intersection point of segments $CD$ and $E_\Lt E_\Rt$. 
(This step is not directly related to the construction, but used later in Section $\ref{sec:4}$.)
\item Determine a point $G_\sigma$ on segment $A E_\sigma$ such that $\angle A B_\sigma G_\sigma=\pi -\beta_\sigma$.
\item If $\delta_\sigma>0$, then determine a point $H_\sigma$ on segment $A E_\sigma$ such that $\angle E_\sigma B_\sigma H_\sigma =\delta_\sigma$.
\item The crease pattern is shown as the solid lines in Figure $\ref{fig:new_CP}$, 
and the assignment of mountain folds and valley folds is given in Table $\ref{tbl:assignment_new}$, 
where we have two ways of assigning mountain folds and valley folds if both $\delta_\sigma >0$ and $\phi_\sigma /2<\zeta_\sigma$ hold.
\end{enumerate}
\end{construction}

\begin{figure}[htbp]
  \begin{center}
\addtocounter{theorem}{1}
          \includegraphics[width=0.75\hsize]{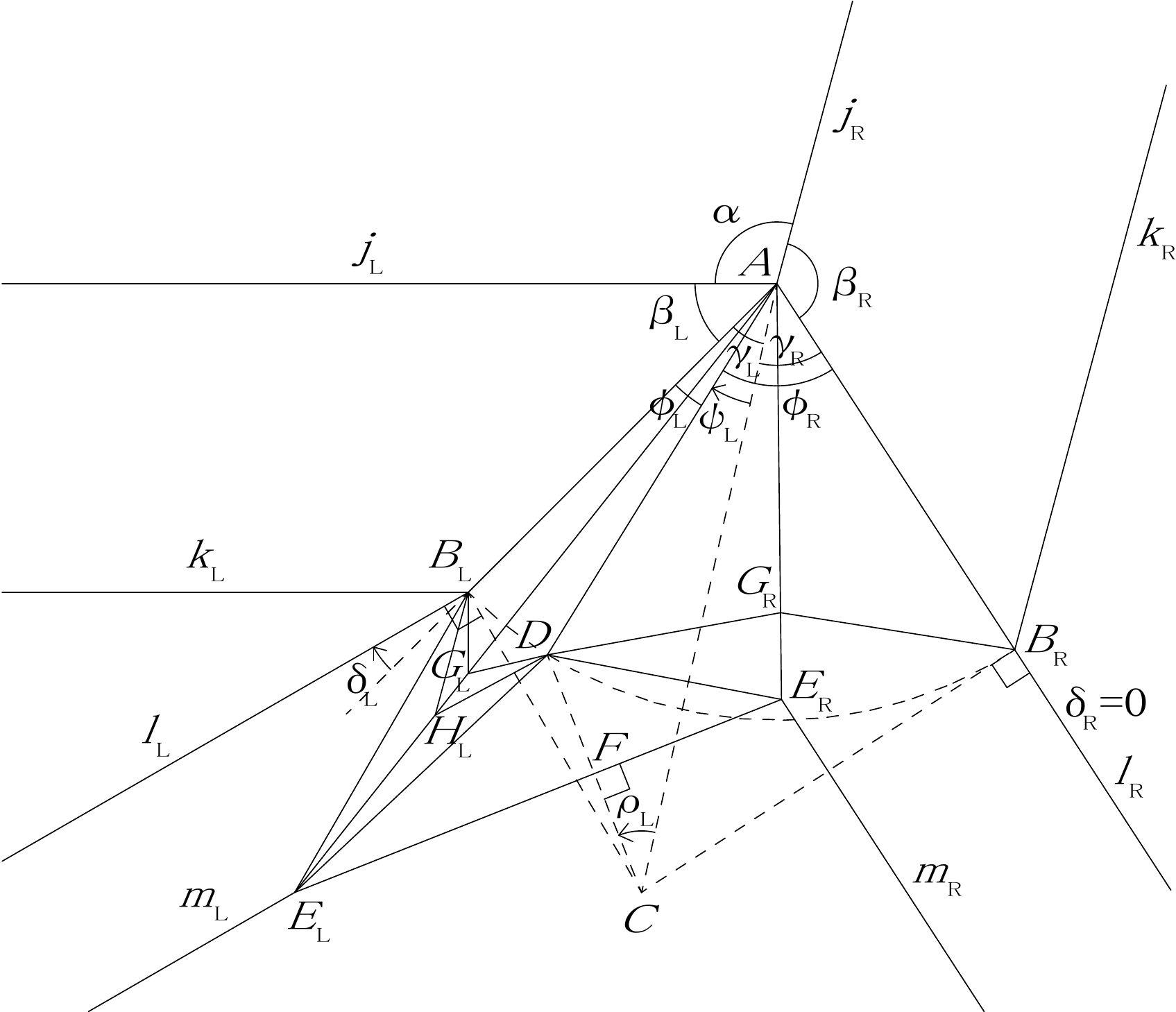}
    \caption{CP of our improved $3$D gadget}
    \label{fig:new_CP}
\end{center}
\end{figure}
\addtocounter{theorem}{1}
\begin{table}[h]
\begin{tabular}{c|c|c|c:c|c|c}
&common&\multicolumn{3}{c|}{$\phi_\sigma /2<\zeta_\sigma$ and}&\multicolumn{2}{c}{$\phi_\sigma /2=\zeta_\sigma$ and}\\
\cline{3-7}
&creases&$\delta_\sigma=0$&\multicolumn{2}{c|}{$\delta_\sigma>0$}&$\delta_\sigma=0$&$\delta_\sigma>0$\\
\hline
mountain&$j_\sigma ,\ell_\sigma ,$&$B_\sigma G_\sigma ,DE_\sigma$&\multicolumn{2}{c|}{$DH_\sigma$}
&$B_\sigma E_\sigma =B_\sigma G_\sigma$&$B_\sigma E_\sigma ,$\\ \cline{4-5}
folds&$AB_\sigma ,AD$&&$B_\sigma E_\sigma ,$&$B_\sigma H_\sigma ,$&&$DG_\sigma =DH_\sigma$\\
&&&$B_\sigma G_\sigma$&$E_\sigma H_\sigma$&&\\
\hline
valley&$k_\sigma ,m_\sigma ,$&$AE_\sigma ,DG_\sigma$&\multicolumn{2}{c|}{$DG_\sigma$}&$AE_\sigma ,$&$AE_\sigma ,$\\ \cline{4-5}
folds&$E_\Lt E_\Rt$&&$AE_\sigma ,$&$AH_\sigma ,$&$DE_\sigma =DG_\sigma$&$B_\sigma G_\sigma =B_\sigma H_\sigma$\\
&&&$B_\sigma H_\sigma$&$B_\sigma G_\sigma$&&
\end{tabular}\vspace{0.5cm}
\caption{Assignment of mountain folds and valley folds to the new gadget}
\label{tbl:assignment_new}
\end{table}

The only difference between in the present and the previous (\cite{Doi}, Construction $3.2$) constructions
is the choice of point $D$ on the circular arc $B_\Lt B_\Rt$ with center $A$. 
We show in Figure $\ref{fig:new_L_CP}$ the crease pattern of our improved $3$D gadget for $\phi_\Lt /2=\zeta_\Lt$.
Also, we show in Figures $\ref{fig:new_comp_CP}$ and $\ref{fig:new_comp_L_CP}$ the crease patterns of our improved gadgets compatible with 
that constructed in Figure $\ref{fig:conv_CP}$, where Figure $\ref{fig:new_comp_L_CP}$ corresponds to the case $\phi_\Lt /2=\zeta_\Lt$.
Although there exists a compatible gadget for $\phi_\Rt /2=\zeta_\Rt$, we omit the case because we need a large space to construct the crease pattern.
\begin{figure}[htbp]
  \begin{center}
\addtocounter{theorem}{1}
          \includegraphics[width=0.75\hsize]{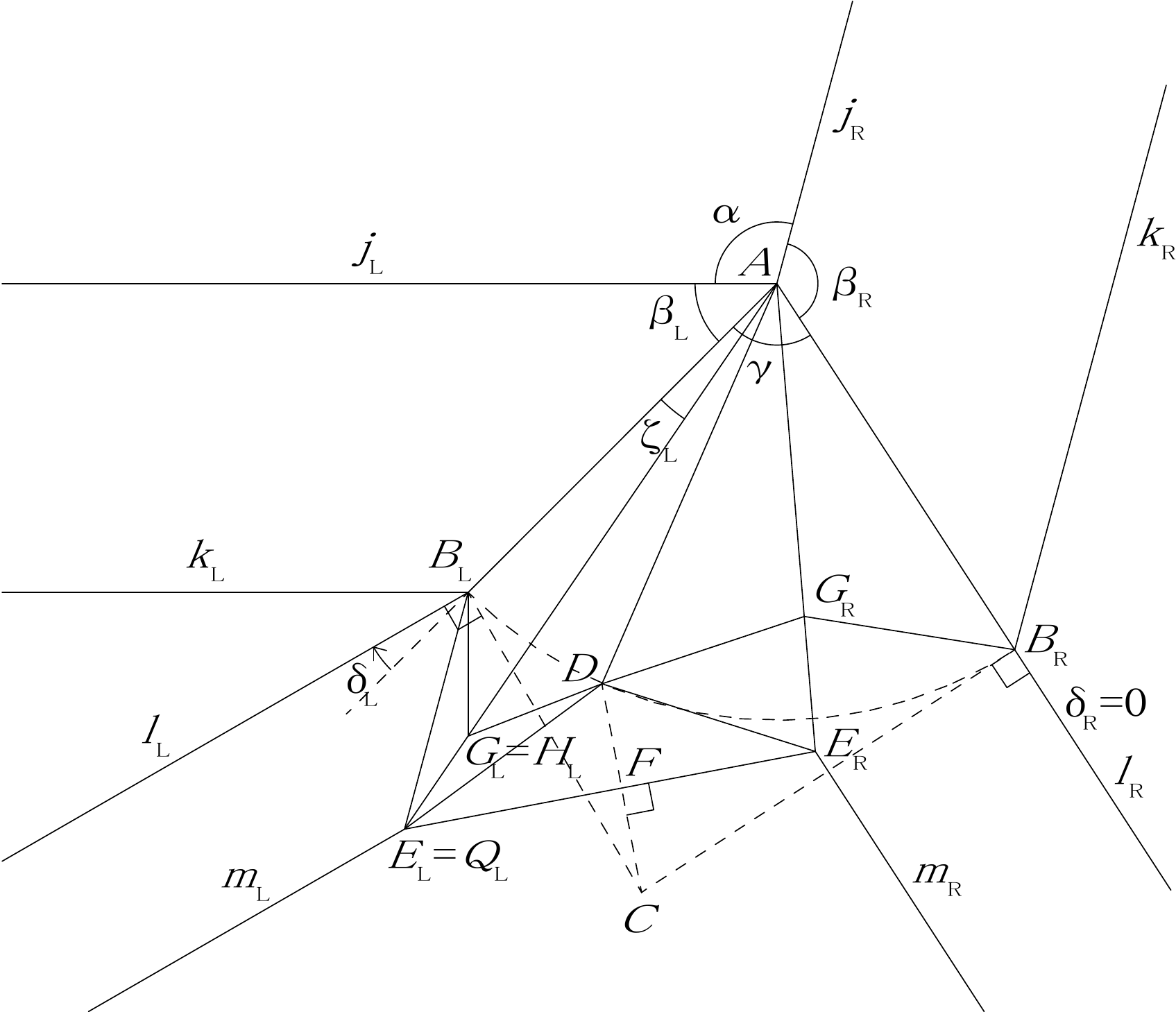}
    \caption{CP of our improved $3$D gadget with $\phi_\Lt /2=\zeta_\Lt$}
    \label{fig:new_L_CP}
\end{center}
\end{figure}
\begin{figure}[htbp]
  \begin{center}
    \begin{tabular}{c}
\addtocounter{theorem}{1}
      \begin{minipage}{0.5\hsize}
        \begin{center}
          \includegraphics[width=\hsize]{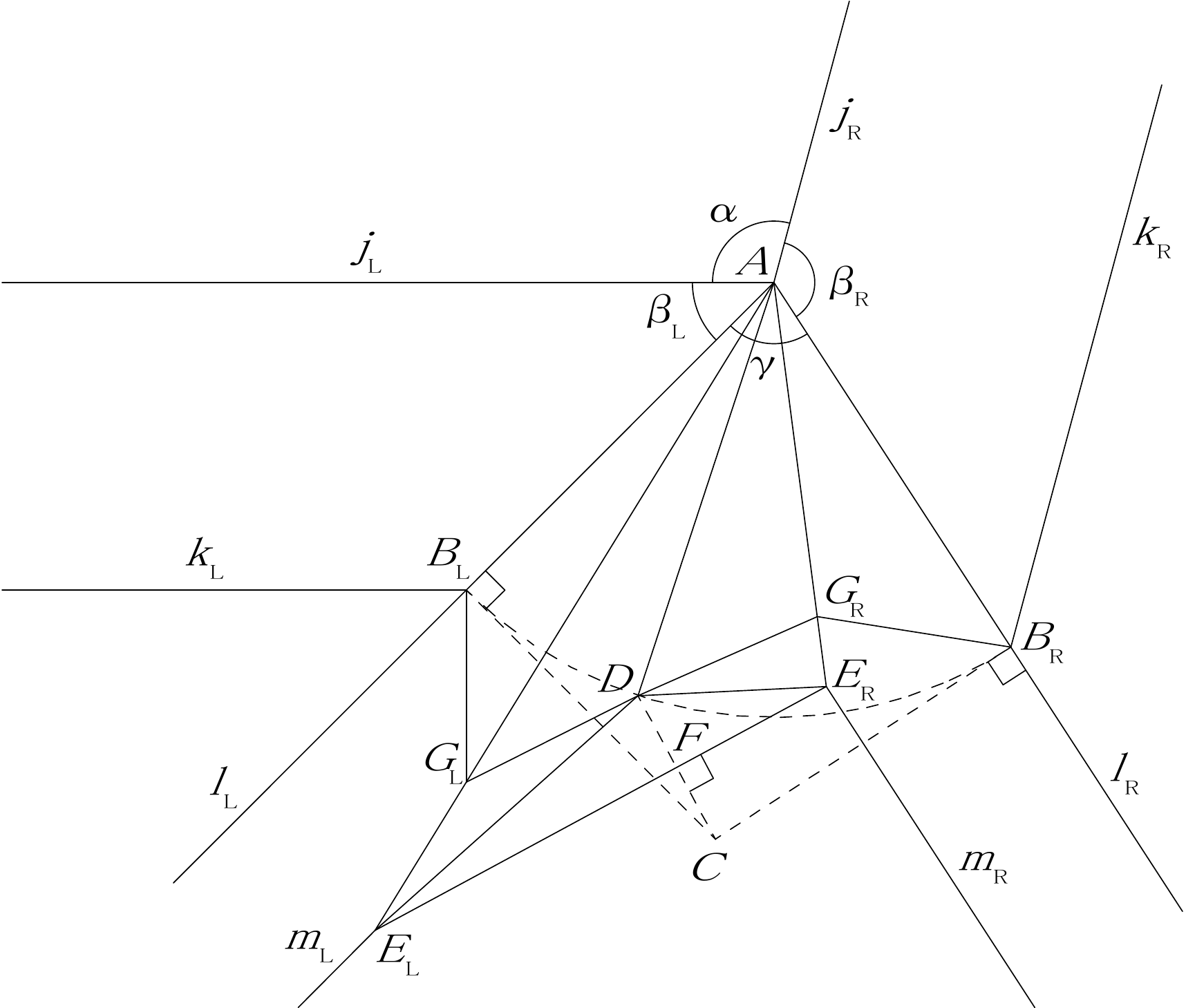}
        \end{center}
    \caption{CP of our improved $3$D gadget compatible with the conventional one shown in Figure $\ref{fig:conv_CP}$}
    \label{fig:new_comp_CP}
      \end{minipage}
\addtocounter{theorem}{1}
      \begin{minipage}{0.5\hsize}
        \begin{center}
          \includegraphics[width=\hsize]{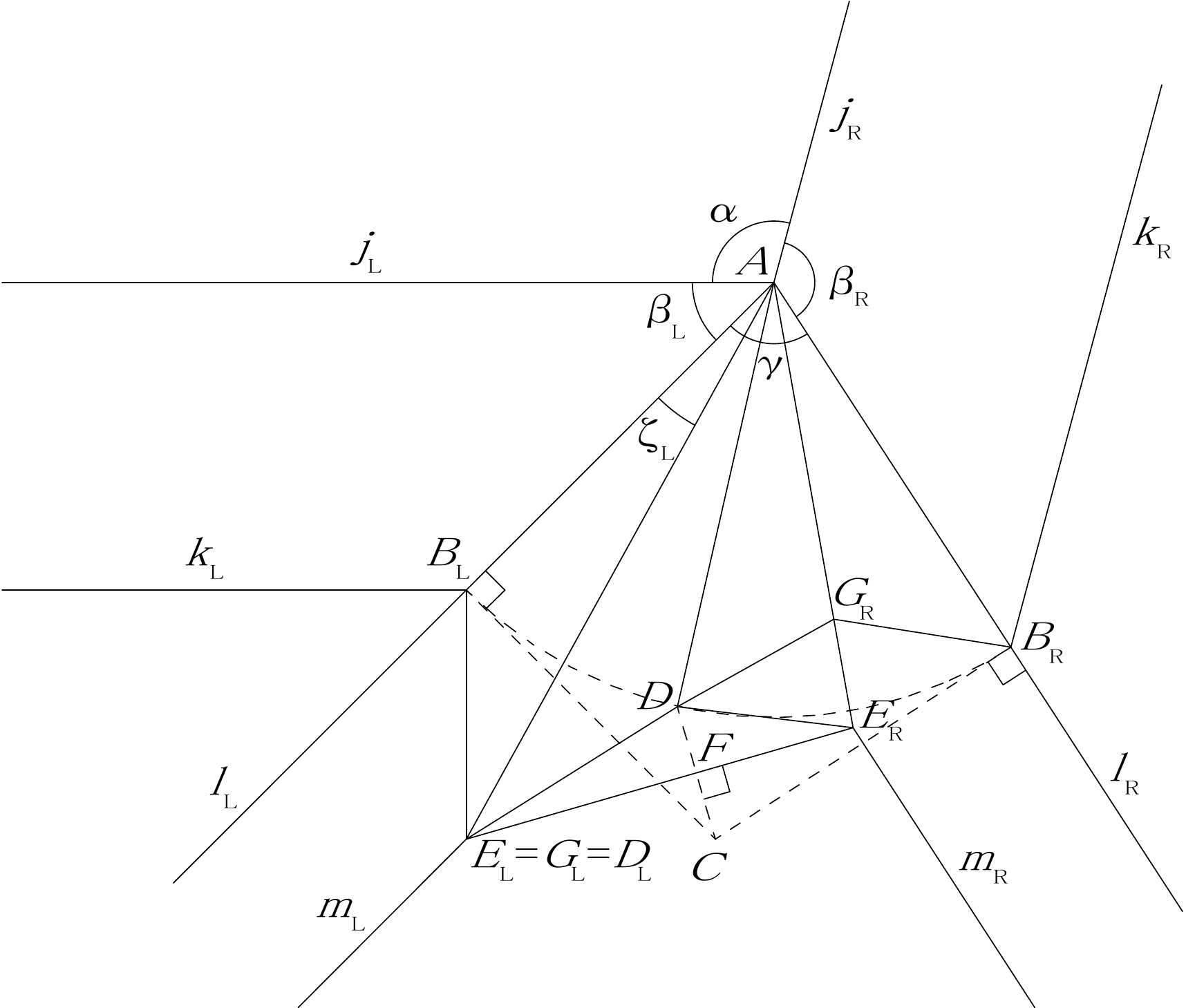}
        \end{center}
    \caption{CP of our improved $3$D gadget with $\phi_\Lt /2=\zeta_\Lt$ compatible with the conventional one shown in Figure $\ref{fig:conv_CP}$}
    \label{fig:new_comp_L_CP}
      \end{minipage}
    \end{tabular}
  \end{center}
\end{figure}

\section{Constructibility and foldability of the crease pattern}\label{sec:4}
In this section we discuss the validity of conditions $\mathrm{(i)}$--$\mathrm{(iv)}$ given in Construction $\ref{const:new}$
to ensure that the construction is possible.
Also, we check the foldability of the resulting crease pattern. 

Conditions $\mathrm{(i)}$ and $\mathrm{(ii)}$ are the same as in Construction $\ref{const:conv}$.
To derive the necessity of condition $\mathrm{(iii.a)}$, we suppose the converse, i.e., $\delta_\sigma <0$. 
Then $\angle k_\sigma B_\sigma \ell_\sigma=\pi -\beta_\sigma -\delta_\sigma$ is greater than 
$\angle \ell_\sigma B_\sigma G_\sigma =\pi -\beta_\sigma +\delta_\sigma$. 
Note that additional creases starting from $B_\sigma$ can go only inside $\angle \ell_\sigma B_\sigma G_\sigma$ 
because creases going inside $\angle k_\sigma B_\sigma \ell_\sigma$ change angles between $k_\Lt$ and $k_\Rt$ and may produce other outgoing pleats. 
Thus the flat-foldability condition around $B_\sigma$, which is described as the vanishing of the alternative sum of $\angle k_\sigma B_\sigma \ell_\sigma$ 
and angles divided by the creases starting from $B_\sigma$ and going inside $\angle \ell_\sigma B_\sigma G_\sigma$, can never hold. 
This is why we need condition $\mathrm{(iii.a)}$.
Condition $\mathrm{(iii.c)}$ ensures that point $C$ exists inside the angle formed by the extensions of $\ell_\Lt$ and $\ell_\Rt$.
Furthermore, for condition $\mathrm{(iii.b)}$, $\delta_\sigma <\beta_\sigma$ is necessary for $Q_\sigma$ in Definition $\ref{def:zeta}$ 
to exist inside $\angle B_\sigma AB_{\sigma'}$. 
Also, $\delta_\sigma <\pi /2$ ensures that $C$ exists inside $\angle B_\Lt AB_\Rt$, and thus $\gamma_\sigma =\angle B_\sigma AC$ 
calculated in \cite{Doi}, Lemma $4.1$ as
\begin{equation*}
\tan\gamma_\sigma =
\frac{1-\cos\gamma +\sin\gamma\tan\delta_{\sigma'}}{\sin\gamma +\cos\gamma\tan\delta_{\sigma'} +\tan\delta_\sigma},\\
\end{equation*}
is well-defined for $\sigma =\Lt ,\Rt$.
Note also that $\gamma_\sigma$ has an alternative representation
\begin{equation*}
\tan\left(\gamma_\sigma -\frac{\gamma}{2}\right) =\frac{\tan\delta_{\sigma'}-\tan\delta_\sigma}{2+(\tan\delta_\sigma +\tan\delta_{\sigma'})/\tan (\gamma /2)}.
\end{equation*}

The above conditions $\mathrm{(i)}$--$\mathrm{(iii)}$ enable us to execute procedures $(1)$--$(4)$ 
and construct points $C, D$ and $E_\sigma$ for $\sigma= \Lt ,\Rt$.
To deal with condition $\mathrm{(iv)}$, we consider when we can construct points $G_\sigma$ and for $\delta_\sigma >0$, $H_\sigma$ 
in $(5)$ and $(6)$ of Construction $\ref{const:new}$.
Note that in our construction, triangles $\triangle A B_\Lt G_\Lt$ and $\triangle A B_\Rt G_\Rt$ overlap on the left and the right face respectively,
and thus in the crease pattern no crease is allowed to pass across these triangles. 
Thus the condition for constructing $G_\sigma$ is that
\begin{equation}\label{ineq:constructibility_G}
\angle AB_\sigma G_\sigma\leqslant\angle AB_\sigma E_\sigma .
\end{equation}
Also we see from procedure $(6)$ that if $G_\sigma$ exists and $\delta_\sigma >0$, then the condition for constructing $H_\sigma$ is written as
\begin{equation}\label{ineq:constructibility_H}
\angle A B_\sigma G_\sigma\leqslant\angle A B_\sigma H_\sigma .
\end{equation}
Putting $\angle A B_\sigma G_\sigma =\pi -\beta_\sigma$ and $\angle A B_\sigma H_\sigma =\angle AB_\sigma E_\sigma -\delta_\sigma$
into $\eqref{ineq:constructibility_G}$ and $\eqref{ineq:constructibility_H}$, we therefore obtain the condition for constructing $G_\sigma$ and $H_\sigma$ that
\begin{equation}\label{ineq:angle_ABE}
\pi -\angle AB_\sigma E_\sigma\leqslant\beta_\sigma -\delta_\sigma ,
\end{equation}
which applies to both cases of $\delta_\sigma =0$ and $\delta_\sigma >0$.
Furthermore, we can rewrite condition $\eqref{ineq:angle_ABE}$ in a more practical form as
\begin{equation}\label{ineq:tan_angle_ABE}
\tan\left(\frac{\pi}{2}-\angle AB_\sigma E_\sigma\right)\leqslant\tan\left(\beta_\sigma -\delta_\sigma -\frac{\pi}{2}\right) .
\end{equation}
\begin{theorem}\label{thm:bound_phi}
Suppose conditions $\mathrm{(i)}$--$\mathrm{(iii)}$ of Construction $\ref{const:new}$ hold.
Let $D$ be a point on circular arc $B_\Lt B_\Rt$ with center $A$ and let $\phi_\sigma =\angle B_\sigma AD$ for $\sigma =\Lt ,\Rt$, 
so that $0<\phi_\sigma <\gamma$ and $\phi_\Lt +\phi_\Rt =\gamma$. 
Then condition given by $\eqref{ineq:angle_ABE}$ that points $G_\sigma$ and $H_\sigma$ in Construction $\ref{const:new}$ are constructible is equivalent to
\begin{equation}\label{ineq:bound_phi}
\frac{\phi_\sigma}{2}\leqslant\zeta_\sigma ,
\end{equation}
where $\zeta_\sigma$ is given geometrically in Definition $\ref{def:zeta}$ or numerically in Proposition $\ref{prop:zeta}$.
\end{theorem}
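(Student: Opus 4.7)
The plan is to reformulate $\eqref{ineq:angle_ABE}$ as a comparison between the point $E_\sigma$ of Construction \ref{const:new} and the point $Q_\sigma$ of Definition \ref{def:zeta} on a line they both lie on, and then to translate between the angle at $B_\sigma$ and the angle at $A$. Specifically, I would first rewrite $\eqref{ineq:angle_ABE}$ in the equivalent form
\begin{equation*}
\angle AB_\sigma E_\sigma\geqslant\pi -\beta_\sigma +\delta_\sigma =\angle AB_\sigma n_\sigma ,
\end{equation*}
where the second equality is the defining property of $n_\sigma$ in Definition \ref{def:zeta}. The goal then becomes to show $\angle AB_\sigma E_\sigma\geqslant\angle AB_\sigma n_\sigma$ is equivalent to $\phi_\sigma /2\leqslant\zeta_\sigma$.

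Two short geometric observations are next. First, since $E_\sigma$ is the circumcenter of $\triangle B_\sigma CD$, it lies on the perpendicular bisector of $B_\sigma C$ — which is precisely the line supporting $m_\sigma$ in Definition \ref{def:zeta}. Second, $D$ is on the arc centered at $A$ through $B_\sigma$, so $AB_\sigma =AD$; combined with $E_\sigma B_\sigma =E_\sigma D$, this makes $AE_\sigma$ the perpendicular bisector of $B_\sigma D$, so $\angle B_\sigma AE_\sigma =\phi_\sigma /2$. Thus $E_\sigma$ is the unique point of the perpendicular bisector of $B_\sigma C$ whose angular position from $A$ is $\phi_\sigma /2$.

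In the case $\beta_\sigma +\gamma /2+\delta_{\sigma'}\leqslant\pi$ of Proposition \ref{prop:zeta}, $Q_\sigma$ also lies on the perpendicular bisector of $B_\sigma C$ and by construction satisfies $\angle AB_\sigma Q_\sigma =\pi -\beta_\sigma +\delta_\sigma$ and $\angle B_\sigma AQ_\sigma =\zeta_\sigma$. The key lemma I would use is the following elementary fact: as a point $X$ traverses a straight line $L$ not passing through $A$ or $B_\sigma$ and remaining on one fixed side of line $AB_\sigma$, both $\angle B_\sigma AX$ and $\angle AB_\sigma X$ are strictly monotonic in $X$ and change in opposite directions (the two limits of each angle as $X$ tends to the two ends of $L$ are supplementary, which forces opposite monotonicity). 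Applied to $X=E_\sigma$ and $X=Q_\sigma$ on the perpendicular bisector of $B_\sigma C$, this lemma immediately yields $\angle AB_\sigma E_\sigma\geqslant\angle AB_\sigma Q_\sigma$ if and only if $\angle B_\sigma AE_\sigma\leqslant\angle B_\sigma AQ_\sigma$, i.e., $\phi_\sigma /2\leqslant\zeta_\sigma$.

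In the remaining case $\beta_\sigma +\gamma /2+\delta_{\sigma'}\geqslant\pi$, Proposition \ref{prop:zeta} gives $\zeta_\sigma =\gamma /2$, so $\phi_\sigma /2\leqslant\zeta_\sigma$ follows automatically from $\phi_\sigma <\gamma$. To dispose of $\eqref{ineq:angle_ABE}$ in this regime I would use the same monotonicity: $E_\sigma$ sweeps the portion of $m_\sigma$ with $E_\sigma\to P$ as $\phi_\sigma\to\gamma$, so $\angle AB_\sigma E_\sigma >\angle AB_\sigma P=\gamma /2+\delta_\Lt +\delta_\Rt$ by Lemma \ref{lem:angle_ABP}, and the case hypothesis rewrites the right-hand side as $\geqslant\pi -\beta_\sigma +\delta_\sigma$. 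The step I expect to require the most care is pinning down the \emph{direction} of the monotonic correspondence, since an incorrectly oriented inequality would reverse the whole statement; I would secure this either by checking a single unambiguous reference configuration such as the endpoint $\phi_\sigma =\gamma$ where $E_\sigma =P$, or by redoing the coordinate computation of Proposition \ref{prop:zeta} with $E_\sigma$ in place of $Q_\sigma$ and directly comparing tangent expressions in the same parameters $b_\sigma ,c,c',d_\sigma$.
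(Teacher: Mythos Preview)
Your proof is correct and takes a genuinely different route from the paper's. The paper proceeds by direct coordinate computation: placing $A=(0,0)$, $B_\Lt =(1,0)$, it computes $E_\Lt$ explicitly as the intersection of the ray from $A$ with slope $\tan (\phi_\Lt /2)$ and the line $m_\Lt$, extracts $\tan (\pi /2-\angle AB_\Lt E_\Lt )$ from the resulting vector $\ora{B_\Lt E_\Lt}$, and then algebraically massages inequality $\eqref{ineq:tan_angle_ABE}$ until it matches the tangent expression for $\zeta_\Lt$ in Proposition~\ref{prop:zeta}, handling the two cases $\beta_\sigma +\gamma /2+\delta_{\sigma'}\lessgtr\pi$ separately along the way.

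Your synthetic argument bypasses all of this by observing that both $E_\sigma$ and, in the first case, $Q_\sigma$ lie on the same line (the perpendicular bisector of $B_\sigma C$), so that $\eqref{ineq:angle_ABE}$ and $\eqref{ineq:bound_phi}$ are simply two ways of locating $E_\sigma$ relative to $Q_\sigma$ on that line---one via the angle at $B_\sigma$, the other via the angle at $A$. Your opposite-monotonicity lemma (which is correct: both angles are monotone along the line, and were they monotone in the same direction their sum would be monotone yet equal to $\pi$ at both ends) converts one into the other immediately. The second case is dispatched exactly as you describe, using Lemma~\ref{lem:angle_ABP}. What your approach buys is brevity and transparency; what the paper's computation buys is the explicit tangent formulas for $\angle AB_\sigma E_\sigma$ in terms of $t_\sigma =\tan (\phi_\sigma /2)$, which are not otherwise needed here but tie in with the parametrization used in Proposition~\ref{prop:zeta} and later sections. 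Your caution about pinning down the direction of the correspondence is well placed; the endpoint check $E_\sigma =P$ at $\phi_\sigma =\gamma$ that you propose does settle it cleanly.
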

\begin{proof}
It suffices to prove the proposition assuming $\sigma =\Lt$ and $A=(0,0), B_\Lt =(1,0)$.
We see from $\angle B_\Lt AP=\gamma /2$ and $\angle AB_\Lt P=\gamma /2+\delta_\Lt +\delta_\Rt$ that $P$ is written as
\begin{equation}\label{eq:P_E}
P=p(1,c)=(1,0)+q(1,-c')
\end{equation}
for some $p,q\in\R$. 
Solving $\eqref{eq:P_E}$, we get 
\begin{equation*}
(p,q)=\frac{1}{c+c'}(c',c), \quad\text{so that }P=\frac{c'}{c+c'}(1,c).
\end{equation*}

Set $t_\sigma =\tan (\phi_\sigma /2)$. 
Since point $E_\Lt$ is the intersection point of the ray starting from $A$ with slope $t_\Lt$ 
and the ray $m_\Lt$ starting from $P$ with slope $-d_\Lt$, $E_\Lt$ is written as
\begin{equation}\label{eq:E_L}
E_\Lt =r(1,t_\Lt )=\frac{c'}{c+c'}(1,c)+s(1,-d_\Lt )
\end{equation}
for some $r,s\in\R$. 
Solving $\eqref{eq:E_L}$, we get
\begin{equation*}
(r,s)=\frac{c'}{(c+c')(t_\Lt +d_\Lt )}(-t_\Lt +c,c+d_\Lt ), \quad E_\Lt =\frac{c'(c+d_\Lt )}{(c+c')(t_\Lt +d_\Lt )}(1,t_\Lt ).
\end{equation*}
This gives that
\begin{equation}\label{eq:vector_BE}
\begin{aligned}
\ora{B_\Lt E_\Lt}&=E_\Lt -(1,0)\\
&=\frac{1}{(c+c')(t_\Lt +d_\Lt )}(c(c'-d_\Lt )-(c+c')t_\Lt ,c'(c+d_\Lt )t_\Lt ).
\end{aligned}
\end{equation}

Now putting $\eqref{eq:vector_BE}$ into condition $\eqref{ineq:tan_angle_ABE}$ which is equivalent to $\eqref{ineq:angle_ABE}$, we have that
\begin{equation}\label{ineq:tan_angle_ABE_2}
\begin{aligned}
\tan\left(\frac{\pi}{2}-\angle AB_\Lt E_\Lt\right)&=\frac{(c+c')t_\Lt -c(c'-d_\Lt )}{c'(c+d_\Lt )t_\Lt}\\
&\leqslant -\frac{1}{b_\Lt}=\tan\left(\beta_\Lt -\delta_\Lt -\frac{\pi}{2}\right) .
\end{aligned}
\end{equation}
We have to be careful in dealing with this inequality because not $c'$ and $b_\sigma$ but their reciprocals are continuous functions 
with respect to $\beta_\sigma ,\gamma$ and $\delta_\sigma$, while $c,d_\sigma$ and $t_\sigma$ are continuous themselves.

It follows from 
\begin{equation*}
-\frac{\pi}{2}<\frac{\gamma}{2}-\frac{\pi}{2}<\frac{\gamma}{2}+\delta_\Lt +\delta_\Rt -\frac{\pi}{2}<\frac{\pi}{2},
\end{equation*}
that
\begin{equation*}
-\frac{1}{c}=\tan\left(\frac{\gamma}{2}-\frac{\pi}{2}\right) <\tan\left(\frac{\gamma}{2}+\delta_\Lt +\delta_\Rt -\frac{\pi}{2}\right) =-\frac{1}{c'},
\end{equation*}
and so we have $1/c'<1/c$. 
Similarly, we have $1-d_\Lt /c',1+d_\Lt /b_\sigma >0$.
Thus $\eqref{ineq:tan_angle_ABE_2}$ is rewritten as
\begin{equation}\label{ineq:bound_phi_2}
\frac{1/c+1/c'}{1+d_\Lt /c}+\frac{1}{b_\Lt}\leqslant\frac{1-d_\Lt /c'}{1+d_\Lt /c}\cdot\frac{1}{t_\Lt}.
\end{equation}
Consequently, using $1+d_\Lt /c,1-d_\Lt /c'>0$ and taking the arc tangents in $\eqref{ineq:bound_phi_2}$ gives that
\begin{equation}\label{ineq:tan_phi}
\tan^{-1}\left(\frac{1/c+1/c'+(1+d_\Lt /c)/b_\Lt}{1-d_\Lt /c'}\right) \leqslant\frac{\pi}{2}-\frac{\phi_\Lt}{2}.
\end{equation}
from which we obtain $\eqref{ineq:bound_phi}$ for $\sigma =\Lt$.

Note that the right-hand side of $\eqref{ineq:bound_phi_2}$ is monotonically decreasing for $t_\Lt\in (0,c]$, and also for $\phi_\Lt\in (0,\gamma ]$,
and thus if $\eqref{ineq:bound_phi_2}$ holds for $t=c$, then it holds for all $\phi_\Lt\in (0,\gamma ]$.
Substituting $t=c$ into $\eqref{ineq:bound_phi_2}$ gives $1/b_\Lt\leqslant -1/c'$, which we can see is equivalent to
\begin{equation*}
\frac{\pi}{2}-(\beta_\Lt -\delta_\Lt )\leqslant\frac{\gamma}{2}+\delta_\Lt +\delta_\Rt -\frac{\pi}{2}
\end{equation*}
by taking the arc tangents. 
Summing up, if $\beta_\Lt +\gamma /2 +\delta_\Rt\geqslant\pi$, 
then $\eqref{ineq:bound_phi_2}$ holds for all $\phi_\Lt\in (0,\gamma ]$. 

If $\beta_\Lt +\gamma /2 +\delta_\Rt\leqslant\pi$, then similarly as above we have $1/b_\Lt +1/c'\geqslant 0$.
Thus together with $1+d_\Lt /b_\Lt\geqslant 0$ and $1-d_\Lt /c'>0$, we see that the argument of $\tan^{-1}$ in $\eqref{ineq:tan_phi}$ is positive.
Hence in this case $\eqref{ineq:tan_phi}$ is rewritten as
\begin{equation*}
\frac{\phi_\sigma}{2}\leqslant\tan^{-1}\left(\frac{1-d_\Lt /c'}{1/c+1/c'+(1+d_\Lt /c)/b_\Lt}\right) .
\end{equation*}

Therefore, we conclude that $\phi_\Lt /2\leqslant\zeta_\Lt$, where $\zeta_\Lt$ is given in Proposition $\ref{prop:zeta}$.
Also, interchanging subscripts $\Lt$ and $\Rt$ gives $\phi_\Rt /2\leqslant\zeta_\Rt$.
This completes the proof of Theorem $\ref{thm:bound_phi}$.
\end{proof}
\begin{proposition}
Under condtions $\mathrm{(i)}$--$\mathrm{(iii)}$ of Construction $\ref{const:new}$, 
if condtion $\mathrm{(iv)}$ holds then there exist points $G_\sigma$, and $H_\sigma$ for $\delta_\sigma >0$ in procedures $(5)$ and $(6)$.
\end{proposition}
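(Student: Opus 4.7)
The plan is to reduce the proposition directly to Theorem $\ref{thm:bound_phi}$. Under conditions $\mathrm{(i)}$--$\mathrm{(iii)}$, procedures $(1)$--$(4)$ of Construction $\ref{const:new}$ successfully produce $C$, $D$, $E_\Lt$ and $E_\Rt$ (the existence of $C$ and $E_\sigma$ follows from conditions $\mathrm{(iii.c)}$ and $\mathrm{(iii.b)}$ exactly as in Section $\ref{sec:2}$). What remains is to show that procedures $(5)$ and $(6)$ can be carried out, i.e., that $\eqref{ineq:angle_ABE}$ holds on both sides, since the derivation of $\eqref{ineq:angle_ABE}$ from $\eqref{ineq:constructibility_G}$ and $\eqref{ineq:constructibility_H}$ already accounts for both the $\delta_\sigma =0$ and $\delta_\sigma >0$ cases.

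First I would check that the choice of $D$ demanded by procedure $(2)$ is actually available, i.e., that $[\gamma -2\zeta_\Rt ,2\zeta_\Lt ]\cap (0,\gamma )$ is nonempty. Condition $\mathrm{(iv)}$ gives $\gamma -2\zeta_\Rt\leqslant 2\zeta_\Lt$, so the closed interval itself is nonempty; since Definition $\ref{def:zeta}$ guarantees $0<\zeta_\sigma\leqslant\gamma /2$, we have $2\zeta_\Lt\in (0,\gamma ]$ and $\gamma -2\zeta_\Rt\in [0,\gamma )$, so the interval intersects $(0,\gamma )$ in a subinterval of positive length whenever it is not a single endpoint, and otherwise in a single interior point. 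In every case a valid $D$ exists.

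Then from $\phi_\Lt\in [\gamma -2\zeta_\Rt ,2\zeta_\Lt ]$ and $\phi_\Lt +\phi_\Rt =\gamma$ we read off
\begin{equation*}
\phi_\Lt\leqslant 2\zeta_\Lt\quad\text{and}\quad\phi_\Rt =\gamma -\phi_\Lt\leqslant\gamma -(\gamma -2\zeta_\Rt )=2\zeta_\Rt ,
\end{equation*}
so $\phi_\sigma /2\leqslant\zeta_\sigma$ for $\sigma =\Lt ,\Rt$, which is condition $\eqref{ineq:bound_phi}$ on each side. Applying the direction of Theorem $\ref{thm:bound_phi}$ that $\eqref{ineq:bound_phi}$ implies $\eqref{ineq:angle_ABE}$ finishes the proof. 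There is no real obstacle: the proposition simply repackages Theorem $\ref{thm:bound_phi}$ together with the observation that the $\phi_\Lt$-interval in the construction is nonempty precisely when $\zeta_\Lt +\zeta_\Rt\geqslant\gamma /2$; the only point requiring a moment of care is the strict-versus-weak inequality at the endpoints of $(0,\gamma )$, which is handled by the strict positivity of $\zeta_\sigma$ from Definition $\ref{def:zeta}$.
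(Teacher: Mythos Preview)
Your proof is correct and follows essentially the same route as the paper's: both reduce the proposition to Theorem~$\ref{thm:bound_phi}$ by observing that condition~$\mathrm{(iv)}$ guarantees a valid choice of $\phi_\sigma\in(0,\gamma)$ with $\phi_\sigma\leqslant 2\zeta_\sigma$ for both $\sigma$. The paper's proof is a two-line version of your argument; your added verification that $[\gamma-2\zeta_\Rt,2\zeta_\Lt]\cap(0,\gamma)$ is nonempty (using $0<\zeta_\sigma\leqslant\gamma/2$) makes explicit a step the paper leaves implicit.
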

\begin{proof}
If condition $\mathrm{(iv)}$ of Construction $\ref{const:new}$ holds, then we can choose $\phi_\sigma\in (0,\gamma )$
so that $\phi_\sigma\leqslant 2\zeta_\sigma$ for both $\sigma =\Lt ,\Rt$.
Thus it follows from Theorem $\ref{thm:bound_phi}$ that we can construct $G_\sigma$ and $H_\sigma$. 
This completes the proof.
\end{proof}
\begin{theorem}\label{thm:bound_psi}
Define $\psi_\sigma$ for $\sigma =\Lt ,\Rt$ with $\psi_\sigma <\gamma_\sigma$  by
\begin{equation*}
\psi_\sigma =\angle B_\sigma AC-\angle B_\sigma AD=\gamma_\sigma -\phi_\sigma ,
\end{equation*}
so that we have $\psi_\Lt +\psi_\Rt =0,\phi_\sigma +\psi_\sigma =\gamma_\sigma$ and $-\gamma_{\sigma'}<\psi_\sigma <\gamma_\sigma$,
where $\gamma_\sigma =\angle AB_\sigma C$.
Also define $\rho_\sigma$ for $\sigma =\Lt ,\Rt$ by
\begin{equation*}
\rho_\sigma =\angle ACE_\sigma -\angle DCE_\sigma .
\end{equation*}
Suppose conditions $\mathrm{(i)}$--$\mathrm{(iii)}$ hold in Construction $\ref{const:new}$. 
Then our improved gadget is constructible if and only if we have for both $\sigma =\Lt ,\Rt$ that
\begin{equation}\label{ineq:bound_psi_rho}
\beta_\sigma +\frac{\gamma_\sigma}{2}+\frac{\psi_\sigma}{2}+\rho_\sigma\geqslant\frac{\pi}{2}.
\end{equation}
Furthermore, condition $\eqref{ineq:bound_psi_rho}$ is equivalent to
\begin{equation}\label{ineq:bound_psi}
\beta_\sigma +\frac{\gamma_\sigma}{2}+\tan^{-1}\left(\frac{r+1}{r-1}\tan\frac{\psi_\sigma}{2}\right)\geqslant\frac{\pi}{2},
\end{equation}
where $r>0$ is the ratio of the length of segment $AC$ to that of segment $AB_\sigma$ given by 
\begin{equation}\label{eq:r}
r=\frac{\norm{AC}}{\norm{AB}}=\frac{1}{\cos\gamma_\sigma -\sin\gamma_\sigma\tan\delta_\sigma}.
\end{equation}
\end{theorem}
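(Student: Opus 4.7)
The plan is to prove the two stated equivalences in turn, using that Theorem $\ref{thm:bound_phi}$ and the discussion preceding it already identify constructibility with condition $\eqref{ineq:angle_ABE}$: $\pi -\angle AB_\sigma E_\sigma \leqslant\beta_\sigma -\delta_\sigma$. The key identity for the first equivalence is
\begin{equation*}
\angle AB_\sigma E_\sigma =\frac{\pi}{2}+\delta_\sigma +\frac{\gamma_\sigma}{2}+\frac{\psi_\sigma}{2}+\rho_\sigma ,
\end{equation*}
which, substituted into $\eqref{ineq:angle_ABE}$ and rearranged, yields $\eqref{ineq:bound_psi_rho}$ directly.

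I would prove this identity by angle chasing. Decompose $\angle AB_\sigma E_\sigma =\angle AB_\sigma C+\angle CB_\sigma E_\sigma$; the term $\angle AB_\sigma C=\pi /2+\delta_\sigma$ comes from $B_\sigma C\perp\ell_\sigma$ together with the angle of the outgoing mountain fold at $B_\sigma$, and this same calculation combined with the law of sines in $\triangle AB_\sigma C$ also justifies the formula $\eqref{eq:r}$ for $r$. Since $\norm{E_\sigma B_\sigma}=\norm{E_\sigma C}$, the isoceles $\triangle B_\sigma CE_\sigma$ gives $\angle CB_\sigma E_\sigma =\angle B_\sigma CE_\sigma$, which I rewrite at $C$ as $\angle ACE_\sigma -\angle ACB_\sigma$; the angle sum in $\triangle AB_\sigma C$ and the definition $\rho_\sigma =\angle ACE_\sigma -\angle DCE_\sigma$ then reduce the task to computing $\angle DCE_\sigma$. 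For this, the isoceles $\triangle DCE_\sigma$ (from $\norm{E_\sigma C}=\norm{E_\sigma D}$) together with the inscribed angle theorem applied to the circumcircle of $\triangle B_\sigma CD$ (whose center is $E_\sigma$) gives $\angle DE_\sigma C=2\angle DB_\sigma C$, hence $\angle DCE_\sigma =\pi /2-\angle DB_\sigma C$. Finally, the isoceles $\triangle AB_\sigma D$ (from $\norm{AB_\sigma}=\norm{AD}$) gives $\angle AB_\sigma D=\pi /2-\phi_\sigma /2$, so $\angle DB_\sigma C=\angle AB_\sigma C-\angle AB_\sigma D=\delta_\sigma +\phi_\sigma /2$. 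Collecting all contributions and using $\phi_\sigma =\gamma_\sigma -\psi_\sigma$ yields the identity.

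The same angle chasing reveals $\rho_\sigma =\angle ACD$ as a signed angle. For the equivalence of $\eqref{ineq:bound_psi_rho}$ and $\eqref{ineq:bound_psi}$, I would apply the law of tangents to $\triangle ACD$, whose sides are $\norm{AD}=1$ and $\norm{AC}=r$ with angle $\psi_\sigma$ at $A$ and angle $\rho_\sigma$ at $C$. Combining the law of tangents for the pair of angles at $C$ and $D$ with the angle sum $\rho_\sigma +\psi_\sigma +\angle ADC=\pi$ and simplifying using $\tan (\theta -\pi /2)=-\cot\theta$ and $\tan (\pi /2-\theta)=\cot\theta$ yields
\begin{equation*}
\tan\left(\rho_\sigma +\frac{\psi_\sigma}{2}\right) =\frac{r+1}{r-1}\tan\frac{\psi_\sigma}{2},
\end{equation*}
after which taking arc tangent converts $\eqref{ineq:bound_psi_rho}$ into $\eqref{ineq:bound_psi}$.

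The main obstacle is orientation bookkeeping. The decompositions above tacitly assume that $C$ lies between $A$ and $E_\sigma$ angularly at $B_\sigma$, that $D$ lies between $A$ and $C$ angularly at $B_\sigma$, that the inscribed angle theorem is applied on the correct arc of the circumcircle of $\triangle B_\sigma CD$, and that $\rho_\sigma$ agrees in sign with $\angle ACD$; the last step further requires the arc tangent to land in the correct branch. All of these follow from conditions $\mathrm{(i)}$--$\mathrm{(iii)}$ together with the explicit formula $\eqref{eq:r}$ for $r$, but they must be verified by a case analysis distinguishing $\psi_\sigma >0$ from $\psi_\sigma <0$ and $r>1$ from $r<1$.
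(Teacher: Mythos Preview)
Your proposal is correct and reaches the same conclusions, but by a genuinely different route from the paper in both halves.

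For the identity $\angle AB_\sigma E_\sigma =\pi /2+\delta_\sigma +\gamma_\sigma /2+\psi_\sigma /2+\rho_\sigma$, the paper proceeds via Lemma~\ref{lem:angles}: it introduces three auxiliary angles $\theta_\sigma ,\eta_\sigma ,\xi_\sigma$ involving the point $F=CD\cap E_\Lt E_\Rt$, and extracts $\theta_\sigma$ from angle sums in $\triangle B_\sigma CE_\sigma$ and $\triangle ACE_\sigma$. Your argument bypasses $F$ entirely, using instead the circumcircle of $\triangle B_\sigma CD$ (inscribed-angle theorem) and the isoceles triangles $B_\sigma CE_\sigma$, $DCE_\sigma$, $AB_\sigma D$. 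This is shorter and more self-contained for the purpose at hand; on the other hand, the paper's auxiliary angles $\theta_\sigma ,\eta_\sigma ,\xi_\sigma$ are reused afterwards in the flat-foldability check around $B_\sigma ,D,E_\sigma$, which explains why the paper sets them up.

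For the equivalence of $\eqref{ineq:bound_psi_rho}$ and $\eqref{ineq:bound_psi}$, the paper first records $\tan\rho_\sigma =\sin\psi_\sigma /(r-\cos\psi_\sigma )$ and then pushes this through the tangent addition formula with the half-angle substitution $t_\sigma =\tan (\psi_\sigma /2)$. Your observation that $\rho_\sigma$ is the (signed) angle $\angle ACD$ lets you obtain $\tan (\psi_\sigma /2+\rho_\sigma )=\tfrac{r+1}{r-1}\tan (\psi_\sigma /2)$ in one stroke from the law of tangents in $\triangle ACD$ with sides $\norm{AD}=1$, $\norm{AC}=r$; this is cleaner than the paper's computation and has the bonus of making the formula $\eqref{eq:rho}$ for $\tan\rho_\sigma$ a corollary rather than an input. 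The orientation and branch issues you flag are genuine but routine, and the paper treats them only implicitly as well.
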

Before proving the theorem, we prove the following result, which will also be used in checking the foldability of the crease pattern.
\begin{lemma}\label{lem:angles}
Define $\theta_\sigma ,\eta_\sigma ,\xi_\sigma$ and $\rho_\sigma$ for $\sigma =\Lt ,\Rt$ by
\begin{equation*}
\theta_\sigma =\angle AB_\sigma E_\sigma-\angle AB_\sigma C,\quad\eta_\sigma =\angle AE_\sigma B_\sigma =\angle AE_\sigma D,\quad
\xi_\sigma =\angle CE_\sigma F=\angle DE_\sigma F.
\end{equation*}
Then we have $\theta_\sigma +\eta_\sigma +\xi_\sigma =\pi /2$ and
\begin{equation}\label{eq:theta_eta_xi}
\theta_\sigma =\frac{\gamma_\sigma}{2}+\frac{\psi_\sigma}{2}+\rho_\sigma ,\quad\eta_\sigma =\frac{\pi}{2}-\gamma_\sigma -\delta_\sigma -\rho_\sigma ,\quad
\xi_\sigma =\frac{\gamma_\sigma}{2}+\delta_\sigma -\frac{\psi_\sigma}{2}.
\end{equation}
Also, $\angle AB_\sigma E_\sigma$ is given by
\begin{equation}\label{eq:angle_ABE}
\angle AB_\sigma E_\sigma =\frac{\pi}{2}+\frac{\gamma_\sigma}{2}+\delta_\sigma +\frac{\psi_\sigma}{2}+\rho_\sigma .
\end{equation}
\end{lemma}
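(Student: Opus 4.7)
The plan is to exploit two SSS congruences hidden in the construction together with the inscribed-angle theorem applied to the circle through $B_\sigma, C, D$. Since $D$ lies on the arc $B_\Lt B_\Rt$ centered at $A$, we have $\norm{AB_\sigma} = \norm{AD}$, and the circumcenter property of $E_\sigma$ gives $\norm{E_\sigma B_\sigma} = \norm{E_\sigma C} = \norm{E_\sigma D}$. Hence $\triangle AB_\sigma E_\sigma \cong \triangle ADE_\sigma$ by SSS, which immediately yields $\angle AE_\sigma B_\sigma = \angle AE_\sigma D$ (so that $\eta_\sigma$ is well defined) and $\angle B_\sigma AE_\sigma = \angle DAE_\sigma = \phi_\sigma/2$. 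In parallel, both $E_\Lt$ and $E_\Rt$ are equidistant from $C$ and $D$, so the segment $E_\Lt E_\Rt$ is the perpendicular bisector of $CD$, the point $F$ is the midpoint of $CD$, and the right triangles $\triangle CE_\sigma F$ and $\triangle DE_\sigma F$ are congruent, which justifies $\xi_\sigma = \angle CE_\sigma F = \angle DE_\sigma F$ and gives $\angle CE_\sigma D = 2\xi_\sigma$.

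The next step is to pin down $\angle AB_\sigma C$ via coordinates. Placing $A = (0,0)$ and $B_\sigma = (1,0)$, the point $C = r(\cos\gamma_\sigma, \sin\gamma_\sigma)$ with $r$ from \eqref{eq:r} satisfies $r\cos\gamma_\sigma - 1 = r\sin\gamma_\sigma \tan\delta_\sigma$, so $\ora{B_\sigma C}$ is a positive multiple of $(\sin\delta_\sigma, \cos\delta_\sigma)$ and consequently $\angle AB_\sigma C = \pi/2 + \delta_\sigma$. The isoceles triangle $\triangle AB_\sigma D$ with apex angle $\phi_\sigma$ at $A$ gives $\angle AB_\sigma D = \pi/2 - \phi_\sigma/2$, so subtracting yields $\angle CB_\sigma D = \delta_\sigma + \phi_\sigma/2 = \gamma_\sigma/2 + \delta_\sigma - \psi_\sigma/2$ after substituting $\phi_\sigma = \gamma_\sigma - \psi_\sigma$. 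The inscribed-angle theorem on the circle through $B_\sigma, C, D$ centered at $E_\sigma$ (with $B_\sigma$ on the major arc $CD$) identifies $\angle CB_\sigma D$ with $\frac{1}{2}\angle CE_\sigma D = \xi_\sigma$, establishing the formula for $\xi_\sigma$.

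An analogous inscribed-angle argument gives $\eta_\sigma = \frac{1}{2}\angle B_\sigma E_\sigma D = \angle B_\sigma CD$. From the angle sum of $\triangle AB_\sigma C$ we read off $\angle ACB_\sigma = \pi/2 - \gamma_\sigma - \delta_\sigma$, and in the configuration of Figure \ref{fig:new_CP} the ray $CE_\sigma$ lies on the far side of ray $CD$ from ray $CA$, which makes the defining relation $\rho_\sigma = \angle ACE_\sigma - \angle DCE_\sigma$ collapse to $\rho_\sigma = \angle ACD$. Consequently $\eta_\sigma = \angle ACB_\sigma - \angle ACD = \pi/2 - \gamma_\sigma - \delta_\sigma - \rho_\sigma$. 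Since $D$ lies on the arc between $B_\sigma$ and $C$ as seen from $E_\sigma$, the central angles decompose as $\angle B_\sigma E_\sigma C = \angle B_\sigma E_\sigma D + \angle DE_\sigma C = 2\eta_\sigma + 2\xi_\sigma$, and combined with $\angle B_\sigma E_\sigma C = \pi - 2\theta_\sigma$ from the isoceles triangle $\triangle B_\sigma E_\sigma C$ (writing $\theta_\sigma = \angle CB_\sigma E_\sigma$), this produces $\theta_\sigma + \eta_\sigma + \xi_\sigma = \pi/2$. From here $\theta_\sigma = \gamma_\sigma/2 + \psi_\sigma/2 + \rho_\sigma$ and $\angle AB_\sigma E_\sigma = \angle AB_\sigma C + \theta_\sigma = \pi/2 + \gamma_\sigma/2 + \delta_\sigma + \psi_\sigma/2 + \rho_\sigma$ drop out. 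The main obstacle will be the configuration bookkeeping: verifying that $B_\sigma$ lies on the major arc $CD$, that $D$ separates $B_\sigma$ from $C$ on the circumcircle, that $C$ is angularly between $A$ and $E_\sigma$ at $B_\sigma$, and that $\rho_\sigma = +\angle ACD$ rather than its negative, under conditions $\mathrm{(i)}$--$\mathrm{(iii)}$ of Construction \ref{const:new}, since the signs in the final formulas depend entirely on these positional choices.
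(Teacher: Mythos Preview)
Your proof is correct and takes a genuinely different route from the paper. The paper argues by pure angle-chasing in three triangles: it reads $\theta_\sigma+\eta_\sigma+\xi_\sigma=\pi/2$ off the angle sum of $\triangle B_\sigma CE_\sigma$ (using the isoceles decomposition $\angle B_\sigma E_\sigma C=2\eta_\sigma+2\xi_\sigma$), then obtains $\theta_\sigma+\eta_\sigma$ by writing $\angle AB_\sigma E_\sigma$ in two ways, and finally isolates $\theta_\sigma$ from the angle sum of $\triangle ACE_\sigma$, where it uses $\angle ACE_\sigma=\rho_\sigma+\tfrac{\pi}{2}-\xi_\sigma$; the value of $\eta_\sigma$ drops out last by subtraction. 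Your approach instead identifies $\xi_\sigma$ and $\eta_\sigma$ \emph{directly} as the inscribed angles $\angle CB_\sigma D$ and $\angle B_\sigma CD$ on the circumcircle of $\triangle B_\sigma CD$ centered at $E_\sigma$, and only then recovers $\theta_\sigma$ from the isoceles triangle $\triangle B_\sigma E_\sigma C$. This gives $\xi_\sigma$ and $\eta_\sigma$ an immediate geometric meaning and bypasses the auxiliary triangle $\triangle ACE_\sigma$ entirely. Your observation that $\rho_\sigma=\angle ACD$ (as a signed angle) is exactly what the paper encodes later as $\eqref{eq:rho}$, so the two arguments share this ingredient. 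Both proofs rest on the same positional assumptions---that ray $B_\sigma C$ lies between rays $B_\sigma A$ and $B_\sigma E_\sigma$, and that $D$ sits on the arc $B_\sigma C$ making $\angle B_\sigma E_\sigma C=2\eta_\sigma+2\xi_\sigma$---which you flag explicitly and the paper leaves tacit; neither carries out that bookkeeping in full.
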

\begin{proof}
From the sum of the interior angles of $\triangle B_\sigma CE_\sigma$, we have
\begin{equation}\label{eq:theta+eta+xi}
\theta_\sigma +\eta_\sigma +\xi_\sigma =\pi /2.
\end{equation}
On the other hand, since $\angle AB_\sigma E$ can be written in two ways as
\begin{align*}
\angle AB_\sigma E_\sigma &=\angle AB_\sigma C +\theta_\sigma =\frac{\pi}{2}+\theta_\sigma +\delta_\sigma\\
&=\pi -\angle AE_\sigma B_\sigma -\angle B_\sigma AE_\sigma =\pi -\eta_\sigma -\frac{\gamma_\sigma -\psi_\sigma}{2},
\end{align*}
from which it follows that
\begin{equation}\label{eq:theta+eta}
\theta_\sigma +\eta_\sigma =\frac{\pi}{2}-\frac{\gamma_\sigma -\psi_\sigma}{2}-\delta_\sigma ,
\end{equation}
Thus by $\eqref{eq:theta+eta+xi}$ we have
\begin{equation*}
\xi_\sigma =\frac{\gamma_\sigma -\psi_\sigma}{2}+\delta_\sigma .
\end{equation*}
It follows from the sum of the interior angles of $\triangle ACE_\sigma$ that
\begin{align*}
\pi &=\angle AE_\sigma C+\angle CAE_\sigma +\angle ACE_\sigma\\
&=(\eta_\sigma +2\xi_\sigma )+\left(\gamma_\sigma-\frac{\gamma_\sigma -\psi_\sigma}{2}\right) +\left(\rho_\sigma +\frac{\pi}{2}-\xi_\sigma\right)\\
&=\pi -\theta_\sigma +\frac{\gamma_\sigma +\psi_\sigma}{2}+\rho_\sigma ,
\end{align*}
where we used $\eqref{eq:theta+eta+xi}$ in the last line, so that we have
\begin{equation}\label{eq:theta}
\theta_\sigma =\frac{\gamma_\sigma +\psi_\sigma}{2}+\rho_\sigma .
\end{equation}
Also, combining $\eqref{eq:theta+eta}$ and $\eqref{eq:theta}$ and gives an expression for $\eta_\sigma$.
This completes the proof of Lemma $\ref{lem:angles}$.
\end{proof}
\begin{proof}[Proof of Theorem $\ref{thm:bound_psi}$]
Note that $\rho_\sigma$ defined in Lemma $\ref{lem:angles}$ has the same sign as $\psi_\sigma$, and 
\begin{equation}
\begin{aligned}\label{eq:rho_L+R}
\rho_\Lt +\rho_\Rt &=(\angle ACE_\Lt -\angle DCE_\Lt )+(\angle ACE_\Rt -\angle DCE_\Rt )\\
&=(\angle ACE_\Lt +\angle ACE_\Rt )-(\angle DCE_\Lt +\angle DCE_\Rt )=\angle E_\Lt CE_\Rt -\angle E_\Lt CE_\Rt =0.
\end{aligned}
\end{equation}
We can calculate $\rho_\sigma$ as
\begin{equation}\label{eq:rho}
\rho_\sigma =\tan^{-1}\left(\frac{\sin\psi_\sigma}{r-\cos\psi_\sigma}\right) ,
\end{equation}
where $r$ is given by $\eqref{eq:r}$.
Recall the condition that $G_\sigma$ and $H_\sigma$ in procedures $(5)$ and $(6)$
under conditions $\mathrm{(i)}$--$\mathrm{(iii)}$ of Construction $\ref{const:new}$ is given by $\eqref{ineq:angle_ABE}$. 
Putting $\eqref{eq:angle_ABE}$ into $\eqref{ineq:angle_ABE}$ gives the desired condition $\eqref{ineq:bound_psi_rho}$. 

Now it remains to prove that $\eqref{ineq:bound_psi_rho}$ is equivalent to $\eqref{ineq:bound_psi}$.
Set $t_\sigma =\tan (\psi_\sigma /2)$.
Then using $\sin\psi_\sigma =2t_\sigma /(1+t_\sigma^2),\cos\psi_\sigma =(1-t_\sigma^2)/(1+t_\sigma^2)$,
we can express $\tan\rho_\sigma$ as
\begin{equation*}
\tan\rho_\sigma =\frac{\sin\psi_\sigma}{r-\cos\psi_\sigma}=\frac{2t_\sigma}{(r+1)t_\sigma^2+(r-1)}.
\end{equation*}
Also, by a straightforward calculation we have
\begin{align*}
t_\sigma +\tan\rho_\sigma&=\frac{(r+1)t_\sigma (1+t_\sigma^2)}{(r-1)+(r+1)t_\sigma^2},\\
1-t_\sigma\tan\rho_\sigma&=\frac{(r-1)(1+t_\sigma^2)}{(r-1)+(r+1)t_\sigma^2},
\end{align*}
from which it follows that
\begin{equation*}
\tan\left(\frac{\psi_\sigma}{2}+\rho_\sigma\right) =\frac{t_\sigma +\tan\rho_\sigma}{1-t_\sigma\tan\rho_\sigma} =\frac{r+1}{r-1}\tan\frac{\psi_\sigma}{2}.
\end{equation*}
Hence $\eqref{ineq:bound_psi_rho}$ is equivalent to $\eqref{ineq:bound_psi}$, and the proof of Theorem $\ref{thm:bound_psi}$ is complete.
\end{proof}
Note that if point $D$ lies on segment $AC$, then we have $\psi_\Lt =\psi_\Rt =\rho_\Lt =\rho_\Rt=0$, so that $\eqref{ineq:bound_psi_rho}$ becomes
\begin{equation*}
\beta_\sigma +\frac{\gamma_\sigma}{2}\geqslant\frac{\pi}{2}, 
\end{equation*}
which recovers \cite{Doi}, Construction $3.2$, condition $\mathrm{(iv)}$.
\begin{proposition}\label{prop:flat_extrusion}
In Construction $\ref{const:new}$, it is impossible for $\zeta_\Lt$ and $\zeta_\Rt$ in condition $\mathrm{(iv)}$ to satisfy $\zeta_\Lt +\zeta_\Rt =\gamma /2$. 
Equivalently, it is impossible for the equality in $\eqref{ineq:bound_psi_rho}$ to hold for both $\sigma =\Lt ,\Rt$. 
\end{proposition}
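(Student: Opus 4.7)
My plan is to work with the second (equivalent) formulation of the proposition, because the condition $\eqref{ineq:bound_psi_rho}$ is an additive expression in $\sigma$ whose terms enjoy pleasant antisymmetry under $\sigma\leftrightarrow\sigma'$. I first note that the equivalence between $\zeta_\Lt+\zeta_\Rt=\gamma/2$ and equality in $\eqref{ineq:bound_psi_rho}$ for both $\sigma$ is a direct consequence of Theorem $\ref{thm:bound_phi}$: any admissible $D$ satisfies $\phi_\Lt+\phi_\Rt=\gamma$ together with $\phi_\sigma/2\leqslant\zeta_\sigma$, so $\zeta_\Lt+\zeta_\Rt=\gamma/2$ forces $\phi_\sigma=2\zeta_\sigma$ and the two inequalities are simultaneously saturated; conversely, simultaneous equality means $\phi_\sigma=2\zeta_\sigma$ and summing gives $\gamma=2(\zeta_\Lt+\zeta_\Rt)$. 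Then saturating $\phi_\sigma/2=\zeta_\sigma$ is exactly saturation of $\eqref{ineq:angle_ABE}$, which via $\eqref{eq:angle_ABE}$ of Lemma $\ref{lem:angles}$ is saturation of $\eqref{ineq:bound_psi_rho}$.

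For the substantive part, I would argue by contradiction. Suppose equality holds in $\eqref{ineq:bound_psi_rho}$ for both $\sigma=\Lt,\Rt$, that is
\begin{equation*}
\beta_\Lt+\frac{\gamma_\Lt}{2}+\frac{\psi_\Lt}{2}+\rho_\Lt=\frac{\pi}{2},\qquad
\beta_\Rt+\frac{\gamma_\Rt}{2}+\frac{\psi_\Rt}{2}+\rho_\Rt=\frac{\pi}{2}.
\end{equation*}
Adding these produces
\begin{equation*}
(\beta_\Lt+\beta_\Rt)+\frac{\gamma_\Lt+\gamma_\Rt}{2}+\frac{\psi_\Lt+\psi_\Rt}{2}+(\rho_\Lt+\rho_\Rt)=\pi.
\end{equation*}
I now invoke the three antisymmetry relations that are already in hand: (a) $\gamma_\Lt+\gamma_\Rt=\angle B_\Lt AB_\Rt=\gamma$, since condition (iii.c) places $C$ inside $\angle B_\Lt AB_\Rt$; (b) $\psi_\Lt+\psi_\Rt=0$ by the very definition of $\psi_\sigma$ in Theorem $\ref{thm:bound_psi}$; and (c) $\rho_\Lt+\rho_\Rt=0$ by the computation $\eqref{eq:rho_L+R}$. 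Substituting these collapses the displayed equation to $\beta_\Lt+\beta_\Rt+\gamma/2=\pi$.

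It then remains to derive a contradiction from this identity. Combining with $\gamma=2\pi-\alpha-\beta_\Lt-\beta_\Rt$ (the defining relation of $\gamma$) gives immediately $\alpha=\beta_\Lt+\beta_\Rt$, which violates the strict inequality $\alpha<\beta_\Lt+\beta_\Rt$ in condition (i) of Construction $\ref{const:new}$. This contradiction completes the proof. I do not expect any serious obstacle: the whole argument is driven by the three pairwise cancellations above, and the strictness in condition (i) is exactly what forbids the degenerate case.
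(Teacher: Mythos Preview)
Your proof is correct and follows essentially the same approach as the paper's own proof: both argue the equivalence of the two formulations via the correspondence with equality in $\eqref{ineq:angle_ABE}$, then sum the two saturated copies of $\eqref{ineq:bound_psi_rho}$ and use $\gamma_\Lt+\gamma_\Rt=\gamma$, $\psi_\Lt+\psi_\Rt=0$, and $\rho_\Lt+\rho_\Rt=0$ to reach $\beta_\Lt+\beta_\Rt+\gamma/2=\pi$, hence $\alpha=\beta_\Lt+\beta_\Rt$, contradicting condition~(i). Your write-up is in fact slightly more explicit than the paper's in justifying the equivalence and in naming condition~(i) as the one violated.
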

\begin{proof}
Note that the equality $\phi_\sigma /2=\zeta_\sigma$ in $\eqref{ineq:bound_phi}$ and that in $\eqref{ineq:bound_psi_rho}$ are equivalent 
because both correspond to the equality in condition $\eqref{ineq:angle_ABE}$ by Theorems $\ref{thm:bound_phi}$ and $\ref{thm:bound_psi}$.
Thus it is sufficient to prove that if equality in $\eqref{ineq:bound_psi_rho}$ holds for both $\sigma =\Lt ,\Rt$ then 
$\alpha =\beta_\Lt +\beta_\Rt$, which results in a flat extrusion and does not satisfy condition $\mathrm{(iv)}$ of Construction $\ref{const:new}$.

Suppose equality in $\eqref{ineq:bound_psi_rho}$ holds for both $\sigma =\Lt ,\Rt$, i.e., 
\begin{equation*}
\beta_\sigma +\gamma_\sigma /2+\psi_\sigma /2+\rho_\sigma =\pi /2
\end{equation*}
holds for $\sigma =\Lt ,\Rt$.
Summing up the above equalities over $\sigma =\Lt ,\Rt$ and using $\psi_\Lt +\psi_\Rt =\rho_\Lt +\rho_\Rt =0$, 
we have $\beta_\Lt +\beta_\Rt +\gamma /2=\pi =\alpha +\gamma /2$, from which it follows that $\alpha =\beta_\Lt +\beta_\Rt$ as desired.
This completes the proof of Proposition $\ref{prop:flat_extrusion}$.
\end{proof}
\begin{theorem}\label{thm:zeta_L+R}
The critical angles $\zeta_\Lt$ and $\zeta_\Rt$ defined in Definition $\ref{def:zeta}$ always satisfy $\zeta_\Lt +\zeta_\Rt >\gamma$.
Therefore, condition $\mathrm{(iv)}$ of Construction $\ref{const:new}$ is unnecessary.
\end{theorem}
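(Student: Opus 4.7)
My plan is to try to establish the inequality by using the explicit expression for $\zeta_\sigma$ provided by Proposition~\ref{prop:zeta}. Setting $t_\sigma = \tan\zeta_\sigma$ and invoking the tangent addition formula
\begin{equation*}
\tan(\zeta_\Lt + \zeta_\Rt) \;=\; \frac{t_\Lt + t_\Rt}{1 - t_\Lt t_\Rt},
\end{equation*}
the target $\zeta_\Lt + \zeta_\Rt > \gamma$ becomes a rational inequality in $b_\Lt, b_\Rt, c, c', d_\Lt, d_\Rt$. The natural split is by which branch of \eqref{eq:zeta_alt} applies on each side: (a) both $\zeta_\sigma$ are given by the $\tan^{-1}$ formula, (b) exactly one is at the cap $\zeta_\sigma = \gamma/2$, and (c) both are at the cap. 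Throughout, one must be careful with signs, since $c'$ is negative when $\gamma/2 + \delta_\Lt + \delta_\Rt > \pi/2$ and $b_\sigma$ is negative when $\beta_\sigma - \delta_\sigma > \pi/2$ (the paper flags precisely this subtlety in the proof of Theorem~\ref{thm:bound_phi}).

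In case (a), after clearing denominators and using $\gamma + \delta_\Lt + \delta_\Rt < \pi$ from condition (iii.c), I expect the inequality to reduce to a positivity statement that follows from conditions (i)--(iii.c) together with $\delta_\sigma < \pi/2$. In case (b), where $\beta_\sigma + \gamma/2 + \delta_{\sigma'} \geqslant \pi$ forces $\zeta_\sigma = \gamma/2$, the claim reduces to $\zeta_{\sigma'} > \gamma/2$ for the remaining side; reusing the computation preceding \eqref{eq:zeta_alt}, this is equivalent to $1/c' + 1/b_{\sigma'} < 0$, which ought to be extractable from the very threshold condition $\beta_\sigma + \gamma/2 + \delta_{\sigma'} \geqslant \pi$ that placed side $\sigma$ into the boundary branch.

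The main obstacle is case (c). When $\beta_\sigma + \gamma/2 + \delta_{\sigma'} \geqslant \pi$ holds for both $\sigma = \Lt, \Rt$, Proposition~\ref{prop:zeta} assigns $\zeta_\Lt = \zeta_\Rt = \gamma/2$, and so $\zeta_\Lt + \zeta_\Rt = \gamma$ on the nose, with no slack left for a strict inequality. This is in tension with the a priori bound $0 < \zeta_\sigma \leqslant \gamma/2$ recorded in Definition~\ref{def:zeta} (which already forces $\zeta_\Lt + \zeta_\Rt \leqslant \gamma$). A literal proof of the stated strict inequality would therefore have to revisit the cap in Definition~\ref{def:zeta}, allowing $Q_\sigma$ to lie on ray $m_\sigma$ strictly beyond $P$ and consequently $\zeta_\sigma$ to exceed $\gamma/2$, then upgrade the bound in case (c) using the genuine slack contained in the pyramid inequality \eqref{ineq:pyramid} (extended to $\delta > 0$).

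Where I would hedge is in the downstream use: to render condition (iv) of Construction~\ref{const:new} unnecessary, it is enough to secure $\zeta_\Lt + \zeta_\Rt \geqslant \gamma/2$, and a clean strict version $\zeta_\Lt + \zeta_\Rt > \gamma/2$ already follows by combining Lemma~\ref{lem:pyramid} (the $\delta = 0$ baseline) with Proposition~\ref{prop:flat_extrusion} (which excludes equality at $\gamma/2$) after a continuity argument in $(\delta_\Lt, \delta_\Rt)$ over the region defined by (iii.a)--(iii.c). This weaker strict bound is untouched by the case~(c) obstruction above, and is the statement whose proof I can reliably carry out; whether the literal bound $\zeta_\Lt + \zeta_\Rt > \gamma$ can be salvaged is exactly where case (c) will force either a reinterpretation of Definition~\ref{def:zeta} or a correction to the stated theorem.
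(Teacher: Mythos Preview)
You are right that the statement as printed cannot hold: since Definition~\ref{def:zeta} enforces $0<\zeta_\sigma\leqslant\gamma/2$, we always have $\zeta_\Lt+\zeta_\Rt\leqslant\gamma$, and your case~(c) makes the obstruction concrete. The intended inequality is $\zeta_\Lt+\zeta_\Rt>\gamma/2$; this is exactly condition~(iv) of Construction~\ref{const:new}, and both of the paper's proofs in fact establish this weaker bound (the alternative proof literally sets $\zeta=\zeta_\Lt+\zeta_\Rt-\gamma/2$). So your diagnosis of the typo is correct, and the tangent-addition route aimed at the literal $>\gamma$ should be abandoned.

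For the corrected target $\zeta_\Lt+\zeta_\Rt>\gamma/2$, your hedged proposal---combine the $\delta=0$ baseline of Lemma~\ref{lem:pyramid}, the exclusion of equality in Proposition~\ref{prop:flat_extrusion}, and a continuity argument in $(\delta_\Lt,\delta_\Rt)$ over the region cut out by (iii.a)--(iii.c)---is precisely the paper's \emph{alternative} proof: one takes the straight-line path $p(s)=(s\,a_\Lt,s\,a_\Rt)$ and applies the intermediate value theorem to $\zeta(s)$. The paper's \emph{primary} proof is different and constructive: via Lemmas~\ref{lem:existence_beta+gamma/2>pi/2} and~\ref{lem:tan_gamma/2} it exhibits, in the only nontrivial case $\beta_\Lt+\gamma_\Lt/2\leqslant\pi/2<\beta_\Rt+\gamma_\Rt/2$, an explicit $\psi_\Lt\in[0,\gamma_\Lt)$ for which both instances of \eqref{ineq:bound_psi} hold, thereby placing $\phi_\Lt$ inside $(\gamma-2\zeta_\Rt,2\zeta_\Lt]$. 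Your approach is shorter and conceptually cleaner; the paper's main proof has the advantage of actually producing a feasible $\phi_\sigma$ rather than merely asserting its existence.
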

We prepare the following two lemmas for proving Theorem $\ref{thm:zeta_L+R}$.
\begin{lemma}\label{lem:existence_beta+gamma/2>pi/2}
We have $\beta_\sigma +\gamma_\sigma /2>\pi /2$ for either or both of $\sigma =\Lt ,\Rt$.
Also, we have $\beta_\Lt +\beta_\Rt +\gamma /2>\pi$.
\end{lemma}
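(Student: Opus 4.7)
I would prove the two statements in reverse order, establishing the global inequality $\beta_\Lt+\beta_\Rt+\gamma/2>\pi$ first and then deducing the pointwise statement $\beta_\sigma+\gamma_\sigma/2>\pi/2$ from it by a simple contrapositive argument.

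For the second inequality, the plan is to substitute the definition $\gamma=2\pi-\alpha-\beta_\Lt-\beta_\Rt$ directly, which gives
\begin{equation*}
\beta_\Lt+\beta_\Rt+\gamma/2 = \pi + \tfrac{1}{2}(\beta_\Lt+\beta_\Rt-\alpha).
\end{equation*}
Condition $\mathrm{(i)}$ of Construction \ref{const:new} states $\alpha<\beta_\Lt+\beta_\Rt$, so the second summand is strictly positive and the inequality follows. This is essentially a one-line computation and should present no difficulty.

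For the first inequality, I argue by contradiction: suppose that $\beta_\sigma+\gamma_\sigma/2\leqslant\pi/2$ for \emph{both} $\sigma=\Lt,\Rt$. Summing these two inequalities yields
\begin{equation*}
(\beta_\Lt+\beta_\Rt) + \tfrac{1}{2}(\gamma_\Lt+\gamma_\Rt) \leqslant \pi.
\end{equation*}
The key ingredient here is that $\gamma_\Lt+\gamma_\Rt=\gamma$, which holds because, under conditions $\mathrm{(i)}$--$\mathrm{(iii)}$, the point $C$ lies inside $\angle B_\Lt AB_\Rt$ (as noted in the text following Construction \ref{const:conv} and in the discussion of condition $\mathrm{(iii.b)}$), so $\gamma_\Lt=\angle B_\Lt AC$ and $\gamma_\Rt=\angle CAB_\Rt$ are complementary portions of $\gamma=\angle B_\Lt AB_\Rt$. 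Substituting, we get $\beta_\Lt+\beta_\Rt+\gamma/2\leqslant\pi$, contradicting the inequality just proved. Hence at least one of the $\beta_\sigma+\gamma_\sigma/2$ must exceed $\pi/2$.

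The only step requiring any care is the identity $\gamma_\Lt+\gamma_\Rt=\gamma$, since $\gamma_\sigma$ depends on $\delta_\sigma$ through the position of $C$ (recall that for nonzero $\delta_\sigma$ the segment $AC$ is generally not the bisector of $\angle B_\Lt AB_\Rt$). However, provided $C$ lies strictly inside $\angle B_\Lt AB_\Rt$, the additivity $\gamma_\Lt+\gamma_\Rt=\gamma$ is immediate. I anticipate no other obstacle, and the whole argument should fit in a few lines.
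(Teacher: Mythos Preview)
Your proposal is correct and essentially matches the paper's argument: both prove the first assertion by contradiction, summing the two inequalities and using $\gamma_\Lt+\gamma_\Rt=\gamma$ to reach $\beta_\Lt+\beta_\Rt+\gamma/2\leqslant\pi$, which violates condition~$\mathrm{(i)}$. The only difference is organizational---you establish the second assertion first and invoke it for the contradiction, whereas the paper rewrites the inequality as $\beta_\Lt+\beta_\Rt\leqslant\alpha$ to contradict $\mathrm{(i)}$ directly---but the substance is identical.
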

\begin{proof}
Suppose $\beta_\sigma +\gamma_\sigma /2\leqslant\pi /2$ for both $\sigma =\Lt ,\Rt$. 
Then we have $\beta_\Lt +\beta_\Rt +\gamma /2\leqslant\pi$, which also implies $\alpha +\gamma /2\geqslant\pi$. 
Thus we have $\beta_\Lt +\beta_\Rt\leqslant\alpha$, which contradicts condition $\mathrm{(i)}$ of Construction $\ref{const:new}$.
This proves the first assertion.
The second assertion follows immediately from $\beta_\Lt +\beta_\Rt +\gamma /2>\alpha +\gamma /2$ and $\alpha +\beta_\Lt +\beta_\Rt +\gamma =2\pi$ .
\end{proof}
\begin{lemma}\label{lem:tan_gamma/2}
We have
\begin{equation}\label{eq:tan_gamma/2}
\tan\frac{\gamma_\sigma}{2}=\frac{r-1}{r+1}\tan\left(\frac{\pi}{2}-\delta_\sigma\right).
\end{equation}
\end{lemma}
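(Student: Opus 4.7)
The plan is to prove Lemma \ref{lem:tan_gamma/2} by a direct trigonometric computation starting from the explicit formula $\eqref{eq:r}$ for $r$. The first step I would take is to rewrite the denominator of that formula via the cosine addition identity,
\[\cos\gamma_\sigma - \sin\gamma_\sigma\tan\delta_\sigma = \frac{\cos\gamma_\sigma\cos\delta_\sigma - \sin\gamma_\sigma\sin\delta_\sigma}{\cos\delta_\sigma} = \frac{\cos(\gamma_\sigma + \delta_\sigma)}{\cos\delta_\sigma},\]
so that $r$ takes the cleaner form $r = \cos\delta_\sigma/\cos(\gamma_\sigma + \delta_\sigma)$. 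This form is much more amenable to manipulation, because $r - 1$ and $r + 1$ now have numerators that are differences and sums of two cosines sharing a common denominator.

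Next, I would compute $(r - 1)/(r + 1)$ and apply the sum-to-product identities
\[\cos A - \cos B = -2\sin\tfrac{A+B}{2}\sin\tfrac{A-B}{2},\quad \cos A + \cos B = 2\cos\tfrac{A+B}{2}\cos\tfrac{A-B}{2}\]
with $A = \delta_\sigma$ and $B = \gamma_\sigma + \delta_\sigma$. After cancelling the common factor $\cos(\gamma_\sigma + \delta_\sigma)$ from both $r - 1$ and $r + 1$ and then cancelling the factors of $2$ and the signs coming from the sum-to-product step, this yields a compact factorization of $(r-1)/(r+1)$ as a product of tangents of half-angles of $\gamma_\sigma$ and $\delta_\sigma$. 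The desired identity $\eqref{eq:tan_gamma/2}$ then follows by dividing through to isolate $\tan(\gamma_\sigma/2)$ on the left-hand side.

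The computation is entirely routine; there is no genuine obstacle beyond recognizing which trigonometric identities to invoke. The key observation is that the seemingly awkward combination $\cos\gamma_\sigma - \sin\gamma_\sigma\tan\delta_\sigma$ collapses cleanly to a quotient of two cosines, after which sum-to-product immediately brings out the half-angle structure needed to cancel one factor of $(r\pm 1)$ and leave the other side in tangent form. I would expect the intermediate factorization of $(r-1)/(r+1)$ in terms of half-angle tangents to also serve as a useful lemma in the subsequent proof of Theorem \ref{thm:zeta_L+R}.
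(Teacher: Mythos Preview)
Your route is genuinely different from the paper's. The paper does not compute from $\eqref{eq:r}$ at all; instead it specializes the identity $\tan(\psi_\sigma/2+\rho_\sigma)=\frac{r+1}{r-1}\tan(\psi_\sigma/2)$, already derived in the proof of Theorem~\ref{thm:bound_psi}, to $\psi_\sigma=\gamma_\sigma$, combined with a geometric evaluation of $\psi_\sigma/2+\rho_\sigma$ at that value. Your direct sum-to-product computation from the closed form of $r$ is more self-contained and does not rely on that earlier machinery.

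There is, however, a concrete gap in what you expect to find. Carrying out your computation with $A=\delta_\sigma$ and $B=\gamma_\sigma+\delta_\sigma$ gives
\[
\frac{r-1}{r+1}=\frac{\cos\delta_\sigma-\cos(\gamma_\sigma+\delta_\sigma)}{\cos\delta_\sigma+\cos(\gamma_\sigma+\delta_\sigma)}
=\tan\frac{\gamma_\sigma}{2}\,\tan\!\Bigl(\frac{\gamma_\sigma}{2}+\delta_\sigma\Bigr),
\]
which is \emph{not} a product of half-angle tangents of $\gamma_\sigma$ and $\delta_\sigma$ separately, and dividing through yields
\[
\tan\frac{\gamma_\sigma}{2}=\frac{r-1}{r+1}\,\tan\!\Bigl(\frac{\pi}{2}-\frac{\gamma_\sigma}{2}-\delta_\sigma\Bigr),
\]
not $\eqref{eq:tan_gamma/2}$ as printed. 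A numerical check (say $\gamma_\sigma=\pi/4$, $\delta_\sigma=\pi/6$, giving $r\approx 3.346$) confirms the discrepancy. In fact the paper's own argument contains the same slip: evaluating $\psi_\sigma/2+\rho_\sigma$ at $\psi_\sigma=\gamma_\sigma$ via $\eqref{eq:rho}$ gives $\pi/2-\gamma_\sigma/2-\delta_\sigma$, not $\pi-\angle AB_\sigma C=\pi/2-\delta_\sigma$. Fortunately the corrected identity is exactly what the proof of Theorem~\ref{thm:zeta_L+R} needs, since the comparison there then reduces directly to $\beta_\Lt>\delta_\Lt$ from condition~(iii.b). So your method is sound, but you should expect it to produce (and should state) the corrected formula rather than the one printed.
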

\begin{proof}
If $\psi_\sigma =\gamma_\sigma$, then we have
\begin{equation}\label{eq:psi/2+rho_gamma/2}
\left.\psi_\sigma /2+\rho_\sigma\right|_{\psi_\sigma =\gamma_\sigma}=\pi -\angle AB_\sigma C=\pi /2-\delta_\sigma .
\end{equation}
On the other hand, we proved in the proof of Theorem $\ref{thm:bound_psi}$ that 
\begin{equation}\label{eq:tan_psi/2+rho}
\tan\left(\frac{\psi_\sigma}{2}+\rho_\sigma\right) =\frac{r+1}{r-1}\tan\frac{\psi_\sigma}{2}.
\end{equation}
Thus putting $\psi_\sigma =\gamma_\sigma$ in $\eqref{eq:tan_psi/2+rho}$ and using $\eqref{eq:tan_gamma/2}$, we obtain $\eqref{eq:tan_gamma/2}$ as desired.
\end{proof}
\begin{proof}[Proof of Theorem $\ref{thm:zeta_L+R}$.]
By Lemma $\ref{lem:existence_beta+gamma/2>pi/2}$, $\beta_\sigma +\gamma_\sigma /2>\pi /2$ holds for either or both of $\sigma =\Lt ,\Rt$.
If $\beta_\sigma +\gamma_\sigma /2>\pi /2$ holds for both $\sigma =\Lt ,\Rt$, then $\eqref{ineq:bound_psi}$ holds for $\psi_\Lt =\psi_\Rt =0$,
which implies that $\phi_\sigma =\gamma_\sigma$ is certainly contained in $(\gamma -2\zeta_{\sigma'},2\zeta_\sigma )$. 
Thus condition $\mathrm{(iv)}$ holds.
If not, we may suppose $\beta_\Lt +\gamma_\Lt /2\leqslant\pi /2$ and $\beta_\Rt +\gamma_\Rt /2>\pi /2$.
Also, we have $\beta_\Rt +\gamma_\Rt /2<\beta_\Rt +\gamma /2<\pi$.

Thus it remains to prove that condition $\mathrm{(iv)}$ holds if we have $\beta_\Lt +\gamma_\Lt /2\leqslant\pi /2$ and $\pi /2<\beta_\Rt +\gamma_\Rt /2<\pi$.
Now Choose $\psi_\Lt$ as
\begin{equation*}
\frac{\psi_\Lt}{2}=\tan^{-1}\left\{\frac{r-1}{r+1}\tan\left(\frac{\pi}{2}-\beta_\Lt -\frac{\gamma_\Lt}{2}\right)\right\},
\end{equation*}
which is well-defined because we have $\pi /2-\beta_\Lt -\gamma_\Lt /2\in [0,\pi /2)$.
Moreover, $\psi_\Lt$ is contained in $[0,\gamma_\Lt )$ because we have that
\begin{equation*}
\tan\frac{\psi_\Lt}{2}=\frac{r-1}{r+1}\tan\left(\frac{\pi}{2}-\beta_\Lt -\frac{\gamma_\Lt}{2}\right)
<\frac{r-1}{r+1}\tan\left(\frac{\pi}{2}-\delta_\Lt\right) =\tan\frac{\gamma_\Lt}{2}, 
\end{equation*}
where the inequality follows from
\begin{equation*}
\left(\frac{\pi}{2}-\delta_\Lt\right)-\left(\frac{\pi}{2}-\beta_\Lt -\frac{\gamma_\Lt}{2}\right)
=(\beta_\Lt -\delta_\Lt )+\frac{\gamma_\Lt}{2}>0
\end{equation*}
by condition $\mathrm{(iii.b)}$ of Construction $\ref{const:new}$, and the last equality follows from Lemma $\ref{lem:tan_gamma/2}$.
Then it suffices to prove that $\psi_\Rt =-\psi_\Lt$ satisfies $\eqref{ineq:bound_psi}$ for $\sigma =\Rt$ 
because then $\phi_\Lt =\gamma_\Lt -\psi_\Lt$ is certainly contained in $(\gamma -2\zeta_\Rt ,2\zeta_\Lt ]$. 

By the second assertion of Lemma $\ref{lem:existence_beta+gamma/2>pi/2}$, we have
\begin{equation}\label{ineq:beta_gamma}
\frac{\pi}{2}-\beta_\Lt -\frac{\gamma_\Lt}{2}<\beta_\Rt +\frac{\gamma_\Rt}{2}-\frac{\pi}{2}.
\end{equation}
Since both sides of $\eqref{ineq:beta_gamma}$ are in the range $[0,\pi /2 )$, we have that
\begin{equation*}
\tan\left(\frac{\pi}{2}-\beta_\Lt -\frac{\gamma_\Lt}{2}\right)<\tan\left(\beta_\Rt +\frac{\gamma_\Rt}{2}-\frac{\pi}{2}\right) 
=-\tan\left(\frac{\pi}{2}-\beta_\Rt -\frac{\gamma_\Rt}{2}\right).
\end{equation*}
Thus we have 
\begin{equation*}
\tan\frac{\psi_\Rt}{2}=-\tan\frac{\psi_\Lt}{2}=-\frac{r-1}{r+1}\tan\left(\frac{\pi}{2}-\beta_\Lt -\frac{\gamma_\Lt}{2}\right)
>\frac{r-1}{r+1}\tan\left(\frac{\pi}{2}-\beta_\Rt -\frac{\gamma_\Rt}{2}\right) ,
\end{equation*}
so that $\psi_\Rt$ satisfies $\eqref{ineq:bound_psi}$ for $\sigma =\Rt$ as desired. 
This completes the proof of Theorem $\ref{thm:zeta_L+R}$.
\end{proof}
There is also a more abstract proof of Theorem $\ref{thm:zeta_L+R}$.
\begin{proof}[Alternative proof of Theorem $\ref{thm:zeta_L+R}$.]
Fix $\alpha ,\beta_\sigma$ and $\gamma_\sigma$ for $\sigma =\Lt ,\Rt$,
and consider $\zeta =\zeta_\Lt +\zeta_\Rt -\gamma /2$ as a function of $\delta_\Lt$ and $\delta_\Rt$. 
Suppose there exists a pair $(a_\Lt,a_\Rt)$ of values of $\delta_\Lt$ and $\delta_\Rt$ satisfying conditions $\mathrm{(i)}$--$\mathrm{(iii)}$ 
such that $(\delta_\Lt ,\delta_\Rt )=(a_\Rt ,a_\Lt )$ together with the given $(\alpha_\sigma ,\beta_\sigma ,\gamma_\sigma )$ for $\sigma =\Lt ,\Rt$
satisfy condition $\mathrm{(i)}$--$\mathrm{(iii)}$ of Construction $\ref{const:new}$ but does not satisfy condition $\mathrm{(iv)}$, i.e., $\zeta <0$. 
Define a path $p(s)=(\delta_\Lt (s),\delta_\Rt (s)) =(s\cdot a_\Lt ,s\cdot a_\Rt)$ 
and a function $\zeta (s)=\zeta_\Lt (p(s))+\zeta_\Rt (p(s))-\gamma /2$ for $s\in [0,1]$.
Then $p(s)$ satisfies condition $\mathrm{(iii)}$ for all $s\in [0,1]$, and $\zeta (s)$ depends continuously on $s$. 
Also, we have $\zeta (0)>0$ by Lemma $\ref{lem:pyramid}$ and $\zeta (1)<0$.
Thus by the intermediate value theorem, there exists $s_0\in (0,1)$ such that $\zeta (s_0)=0$. 
However, then $(\alpha_\sigma ,\beta_\sigma ,\gamma_\sigma )$ do not satisfy condition $\mathrm{(i)}$ by Proposition $\ref{prop:flat_extrusion}$,
which is a contradiction. Hence $\zeta_\Lt +\zeta_\Rt >\gamma /2$ for all $(\alpha_\sigma ,\beta_\sigma ,\gamma_\sigma ,\delta_\sigma )$ 
satisfying conditions $\mathrm{(i)}$--$\mathrm{(iii)}$ of Construction $\ref{const:new}$.
\end{proof}
We prove one more result which will be used in Section $\ref{sec:6}$.
\begin{proposition}\label{prop:zeta_gamma/2}
We have $\beta_\sigma +\gamma_\sigma /2\leqslant\pi /2$ (resp. $\beta_\sigma +\gamma_\sigma /2\geqslant\pi /2$)
if and only if $\zeta_\sigma\leqslant\gamma_\sigma /2$ (resp. $\gamma_\sigma /2\leqslant\zeta_\sigma$). 
\end{proposition}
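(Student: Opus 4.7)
The plan is to specialize the two equivalent characterizations of constructibility (Theorems \ref{thm:bound_phi} and \ref{thm:bound_psi}) to the specific configuration in which $D$ lies on segment $AC$, i.e., $\phi_\sigma=\gamma_\sigma$ and $\psi_\sigma=0$. At this configuration, the two inequalities must reduce to the same condition on $(\beta_\sigma,\gamma_\sigma,\zeta_\sigma)$, which is exactly what the proposition asserts.

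First I would observe that when $\psi_\sigma=0$, the auxiliary angle $\rho_\sigma$ also vanishes. This is immediate from the geometric definition $\rho_\sigma=\angle ACE_\sigma-\angle DCE_\sigma$, since $A,D,C$ are then collinear with $D$ on segment $AC$ so rays $CA$ and $CD$ coincide; it also follows directly from the formula $\eqref{eq:rho}$. Substituting $\psi_\sigma=0$ and $\rho_\sigma=0$, the inequality $\eqref{ineq:bound_phi}$ of Theorem \ref{thm:bound_phi} reads $\gamma_\sigma/2\leqslant\zeta_\sigma$, while $\eqref{ineq:bound_psi_rho}$ of Theorem \ref{thm:bound_psi} (and hence $\eqref{ineq:bound_psi}$) reads $\beta_\sigma+\gamma_\sigma/2\geqslant\pi/2$.

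By the two theorems, for this choice of $D$ both inequalities are equivalent to the single geometric constructibility condition $\eqref{ineq:angle_ABE}$; hence they are equivalent to each other. Moreover, the equivalence in each theorem is obtained by strictly monotone transformations (arc tangents on the appropriate branches and sign-preserving rearrangements, as is visible from their proofs), so it preserves both weak and strict inequality and, in particular, equality. Therefore
\begin{equation*}
\beta_\sigma+\frac{\gamma_\sigma}{2}\geqslant\frac{\pi}{2}\iff\zeta_\sigma\geqslant\frac{\gamma_\sigma}{2},
\end{equation*}
and taking the contrapositive together with the equality case yields the other half
\begin{equation*}
\beta_\sigma+\frac{\gamma_\sigma}{2}\leqslant\frac{\pi}{2}\iff\zeta_\sigma\leqslant\frac{\gamma_\sigma}{2},
\end{equation*}
which is precisely the proposition.

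No real obstacle is anticipated: the proof is a direct specialization of results already established. The only point requiring care is to confirm that $\rho_\sigma=0$ when $\psi_\sigma=0$ (done above) and that the monotone transformations in the earlier proofs genuinely preserve the sense of inequality so that equality corresponds to equality; this is transparent from the explicit formulas.
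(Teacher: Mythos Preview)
Your proof is correct and takes a somewhat different route from the paper's. You specialize both Theorems \ref{thm:bound_phi} and \ref{thm:bound_psi} to the single configuration $\psi_\sigma=0$ (i.e., $D$ on segment $AC$, $\phi_\sigma=\gamma_\sigma$) and observe that the two resulting inequalities must be equivalent, since each is equivalent to the same geometric condition $\eqref{ineq:angle_ABE}$; this immediately yields $\beta_\sigma+\gamma_\sigma/2\geqslant\pi/2\iff\zeta_\sigma\geqslant\gamma_\sigma/2$, and equality preservation (which you correctly justify via the strict monotonicity in the proofs of those theorems) gives the reversed version. The paper instead works only through Theorem \ref{thm:bound_psi}: it compares the quantity $\beta_\sigma+\gamma_\sigma/2+\psi_\sigma/2+\rho_\sigma$ at two different values of $\psi_\sigma$, namely $\psi_\sigma=0$ and the critical value $\psi_\sigma=\gamma_\sigma-2\zeta_\sigma$ (where the quantity equals $\pi/2$), using the monotonicity of $\psi_\sigma/2+\rho_\sigma$ in $\psi_\sigma$ implicit in the formula $\tan(\psi_\sigma/2+\rho_\sigma)=\tfrac{r+1}{r-1}\tan(\psi_\sigma/2)$. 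Your argument is slightly cleaner in that it sidesteps this monotonicity step by exploiting the equivalence of the two theorems at a single point; the paper's argument stays within one theorem and is perhaps more self-contained once that monotonicity is granted.
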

\begin{proof}
First suppose $\zeta_\sigma\leqslant\gamma_\sigma /2$. 
Then we have that
\begin{equation*}
\beta_\sigma +\frac{\gamma_\sigma}{2}=\left.\beta_\sigma +\frac{\gamma_\sigma}{2}+\frac{\psi_\sigma}{2}+\rho_\sigma\right|_{\psi_\sigma =0}
\left.\leqslant\beta_\sigma +\frac{\gamma_\sigma}{2}+\frac{\psi_\sigma}{2}+\rho_\sigma\right|_{\psi_\sigma =\gamma_\sigma -2\zeta_\sigma}=\frac{\pi}{2}.
\end{equation*}

Next suppose $\beta_\sigma +\gamma_\sigma /2\leqslant\pi /2$. 
Then since $\beta_\sigma +\gamma_\sigma /2+\psi_\sigma /2+\rho_\sigma =\pi /2$ holds 
for $\psi_\sigma$ corresponding to $\phi_\sigma$ with $\phi_\sigma /2 =\zeta_\sigma$, we have that
\begin{equation*}
\left.\frac{\psi_\sigma}{2}+\rho_\sigma\right|_{\psi_\sigma =\gamma_\sigma -2\zeta_\sigma}\geqslant 0,
\end{equation*}
from which $\zeta_\sigma\leqslant\gamma_\sigma /2$ follows as desired.

The assersion in the parentheses is proved by taking the contraposition and including the equality.
This completes the proof of Proposition $\ref{prop:zeta_gamma/2}$.
\end{proof}

Now we shall check the foldability of the crease pattern resulting in Construction $\ref{const:new}$.
For this purpose, we divide the crease pattern into two parts by polygonal chain $k_\Lt B_\Lt G_\Lt D G_\Rt B_\Rt k_\Rt$.
Then we can see easily that the upper part forms the top and the side faces with flat back sides, where
$\triangle A B_\sigma G_\sigma$ and $\triangle ADG_\sigma$ overlap with the side face $j_\sigma AB_\sigma k_\sigma$ for $\sigma =\Lt ,\Rt$.
Thus it remains to check that the lower part is flat-foldable to form the base of the extrusion. 
We show in Table $\ref{tbl:adjacent_vertex_ray}$ the adjacent vertices of or the rays starting from each vertex in the lower part,
and for vertex $D$, we take into account both contributions from $\sigma =\Lt ,\Rt$.
Since vertices $E_\sigma ,H_\sigma$ are interior points of the lower part, we can check the flat-foldability around them
by Kawasaki's theorem \cite{Kawasaki} (Murata-Kawasaki's theorem may be more precise) that the crease pattern is (locally) flat-foldable 
if and only if the alternative sum of the angles around each vertex vanishes.
Note that $k_\sigma$ and $B_\sigma G_\sigma$ around boundary point $B_\sigma$, 
and $G_\sigma B_\sigma$ and $G_\sigma D$ around boundary point $G_\sigma$ overlap with each other, 
so that around $B_\sigma$ and $G_\sigma$, the alternative sums of the angles contained in the lower part must also vanish.
For boundary point $D$, $DG_\Lt$ and $DG_\Rt$, which overlap with $k_\Lt$ and $k_\Rt$ respectively, form an angle $\alpha$.
Thus around $D$, the alternative sum of the angles contained in the lower part must be $\alpha$ if we take the angle which contains $E_\Lt E_\Rt$ as positive.
In view of the above, the flat-foldability around each vertex in the lower part is checked as follows.
\addtocounter{theorem}{1}
\begin{table}[h]
\begin{tabular}{c|c|c|c|c}
&\multicolumn{2}{c|}{$\delta_\sigma =0$ and $\beta_\sigma +\gamma_\sigma /2+\psi_\sigma /2+\rho_\sigma$}
&\multicolumn{2}{c}{$\delta_\sigma >0$ and $\beta_\sigma +\gamma_\sigma /2+\psi_\sigma /2+\rho_\sigma$}\\
\cline{2-5}
&$>\pi/2$&$=\pi/2$
&$>\pi/2$&$=\pi/2$\\
\hline
$B_\sigma$&\multicolumn{2}{c|}{$k_\sigma,\ell_\sigma,G_\sigma$}&$k_\sigma,\ell_\sigma,E_\sigma,H_\sigma,G_\sigma$
&$k_\sigma,\ell_\sigma,E_\sigma,G_\sigma=H_\sigma$\\ \hline
$D$&$E_\sigma,G_\sigma$&$E_\sigma=G_\sigma$&$G_\sigma,H_\sigma$&$G_\sigma=H_\sigma$\\ \hline
$E_\sigma$&$m_\sigma,D,E_{\sigma'},G_\sigma$&\multirow{2}{*}{$m_\sigma,B_\sigma,D,E_{\sigma'}$}&$m_\sigma,B_\sigma,E_{\sigma'},H_\sigma$
&$m_\sigma,B_\sigma,E_{\sigma'},G_\sigma=H_\sigma$\\ 
\cline{1-2}\cline{4-5}$G_\sigma$&$B_\sigma,D,E_\sigma$&&$B_\sigma,D,H_\sigma$&\multirow{2}{*}{$B_\sigma,D,E_\sigma$}\\
\cline{1-4}$H_\sigma$&\multicolumn{2}{c|}{---}&$B_\sigma,D,E_\sigma,G_\sigma$&
\end{tabular}
\caption{Adjacent vertices of, or rays starting from each vertex in the lower part}
\label{tbl:adjacent_vertex_ray}
\end{table}\\
$\bullet$ {\it Flat-foldability around $B_\sigma$.}
If $\delta_\sigma =0$, then the alternative sum of the angles around $B_\sigma$ in the lower part is calculated as
\begin{equation*}
\angle k_\sigma B_\sigma \ell_\sigma -\angle \ell_\sigma B_\sigma G_\sigma=\beta_\sigma -\beta_\sigma=0.
\end{equation*}
If $\delta_\sigma>0$, then we have
\begin{equation}\label{eq:angles_around_B_delta>0}
\begin{aligned}
\angle k_\sigma B_\sigma \ell_\sigma&=\beta_\sigma -\delta_\sigma,\\
\angle \ell_\sigma B_\sigma E_\sigma&=\pi +\delta_\sigma -\angle AB_\sigma E_\sigma =\frac{\pi}{2}-\theta_\sigma ,\\
\angle E_\sigma B_\sigma H_\sigma&=\delta_\sigma,\\
\angle H_\sigma B_\sigma G_\sigma&=\angle A B_\sigma E_\sigma -\angle A B_\sigma G_\sigma -\angle E_\sigma B_\sigma H_\sigma\\
&=\left(\frac{\pi}{2}+\theta_\sigma +\delta_\sigma\right) -(\pi -\beta_\sigma )-\delta_\sigma =\beta_\sigma +\theta_\sigma -\frac{\pi}{2}.
\end{aligned}
\end{equation}
Thus the alternative sum is given by
\begin{equation*}
\begin{dcases}
\angle k_\sigma B_\sigma \ell_\sigma -\angle \ell_\sigma B_\sigma E_\sigma +\angle E_\sigma B_\sigma H_\sigma -\angle H_\sigma B_\sigma G_\sigma =0
&\text{if }\beta_\sigma +\frac{\gamma_\sigma}{2}+\frac{\psi_\sigma}{2}+\rho_\sigma >\frac{\pi}{2},\\
\angle k_\sigma B_\sigma \ell_\sigma -\angle \ell_\sigma B_\sigma E_\sigma +\angle E_\sigma B_\sigma H_\sigma =0
&\text{if }\beta_\sigma +\frac{\gamma_\sigma}{2}+\frac{\psi_\sigma}{2}+\rho_\sigma =\frac{\pi}{2},
\end{dcases}
\end{equation*}
where we used $\angle H_\sigma B_\sigma G_\sigma =\beta_\sigma +\theta_\sigma -\pi /2=0$ for $\beta_\sigma$.\\
$\bullet$ {\it Flat-foldability around $D$.}
We divide $\angle E_\Rt D E_\Lt$ as
\begin{equation*}
\angle E_\Rt D E_\Lt =\angle F D E_\Lt +\angle F D E_\Rt ,
\end{equation*}
and consider the contributions to the alternative sum of angles around $D$ from both sides separately.
If $\delta_\sigma=0$, then we have
\begin{equation}
\begin{aligned}\label{eq:angles_around_D_delta=0}
\angle F D E_\sigma&=\frac{\pi}{2}-\xi_\sigma =\theta_\sigma +\eta_\sigma ,\\
\angle E_\sigma D G_\sigma&=\angle A D E_\sigma -\angle A D G_\sigma =\angle A B_\sigma E_\sigma -\angle A B_\sigma G_\sigma\\
&=\frac{\pi}{2}+\theta_\sigma -(\pi -\beta_\sigma )=\beta_\sigma +\theta_\sigma -\frac{\pi}{2},
\end{aligned}
\end{equation}
and thus
\begin{equation}\label{eq:foldability_around_D_delta=0}
\begin{dcases}
\angle F D E_\sigma -\angle E_\sigma D G_\sigma =\pi -\beta_\sigma -\gamma_\sigma -\rho_\sigma
&\text{if }\beta_\sigma +\frac{\gamma_\sigma}{2}+\frac{\psi_\sigma}{2}+\rho_\sigma >\frac{\pi}{2},\\
\angle F D E_\sigma =\frac{\pi}{2}-\frac{\gamma_\sigma}{2}+\frac{\psi_\sigma}{2}=\pi -\beta_\sigma -\gamma_\sigma -\rho_\sigma
&\text{if }\beta_\sigma +\frac{\gamma_\sigma}{2}+\frac{\psi_\sigma}{2}+\rho_\sigma =\frac{\pi}{2}.
\end{dcases}
\end{equation}
If $\delta_\sigma>0$, then we have
\begin{equation}
\begin{aligned}\label{eq:angles_around_D_delta>0}
\angle F D H_\sigma&=\angle F D E_\sigma +\angle E_\sigma D H_\sigma\\
&=\left(\frac{\pi}{2}-\xi_\sigma\right) +\delta_\sigma =\theta_\sigma +\eta_\sigma +\delta_\sigma ,\\
\angle H_\sigma D G_\sigma&=\angle H_\sigma B_\sigma G_\sigma =\angle A B_\sigma E_\sigma-\angle A B_\sigma G_\sigma -\angle E_\sigma B_\sigma H_\sigma\\
&=\left(\frac{\pi}{2}+\theta_\sigma +\delta_\sigma\right)-(\pi -\beta_\sigma )-\delta_\sigma =\beta_\sigma +\theta_\sigma -\frac{\pi}{2},
\end{aligned}
\end{equation}
and thus
\begin{equation}\label{eq:foldability_around_D_delta>0}
\begin{dcases}
\angle F D H_\sigma -\angle H_\sigma D G_\sigma &\\
=\frac{\pi}{2}-\beta_\sigma +\eta_\sigma +\delta_\sigma =\pi -\beta_\sigma -\gamma_\sigma -\rho_\sigma
&\text{if }\beta_\sigma +\frac{\gamma_\sigma}{2}+\frac{\psi_\sigma}{2}+\rho_\sigma >\frac{\pi}{2},\\
\angle F D E_\sigma =\frac{\pi}{2}-\frac{\gamma_\sigma}{2}+\frac{\psi_\sigma}{2}=\pi -\beta_\sigma -\gamma_\sigma -\rho_\sigma
&\text{if }\beta_\sigma +\frac{\gamma_\sigma}{2}+\frac{\psi_\sigma}{2}+\rho_\sigma =\frac{\pi}{2}.
\end{dcases}
\end{equation}
Consequently, in all cases we can calculate the alternative sum as
\begin{equation*}
(\pi -\beta_\Lt -\gamma_\Lt -\rho_\Lt )+(\pi -\beta_\Rt -\gamma_\Rt -\rho_\Rt )=(2\pi -\beta_\Lt -\beta_\Rt -\gamma )-(\rho_\Lt +\rho_\Rt )=\alpha ,
\end{equation*}
which coincide with the angle formed by $k_\Lt$ and $k_\Rt$ as desired, where we used $\eqref{eq:rho_L+R}$. \\
$\bullet$ {\it Flat-foldability around $E_\sigma$.} 
Suppose $\delta_\sigma=0$. 
Then we have
\begin{align*}
\angle m_\sigma E_\sigma G_\sigma&=\angle m_\sigma E_\sigma B_\sigma +\angle B_\sigma E_\sigma G_\sigma\\
\angle m_\sigma E_\sigma F&=\angle m_\sigma E_\sigma C +\angle C E_\sigma F =\angle m_\sigma E_\sigma B_\sigma+\angle C E_\sigma F\\
\angle F E_\sigma D&=\angle C E_\sigma F\\
\angle D E_\sigma G_\sigma&=\angle B_\sigma E_\sigma G_\sigma,
\end{align*}
which gives that 
\begin{equation*}
\begin{dcases}
\angle m_\sigma E_\sigma G_\sigma-\angle m_\sigma E_\sigma F+\angle F E_\sigma D-\angle D E_\sigma G_\sigma=0
&\text{if }\beta_\sigma +\frac{\gamma_\sigma}{2}+\frac{\psi_\sigma}{2}+\rho_\sigma >\frac{\pi}{2},\\
\angle m_\sigma E_\sigma B_\sigma-\angle m_\sigma E_\sigma F+\angle F E_\sigma D=0
&\text{if }\beta_\sigma +\frac{\gamma_\sigma}{2}+\frac{\psi_\sigma}{2}+\rho_\sigma =\frac{\pi}{2}. 
\end{dcases}
\end{equation*}
Next suppose $\delta_\sigma>0$. 
Then we have
\begin{align*}
\angle m_\sigma E_\sigma F&=\angle m_\sigma E_\sigma C +\angle C E_\sigma F=\angle B_\sigma E_\sigma m_\sigma +\angle D E_\sigma F,\\
\angle F E_\sigma G_\sigma&= \angle D E_\sigma F +\angle G_\sigma E_\sigma D =\angle D E_\sigma F +\angle G_\sigma E_\sigma B_\sigma ,
\end{align*}
which gives that
\begin{equation*}
\begin{dcases}
\angle B_\sigma E_\sigma m_\sigma -\angle m_\sigma E_\sigma F +\angle F E_\sigma G_\sigma -\angle G_\sigma E_\sigma B_\sigma =0
&\text{if }\beta_\sigma +\frac{\gamma_\sigma}{2}+\frac{\psi_\sigma}{2}+\rho_\sigma >\frac{\pi}{2}, \\
\angle B_\sigma E_\sigma m_\sigma -\angle m_\sigma E_\sigma F +\angle F E_\sigma G_\sigma =0
&\text{if }\beta_\sigma +\frac{\gamma_\sigma}{2}+\frac{\psi_\sigma}{2}+\rho_\sigma =\frac{\pi}{2}.
\end{dcases}
\end{equation*}
$\bullet$ {\it Flat-foldability around $G_\sigma$ and $H_\sigma$.} 
This is clear from the symmetry of $B_\sigma$ and $D$ with respect to $A E_\sigma$.

\section{Interference coefficients and the downward compatibility theorem}\label{sec:5}
Let us recall the definitions of interference coefficients introduced in \cite{Doi}, Section $5$.
\begin{definition}\rm
For a conventional $3$D gadget in Construction $\ref{const:conv}$, we define 
an \emph{interference coefficient} $\kappa_\conv (B_\sigma)$ \emph{of the conventional kind} to be the ratio 
\begin{equation*}
\frac{\norm{B_\sigma D_\sigma}}{h} =\frac{\norm{B_\sigma D_\sigma}}{\lambda\norm{AB}}
\end{equation*}
of the length of segment $B_\sigma D_\sigma$ to the height $h$ of the extrusion, where $h=\lambda\norm{AB}$ with $\lambda$ given by 
\begin{equation*}
\lambda (\alpha ,\beta_\Lt ,\beta_\Rt )=\left( 1-\frac{\cos^2\beta_\Lt +\cos^2\beta_\Rt-2\cos\alpha\cos\beta_\Lt\cos\beta_\Rt}{\sin^2\alpha}\right)^{1/2}.
\end{equation*}
Also, for an improved $3$D gadget in Construction $\ref{const:new}$ with $\delta_\Lt ,\delta_\Rt\geqslant 0$, 
let $I'_\sigma$ be the intersection point of ray $k_\sigma$ and polygonal chain $m_\Lt E_\Lt E_\Rt m_\Rt$ 
\emph{in the resulting extrusion} (so that $B_\Lt =B_\Rt =B$).
Then we define an \emph{interference coefficient} $\kappa_\inn (B_\sigma)$ \emph{of the inner pleat} to be the ratio 
\begin{equation*}
\frac{\norm{BI'_\sigma}}{h}=\frac{\norm{BI'_\sigma}}{\lambda\norm{AB}}.
\end{equation*}
Similarly, we define an \emph{interference coefficient} $\kappa_\out (B_\sigma)$ \emph{of the outer pleat} to be the ratio 
\begin{equation*}
\frac{\norm{B_\sigma G_\sigma}}{h}=\frac{\norm{B_\sigma G_\sigma}}{\lambda\norm{AB}}.
\end{equation*}
\end{definition}
\begin{proposition}\label{prop:kappa}
Assume $\norm{AB}=1$. 
For $\delta_\Lt =\delta_\Rt =0$, the interference coefficient $\kappa_\conv (B_\sigma)$ of the conventional kind is given by
\begin{equation}\label{eq:kappa_conv}
\lambda\cdot\kappa_\conv (B_\sigma )=\norm{B_\sigma D_\sigma}=\frac{\tan (\gamma /2)}{2\sin\beta_\sigma}.
\end{equation}
Also, the interference coefficient $\kappa_\out (B_\sigma )$ of the outer pleat is given by
\begin{equation*}
\lambda\cdot\kappa_\out (B_\sigma )=\norm{B_\sigma G_\sigma}=\frac{1}{\sin\beta_\sigma /\tan (\phi_\sigma /2)-\cos\beta_\sigma}.
\end{equation*}
Set $\chi_\sigma =\beta_\sigma +\gamma_\sigma /2+\psi_\sigma /2+\rho_\sigma -\pi /2-\delta_\sigma$.
Then the interference coefficient $\kappa_\inn (B_\sigma )$ of the inner pleat is given by
\begin{equation}\label{eq:kappa_in}
\begin{aligned}
2\lambda\cdot\kappa_\inn (B_\sigma )&=2\norm{B_\sigma I'_\sigma}\\
&=\begin{dcases}
\frac{\cos\delta_{\sigma'}-\cos (\gamma +\delta_{\sigma'})}{\sin (\beta_\sigma -\delta_\sigma )\sin (\gamma +\delta_\Lt +\delta_\Rt )}
&\text{if }\chi_\sigma\leqslant 0,\\
2\norm{DI_\sigma}=\frac{\sqrt{r^2-2r\cos\psi_\sigma +1}}{\cos (\pi -\beta_\sigma -\gamma_\sigma -\rho_\sigma )}&\\
=-\frac{r^2-2r\cos\psi_\sigma +1}{(r-\cos\psi_\sigma )\cos (\beta_\sigma +\gamma_\sigma )-\sin\psi_\sigma\sin (\beta_\sigma +\gamma_\sigma )}&\\
=-\frac{c^2+2-2\cos\phi_\sigma -2c\sin\phi_\sigma}{\cos\beta_\sigma (1-\cos\phi_\sigma )+(\sin\beta_\sigma -c)\sin\phi_\sigma}
&\text{if }\chi_\sigma\geqslant 0,
\end{dcases}
\end{aligned}
\end{equation}
where $r$ is given by $\eqref{eq:r}$, $c=\tan (\gamma /2)$, and
$I_\sigma$ is the intersection point of segment $E_\Lt E_\Rt$ and (possibly an extension of) a reflection of $DG_\sigma$ across $DE_\sigma$.
\end{proposition}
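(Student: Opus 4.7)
My plan is to verify the three formulas essentially independently, in order of increasing complexity.

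For $\kappa_\conv(B_\sigma)$, I will work in the conventional case $\delta_\Lt=\delta_\Rt=0$: the construction gives $\angle AB_\sigma C=\pi/2$ and, by the symmetry of $C$ with respect to $AB_\Lt$ and $AB_\Rt$, $\angle B_\sigma AC=\gamma/2$, hence $|B_\sigma C|=\tan(\gamma/2)$. Since $D_\sigma$ lies on the perpendicular bisector of $B_\sigma C$ with $\angle CB_\sigma D_\sigma=\pi/2-\beta_\sigma$, elementary trigonometry in the right triangle $B_\sigma M D_\sigma$ (with $M$ the midpoint of $B_\sigma C$) gives $|B_\sigma D_\sigma|=\tan(\gamma/2)/(2\sin\beta_\sigma)$. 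For $\kappa_\out(B_\sigma)$, the preliminary step is to identify $AE_\sigma$ as the perpendicular bisector of $B_\sigma D$: $E_\sigma$ is equidistant from $B_\sigma$ and $D$ as the circumcenter of $\triangle B_\sigma CD$, and so is $A$ since $|AB_\sigma|=|AD|=|AB|$. Thus $\angle B_\sigma A E_\sigma=\phi_\sigma/2$, and $\triangle AB_\sigma G_\sigma$ has angles $\phi_\sigma/2$ at $A$, $\pi-\beta_\sigma$ at $B_\sigma$, and side $|AB_\sigma|=1$; the law of sines plus the subtraction formula for $\sin(\beta_\sigma-\phi_\sigma/2)$ then yields the stated expression.

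For $\kappa_\inn(B_\sigma)$ I split according to which piece of the chain $m_\Lt E_\Lt E_\Rt m_\Rt$ is first met by ray $k_\sigma$ in the extrusion: $m_\sigma$ in Case~1 ($\chi_\sigma\leqslant 0$), and $E_\Lt E_\Rt$ (or its extension) in Case~2 ($\chi_\sigma\geqslant 0$). In Case~1 I pass to the extruded configuration with $B_\Lt=B_\Rt=B$ and use Lemma~\ref{lem:angle_ABP} together with the law of sines in $\triangle ABP$ to get $|BP|=\sin(\gamma/2)/\sin(\gamma+\delta_\Lt+\delta_\Rt)$ and $\angle PB\ell_\sigma=\pi-\gamma/2-\delta_{\sigma'}$. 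Combining with $\angle k_\sigma B\ell_\sigma=\beta_\sigma-\delta_\sigma$ from the foldability analysis in Section~\ref{sec:4} and solving $\triangle BPI'_\sigma$ by the law of sines gives the first formula, with the numerator $\cos\delta_{\sigma'}-\cos(\gamma+\delta_{\sigma'})$ arising from the product-to-sum identity applied to $2\sin(\gamma/2)\sin(\gamma/2+\delta_{\sigma'})$.

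In Case~2, reflection in $AE_\sigma$ (the perpendicular bisector of $B_\sigma D$) exchanges $B_\sigma$ with $D$, and the fold $DE_\sigma$ identifies the direction of $DG_\sigma$ with the direction of $k_\sigma$ at $B_\sigma$; hence $|BI'_\sigma|=|DI_\sigma|$ with $I_\sigma$ as defined in the statement. I would then evaluate $|DI_\sigma|$ in two pieces: the law of cosines in $\triangle ACD$ (with $|AC|=r$, $|AD|=1$, and $\angle DAC=\psi_\sigma$) supplies the numerator $|CD|=\sqrt{r^2-2r\cos\psi_\sigma+1}$, and the angle between $DI_\sigma$ and $E_\Lt E_\Rt$ is recovered from Lemma~\ref{lem:angles} via $\theta_\sigma+\eta_\sigma+\xi_\sigma=\pi/2$ and $\eqref{eq:theta_eta_xi}$ as $\pi-\beta_\sigma-\gamma_\sigma-\rho_\sigma$. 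The remaining two equivalent forms follow by rationalization and by substituting $r$ and $\tan\rho_\sigma$ from $\eqref{eq:r}$ and $\eqref{eq:rho}$.

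The main obstacle will be the angle-chase in Case~2: pinning down the direction of the reflected ray $DG_\sigma$ relative to $E_\Lt E_\Rt$ and confirming that the angle it makes there is exactly $\pi-\beta_\sigma-\gamma_\sigma-\rho_\sigma$, relying squarely on Lemma~\ref{lem:angles}. A secondary, purely algebraic nuisance is the bulky but routine passage among the three equivalent forms of $2|BI'_\sigma|$ in Case~2, which requires careful bookkeeping with the expressions for $r$ and $\tan\rho_\sigma$.
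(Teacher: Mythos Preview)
Your plan is correct and follows essentially the same route as the paper. The paper simply cites \cite{Doi} for $\kappa_\conv$ and $\kappa_\out$, whereas you spell these out; your derivations are fine. For $\kappa_\inn$ in Case~2 your argument matches the paper's: compute $\norm{CD}$ by the cosine rule in $\triangle ACD$, identify $\angle FDI_\sigma=\pi-\beta_\sigma-\gamma_\sigma-\rho_\sigma$ (note this is the angle at $D$, not the angle the line $DI_\sigma$ makes with $E_\Lt E_\Rt$; since $DF\perp E_\Lt E_\Rt$, the right-triangle step $\norm{DI_\sigma}=\norm{DF}/\cos\angle FDI_\sigma$ then gives the stated formula), and finish with the algebraic rewritings via $\eqref{eq:r}$ and $\eqref{eq:rho}$.

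The only genuine variation is in Case~1. The paper argues directly that the perpendicular distance from $B$ to $m_\sigma$ equals $\norm{B_\sigma C}/2$ and that $\angle E_\sigma I'_\sigma B=\beta_\sigma-\delta_\sigma$, giving $\norm{BI'_\sigma}=\norm{B_\sigma C}/(2\sin(\beta_\sigma-\delta_\sigma))$ in one step. You instead route the computation through $P$: law of sines in $\triangle AB_\sigma P$ gives $\norm{BP}=\sin(\gamma/2)/\sin(\gamma+\delta_\Lt+\delta_\Rt)$, the parallel-lines angle $\angle B P m_\sigma=\gamma/2+\delta_{\sigma'}$ (from $\angle PB_\sigma\ell_\sigma$ in Lemma~\ref{lem:angle_ABP}), and the law of sines in $\triangle BPI'_\sigma$ with $\angle BI'_\sigma P=\beta_\sigma-\delta_\sigma$. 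Both are equally elementary; the paper's version is marginally shorter because it avoids introducing $P$, while yours makes slightly more explicit use of Lemma~\ref{lem:angle_ABP}.
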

\begin{proof}
The expressions of $\kappa_\conv$ and $\kappa_\out$ are the same as in \cite{Doi}.

To calculate $\kappa_\inn (B_\sigma )$, first observe that $DG_\sigma$ overlaps with $k_\sigma$ and moves onto its reflection 
across $AE_\sigma$ for $\delta_\sigma =0$ and $AH_\sigma$ for $\delta_\sigma >0$ respectively. 
Thus in the resulting extrusion $E_\Lt E_\Rt$ intersects $k_\sigma$ for $\delta_\sigma >0$ if and only if 
$\angle H_\sigma DE_\sigma\leqslant\angle H_\sigma DG_\sigma$, which is equivalent to
\begin{equation*}
\chi_\sigma =\beta_\sigma +\frac{\gamma_\sigma}{2}+\frac{\psi_\sigma}{2}+\rho_\sigma -\frac{\pi}{2}-\delta_\sigma\geqslant 0.
\end{equation*}
by $\eqref{eq:angles_around_D_delta>0}$, $\eqref{eq:theta_eta_xi}$ and $\eqref{eq:angles_around_B_delta>0}$.
This statement is also true for $\delta_\sigma =0$ in which case $\chi_\sigma\geqslant 0$ holds by Theorem $\ref{thm:bound_psi}$
and segment $E_\Lt E_\Rt$ intersects $k_\sigma$.

Suppose $\chi_\sigma\geqslant 0$. 
Then $I'_\sigma$ is an intersection point of $k_\sigma$ and $m_\sigma$ in the resulting extrusion.
Since $\angle E_\sigma I'_\sigma B_\sigma =\beta_\sigma -\delta_\sigma$ and the distance between $B_\sigma$ and $E_\sigma I'_\sigma$
is equal to that between $\ell_\sigma$ and $m_\sigma$, which is given by $\norm{B_\sigma C}/2$,
we obtain the expression of $\kappa_\inn$ for $\chi_\sigma\leqslant 0$.

Suppose $\chi_\sigma\geqslant 0$. 
Then $E_\Lt E_\Rt$ intersects $k_\sigma$ at $I_\sigma$, and $\angle FDI_\sigma $ is calculated as
\begin{equation}\label{eq:angle_FDI}
\angle FDI_\sigma =\begin{dcases}
\angle FE_\sigma D-\angle E_\sigma DG_\sigma =\pi -\beta_\sigma -\gamma_\sigma -\rho_\sigma&\text{if }\delta_\sigma =0,\\
\angle FDH_\sigma -\angle H_\sigma DG_\sigma =\pi -\beta_\sigma -\gamma_\sigma -\rho_\sigma&\text{if }\delta_\sigma >0
\end{dcases}
\end{equation}
by $\eqref{eq:foldability_around_D_delta=0}$ and $\eqref{eq:foldability_around_D_delta>0}$.
Also, by the cosine theorem for $\triangle ACD$, we have
\begin{equation}\label{eq:length_CD}
\norm{CD}=\sqrt{r^2-2r\cos\psi_\sigma +1}.
\end{equation}
Hence $\eqref{eq:angle_FDI}$ and $\eqref{eq:length_CD}$ give the first expression of $\kappa_\inn (B_\sigma )$ for $\chi_\sigma\geqslant 0$.
The second expression of $\kappa_\inn (B_\sigma )$ for $\chi_\sigma\geqslant 0$ is obtained by putting
\begin{equation*}
\cos\rho_\sigma =\frac{r-\cos\psi_\sigma}{\sqrt{r^2-2r\cos\psi_\sigma +1}},\quad\sin\rho_\sigma =\frac{\sin\psi_\sigma}{\sqrt{r^2-2r\cos\psi_\sigma +1}},
\end{equation*}
which are derived from $\eqref{eq:rho}$, into
\begin{equation*}
\cos (\pi -\beta_\sigma -\gamma_\sigma -\rho_\sigma )=-\cos (\beta_\sigma +\gamma_\sigma )\cos\rho_\sigma +\sin (\beta_\sigma +\gamma_\sigma )\sin\rho_\sigma
\end{equation*}
in the first expression of $\kappa_\inn (B_\sigma )$. 
Setting $c=\tan (\gamma /2)$ as in Proposition $\ref{prop:zeta}$ and recalling $r=1/\cos (\gamma /2)$ and $\psi_\sigma =\gamma /2-\phi_\sigma$, 
we derive the last expression $\kappa_\inn (B_\sigma )$ by tedious but straightforward calculations.
This completes the proof of Proposition $\ref{prop:kappa}$.
\end{proof}
\begin{theorem}[Complete downward compatibility theorem]\label{thm:downward_compatibility}
Suppose $\delta_\Lt =\delta_\Rt =0$. 
Then we have
\begin{equation*}
\kappa_\inn (B_\sigma )\leqslant\kappa_\conv (B_\sigma ),\quad\kappa_\inn (B_\sigma )\leqslant\kappa_\conv (B_\sigma )
\quad\text{for all }\phi_\sigma\in [\gamma -2\zeta_{\sigma'},2\zeta_\sigma ]\cap (0,\gamma ),
\end{equation*}
where both of the equalities hold if and only if $\phi_\sigma /2=\zeta_\sigma$.
Hence any conventional $3$D gadget can always be replaced by any of our improved $3$D gadgets compatible with it
without affecting any other conventional $3$D gadget.
\end{theorem}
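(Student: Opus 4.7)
The plan is to derive both inequalities by reducing them to equality at the right endpoint $\phi_\sigma=2\zeta_\sigma$ via a monotonicity argument in $\phi_\sigma$, using the explicit formulas of Proposition~\ref{prop:kappa}. The first observation is that with $\delta_\Lt=\delta_\Rt=0$ one has $\gamma_\sigma=\gamma/2$ and $r=1/\cos(\gamma/2)$, while Theorem~\ref{thm:bound_psi} applied with $\delta_\sigma=0$ gives $\beta_\sigma+\gamma_\sigma/2+\psi_\sigma/2+\rho_\sigma\geqslant\pi/2$ throughout the admissible interval, so $\chi_\sigma\geqslant 0$ always. Hence only the second branch of the formula for $\kappa_\inn$ is ever relevant, and $\kappa_\conv(B_\sigma)$ is independent of $\phi_\sigma$.

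For the outer coefficient, the expression $\lambda\kappa_\out(B_\sigma)=1/(\sin\beta_\sigma/\tan(\phi_\sigma/2)-\cos\beta_\sigma)$ is strictly increasing in $\phi_\sigma$ on $(0,\gamma)$, so it suffices to check equality at the right endpoint $\phi_\sigma=2\zeta_\sigma$. Inserting the closed form of $\zeta_\sigma$ for $\delta_\sigma=0$ from Proposition~\ref{prop:zeta} gives $\sin\beta_\sigma/\tan\zeta_\sigma-\cos\beta_\sigma=2\sin\beta_\sigma/\tan(\gamma/2)$, whence $\lambda\kappa_\out=\tan(\gamma/2)/(2\sin\beta_\sigma)=\lambda\kappa_\conv$, with strict inequality for $\phi_\sigma<2\zeta_\sigma$ by monotonicity.

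For the inner coefficient the equality at $\phi_\sigma/2=\zeta_\sigma$ has a clean geometric explanation: at this value, Table~\ref{tbl:assignment_new} shows $G_\sigma$ merges with $E_\sigma$, and Lemma~\ref{lem:pyramid} together with the construction of $Q_\sigma$ identifies their common location with the conventional apex $D_\sigma$ of Construction~\ref{const:conv}. Then the reflection of $DG_\sigma$ across $DE_\sigma$ is $DE_\sigma$ itself, so $I_\sigma=E_\sigma=D_\sigma$, and in the folded extrusion $|B_\sigma I'_\sigma|=|B_\sigma D_\sigma|$, giving $\kappa_\inn(B_\sigma)=\kappa_\conv(B_\sigma)$. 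It then remains to show that $\kappa_\inn$ is monotone in $\phi_\sigma$ on $[\gamma-2\zeta_{\sigma'},2\zeta_\sigma]\cap(0,\gamma)$. I would derive this either by differentiating the third form of $2\lambda\kappa_\inn$ in Proposition~\ref{prop:kappa} with $c$ and $\beta_\sigma$ held fixed, or more conceptually by noting that as $D$ slides along the arc toward $B_{\sigma'}$ the circumcenter $E_\sigma$ moves outward along $m_\sigma$, which pushes $I_\sigma$ outward along $k_\sigma$ in the extrusion.

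Once both inequalities are in place, the downward compatibility statement follows at once: Theorem~\ref{thm:zeta_L+R} guarantees that a compatible improved gadget with $\delta_\Lt=\delta_\Rt=0$ always exists, and because neither $\kappa_\inn$ nor $\kappa_\out$ of the replacement exceeds the corresponding $\kappa_\conv$ of the original, any neighbor that avoided collision with the conventional gadget still avoids collision with the replacement. The principal obstacle I anticipate is the monotonicity of $\kappa_\inn$: the rational trigonometric expression for $2\lambda\kappa_\inn$ is not manifestly monotone, so either a careful algebraic simplification of its $\phi_\sigma$-derivative into a single sign-definite trigonometric combination, or a rigorous geometric tracking of the position of $I_\sigma$ along $k_\sigma$, will be required. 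I would attempt the geometric route first, as it sidesteps the messy algebra and dovetails naturally with the boundary identification $I_\sigma=D_\sigma$ already used for the equality case.
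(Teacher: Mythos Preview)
Your treatment of $\kappa_\out$ is correct and essentially identical to the paper's: both reduce to the observation that $\sin\beta_\sigma/\tan(\phi_\sigma/2)-\cos\beta_\sigma$ is decreasing in $\phi_\sigma$, with equality at the endpoint $\phi_\sigma=2\zeta_\sigma$ coming from the closed form of $\tan\zeta_\sigma$ in Proposition~\ref{prop:zeta}. Your geometric identification $I_\sigma=E_\sigma=D_\sigma$ at $\phi_\sigma=2\zeta_\sigma$ is also correct and gives a nice explanation of the equality case for $\kappa_\inn$.

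The gap is exactly where you flag it: the monotonicity of $\kappa_\inn$ in $\phi_\sigma$. You propose it but do not prove it, and the geometric heuristic (``$E_\sigma$ moves outward along $m_\sigma$, which pushes $I_\sigma$ outward'') is not a proof---in fact $\kappa_\inn$ is \emph{not} monotone on all of $(0,2\zeta_\sigma)$, since the difference $\kappa_\conv-\kappa_\inn$ is positive throughout that open interval but vanishes at the right endpoint and stays bounded away from zero as $\phi_\sigma\to 0^+$; numerically (e.g.\ in the cube case) $\kappa_\inn$ first decreases and then increases. Whether the minimum always falls at or to the left of $\gamma-2\zeta_{\sigma'}$ is a nontrivial question, and establishing it would be at least as much work as the paper's route.

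The paper avoids monotonicity entirely. It computes $2\norm{B_\sigma D_\sigma}-2\norm{DI_\sigma}$ directly from \eqref{eq:kappa_conv} and the last expression in \eqref{eq:kappa_in}, and after writing everything in $t_\sigma=\tan(\phi_\sigma/2)$ the numerator factors as
\[
\frac{2t_\sigma\bigl(c\cos\beta_\sigma+2\sin\beta_\sigma\bigr)}{1+t_\sigma^2}\,\bigl(\tan\zeta_\sigma-t_\sigma\bigr),
\]
using the formula \eqref{eq:zeta_delta=0}. The factor $\tan\zeta_\sigma-t_\sigma$ is nonnegative precisely when $\phi_\sigma/2\leqslant\zeta_\sigma$, which gives the inequality and the equality case simultaneously. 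This is both shorter and avoids the delicate issue of locating the minimum of $\kappa_\inn$.
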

\begin{proof}
For the first inequality, it suffices to prove $\norm{B_\sigma D_\sigma}\geqslant\norm{B_\sigma G_\sigma}$.
Recalling that $\angle B_\sigma AD_\sigma =\zeta_\sigma$ and using $\eqref{ineq:bound_phi}$, we have
\begin{equation*}
\frac{1}{\norm{B_\sigma D_\sigma}}=\frac{\sin\beta_\sigma}{\tan\zeta_\sigma}-\cos\beta_\sigma
\leqslant\frac{\sin\beta_\sigma}{\tan (\phi_\sigma /2)}-\cos\beta_\sigma =\frac{1}{\norm{B_\sigma G_\sigma}},
\end{equation*}
which gives $\norm{B_\sigma D_\sigma}\geqslant\norm{B_\sigma G_\sigma}$ as desired, where equality holds if and only if $\phi_\sigma /2=\zeta_\sigma$.

For the second inequality, set $c=\tan (\gamma /2)$ as in Proposition $\ref{prop:kappa}$.
Then using $\eqref{eq:kappa_conv}$ and the last expression of $\eqref{eq:kappa_in}$, we calculate as
\begin{align*}
2\norm{B_\sigma D_\sigma}-2\norm{DI_\sigma}&=\frac{c}{\sin\beta_\sigma}
-\frac{c^2+2-2\cos\phi_\sigma -2c\sin\phi_\sigma}{-\left\{\cos\beta_\sigma (1-\cos\phi_\sigma )+(\sin\beta_\sigma -c)\sin\phi_\sigma\right\}}\\
&=\frac{-(c\cos\beta_\sigma +2\sin\beta_\sigma )(1-\cos\phi_\sigma )+c\sin\beta_\sigma\sin\phi_\sigma}
{-\sin\beta_\sigma\left\{\cos\beta_\sigma (1-\cos\phi_\sigma )+(\sin\beta_\sigma -c)\sin\phi_\sigma\right\}}.
\end{align*}
Since the denominator of the last term is positive, to prove $\kappa_\inn (B_\sigma )\leqslant\kappa_\conv (B_\sigma )$, 
it suffices to prove that the numerator is nonnegative.
Setting $t_\sigma =\tan (\phi_\sigma /2)$, we can express the numerator as
\begin{align*}
-(c\cos\beta_\sigma &+2\sin\beta_\sigma )(1-\cos\phi_\sigma )+c\sin\beta_\sigma\sin\phi_\sigma\\
&=\frac{2t_\sigma}{1+t_\sigma^2}\left\{ c\sin\beta_\sigma -(c\cos\beta_\sigma +2\sin\beta_\sigma )t_\sigma\right\}\\
&=\frac{2t_\sigma (c\cos\beta_\sigma +2\sin\beta_\sigma )}{1+t_\sigma^2}(\tan\zeta_\sigma -t_\sigma)\geqslant 0,
\end{align*}
where we used $\eqref{eq:zeta_delta=0}$ and equality holds if and only if $\phi_\sigma /2=\zeta_\sigma$.
This proves $\norm{B_\sigma D_\sigma}\geqslant\norm{DI_\sigma}$, and therefore $\kappa_\conv (B_\sigma )\geqslant\kappa_\inn (B_\sigma )$.

If we replace only one of two adjacent conventional $3$D gadgets with our improved $3$D gadget, then the interference of our gadget
is $\kappa_\inn$ or $\kappa_\out$ depending on whether its outgoing pleat is under or over the adjacent conventional gadget,
while the interference of the conventional gadget remains the same.
Also, if we replace both of two adjacent conventional $3$D gadgets with our improved $3$D gadgets, then the sum of the interference 
of the resulting gadgets is the sum of $\kappa_\inn$ of one gadget and $\kappa_\out$ of the other.
Hence the sum of the interference in both cases always reduces after the replacement with our improved gadgets,
which implies that the replacement is always possible.
\end{proof}

\section{Examples of our improved $3$D gadgets}\label{sec:6}
Given $\alpha_\sigma ,\beta_\sigma ,\gamma_\sigma$ and $\delta_\sigma$ for $\sigma =\Lt ,\Rt$ 
satisfying condition $\mathrm{(i)}$--$\mathrm{(iii)}$ of Construction $\ref{const:new}$,
the range in circular arc $B_\Lt B_\Rt$ with center $A$ in which $D$ can move is determined by the critical angles $\zeta_\Lt$ and $\zeta_\Rt$,
and we can construct an improved gadget for each choice of the location of $D$, resulting the same appearance and outgoing pleats.
Since it is troublesome to choose the location of $D$ case by case, we naturally wonder what is the best choice. 
In this section we discuss some typical choices of $\phi_\sigma$ which are useful and satisfactory in some respects, 
not to say the best for all purposes. 
\begin{definition}\rm
In Construction $\ref{const:new}$, we say that our improved $3$D gadget is \emph{balanced} if $\norm{\phi_\Lt -\phi_\Rt}$ is the smallest 
among all possible choices of $\phi_\sigma$ with fixed $\alpha_\sigma ,\beta_\sigma ,\gamma_\sigma$ and $\delta_\sigma$.
Also, we say that our $3$D gadget is \emph{left critical} (resp. \emph{right critical}) 
if $\phi_\Lt /2=\zeta_\Lt$ (resp. $\phi_\Rt /2=\zeta_\Rt$), \emph{critical} if it is either left or right critical,
and \emph{othogonal} if $AC$ and $E_\Lt E_\Rt$ are orthogonal to each other. 
Our $3$D gadget is orthogonal if and only if $D$ lies on $AC$.
Note that the $3$D gadgets presented in our previous paper are classified as orthogonal gadgets in oour terminology.
\end{definition}
Suppose $\delta_\Lt =\delta_\Rt =0$, so that $\gamma_\sigma =\gamma /2$.
We may assume $\beta_\Lt\leqslant\beta_\Rt$ by interchanging $\Lt$ and $\Rt$ if necessary.
Then from the proof of Lemma $\ref{lem:existence_beta+gamma/2>pi/2}$, $\beta_\Rt +\gamma /4>\pi /2$ always holds.
Thus for a balanced $3$D gadget we have that
\begin{equation*}
\phi_\Lt=\begin{dcases}
2\zeta_\Lt&\text{if }\beta_\Lt +\frac{\gamma}{4}\leqslant\frac{\pi}{2},\\
\frac{\gamma}{2}&\text{if }\beta_\Lt +\frac{\gamma}{4}>\frac{\pi}{2},\end{dcases}
\end{equation*}
where the first equation follows because we have $\gamma -2\zeta_\Rt\leqslant 2\zeta_\Lt\leqslant \gamma /2$ 
by Propositions $\ref{prop:zeta}$ and $\ref{prop:zeta_gamma/2}$.
Therefore in this case a balanced $3$D gadget is critical or orthogonal.
Balanced $3$D gadgets for $\delta_\Lt =\delta_\Rt =0$ have the following features:
\begin{itemize}
\item The difference between $\phi_\Lt$ and $\phi_\Rt$ is minimized. 
(This is the origin of the name `balanced.')
This maximizes the apex angle $\phi_\Lt /2$ of the left ear, i.e., the pleat formed by $\triangle B_\Lt AG_\Lt$ and $\triangle DAG_\Lt$,
which makes the resulting extrusion more stable.
\item If $\beta_\Lt +\gamma /4\leqslant\pi /2$, 
then point $E_\Lt$ in the resulting crease pattern coincides with point $D_\Lt$ in Construction $\ref{const:conv}$. 
Thus the crease pattern is easy to construct from that of the conventional one.
Also, the crease pattern has least number of creases because points $E_\Lt$ and $G_\Lt$ are identical. 
These features make the crease pattern easier to fold.
\end{itemize}

In addition to the apex angles $\phi_\sigma $ of the ears of the gadget considered above, 
thin angles may also arise from $\angle E_\sigma DG_\sigma$ for $\delta_\sigma =0$ and $\angle G_\sigma B_\sigma H_\sigma =\angle G_\sigma DH_\sigma$.
If we define $\epsilon_\sigma \geqslant 0$ by
\begin{equation*}
\epsilon_\sigma =\angle Q_\sigma B_\sigma E_\sigma =\angle AB_\sigma E_\sigma -\angle AB_\sigma Q_\sigma ,
\end{equation*}
then we see from $\angle AB_\sigma Q_\sigma =\angle AB_\sigma G_\sigma +\delta_\sigma$ that
\begin{equation*}
\epsilon_\sigma =\begin{dcases}
\angle E_\sigma DG_\sigma&\text{if }\delta_\sigma =0,\\
\angle G_\sigma DH_\sigma&\text{if }\delta_\sigma >0,
\end{dcases}
\end{equation*}
Also, $\epsilon_\sigma$ is written as
\begin{equation}\label{eq:epsilon}
\epsilon_\sigma =\beta_\sigma +\gamma_\sigma /2+\psi_\sigma /2+\rho_\sigma -\pi /2
\end{equation}
by $\eqref{eq:angles_around_D_delta=0}$, $\eqref{eq:angles_around_D_delta>0}$ and $\eqref{eq:theta}$.
Summing $\eqref{eq:epsilon}$ over $\sigma =\Lt ,\Rt$ gives
\begin{equation}\label{eq:epsilon_L+R}
\epsilon_\Lt +\epsilon_\Rt =\beta_\Lt +\beta_\Rt +\gamma /2-\pi .
\end{equation}
Thus for $\delta_\Lt =\delta_\Rt =0$ if $\beta_\Lt +\gamma /4 -\pi /2$ is very small, then 
the balanced (and thus orthogonal) gadget has an unwelcome thin angle, and so
it may be better to choose as $\phi_\Lt /2=\zeta_\Lt$ so that the resulting gadget is left critical with $\epsilon_\Lt =0$
and $\epsilon_\Rt =\beta_\Lt +\beta_\Rt +\gamma /2 -\pi$.

In general cases it is complicated to choose a value of $\phi_\sigma$ so that $\epsilon_\sigma$ has a desired value.
However, we can control the values of $\epsilon_\Lt$ and $\epsilon_\Rt$ satisfying $\eqref{eq:epsilon_L+R}$
by starting from finding the possible range of $\epsilon_\sigma$ instead of $\phi_\sigma$,
which gives an alternative construction of our improved $3$D gadgets as follows.
\begin{construction}\label{const:new_alt}\rm
Consider a development as in Figure $\ref{fig:development_new}$ and require conditions $\mathrm{(i)}$--$\mathrm{(iii)}$ of Construction $\ref{const:new}$.
Then we can construct our improved $3$D gadget as follows.
\begin{enumerate}[(1)]
\item We begin with a resulting construction in Definition $\ref{def:zeta}$ as shown in Figure $\ref{fig:const_zeta}$.
\item Choose and fix either of $\sigma =\Lt$ or $\sigma =\Rt$, 
and draw a ray $u_\sigma$ starting from $A$ and going inside $\angle B_\sigma AB_{\sigma'}=\gamma$ 
with $\angle B_\sigma Au_\sigma =\gamma /2-\zeta_{\sigma'}$.
\item Let $R_\sigma$ be the intersection point of $u_\sigma$ and $m_\sigma$.
Set $\overline{\epsilon}_\sigma$ as
\begin{equation*}
\overline{\epsilon}_\sigma =\angle Q_\sigma B_\sigma R_\sigma =\begin{dcases}
\angle Q_\sigma B_\sigma R_\sigma&\text{if }\zeta_{\sigma'}<\gamma /2,\\
\pi -\angle AB_\sigma Q_\sigma =\beta_\sigma -\delta_\sigma&\text{if }\zeta_{\sigma'}=\gamma /2.\end{dcases}
\end{equation*}
\item Set $\underline{\epsilon}_\sigma$ as
\begin{equation*}
\underline{\epsilon}_\sigma =\begin{dcases}
0&\text{if }\zeta_\sigma <\gamma /2,\\
\angle Q_\sigma B_\sigma P=\beta_\sigma +\gamma /2+\delta_{\sigma'}-\pi&\text{if }\zeta_\sigma =\gamma /2.\end{dcases}
\end{equation*}
\item We can choose $\epsilon_\sigma$ as we like so that
\begin{equation}\label{ineq:range_epsilon}
\epsilon_\sigma\in\begin{dcases}
[\underline{\epsilon}_\sigma ,\overline{\epsilon}_\sigma ]\cap [0,\beta_\sigma -\delta_\sigma )&\text{if }\zeta_\sigma <\gamma /2,\\
(\underline{\epsilon}_\sigma ,\overline{\epsilon}_\sigma ]\cap [0,\beta_\sigma -\delta_\sigma )&\text{if }\zeta_\sigma =\gamma /2.\end{dcases}
\end{equation}
\item Draw a ray $v_\sigma$ starting from $B_\sigma$ and going outside $\angle AB_\sigma Q_\sigma$ with $\angle Q_\sigma B_\sigma v_\sigma =\epsilon_\sigma$,
and let $E_\sigma$ be the intersection point of $v_\sigma$ and $m_\sigma$. 
\item Set $\phi_\sigma =2\angle B_\sigma A_\sigma E_\sigma$ and determine a point $D$ on the circular arc $B_\Lt B_\Rt$ with center $A$ such that
$\angle B_\sigma AD=\phi_\sigma$.
\item Point $E_{\sigma'}$ on the other side of $E_\sigma$ is determined as the intersection point of $m_{\sigma'}$ and the bisector of $\angle DAB_{\sigma'}$.
\item The construction up to here is shown in Figure $\ref{fig:const_new_alt}$.
The rest of the construction is the same as procedures $(5)$--$(7)$ of Construction $\ref{const:new}$.
\end{enumerate}
\end{construction}
\begin{figure}[htbp]
  \begin{center}
\addtocounter{theorem}{1}
          \includegraphics[width=0.75\hsize]{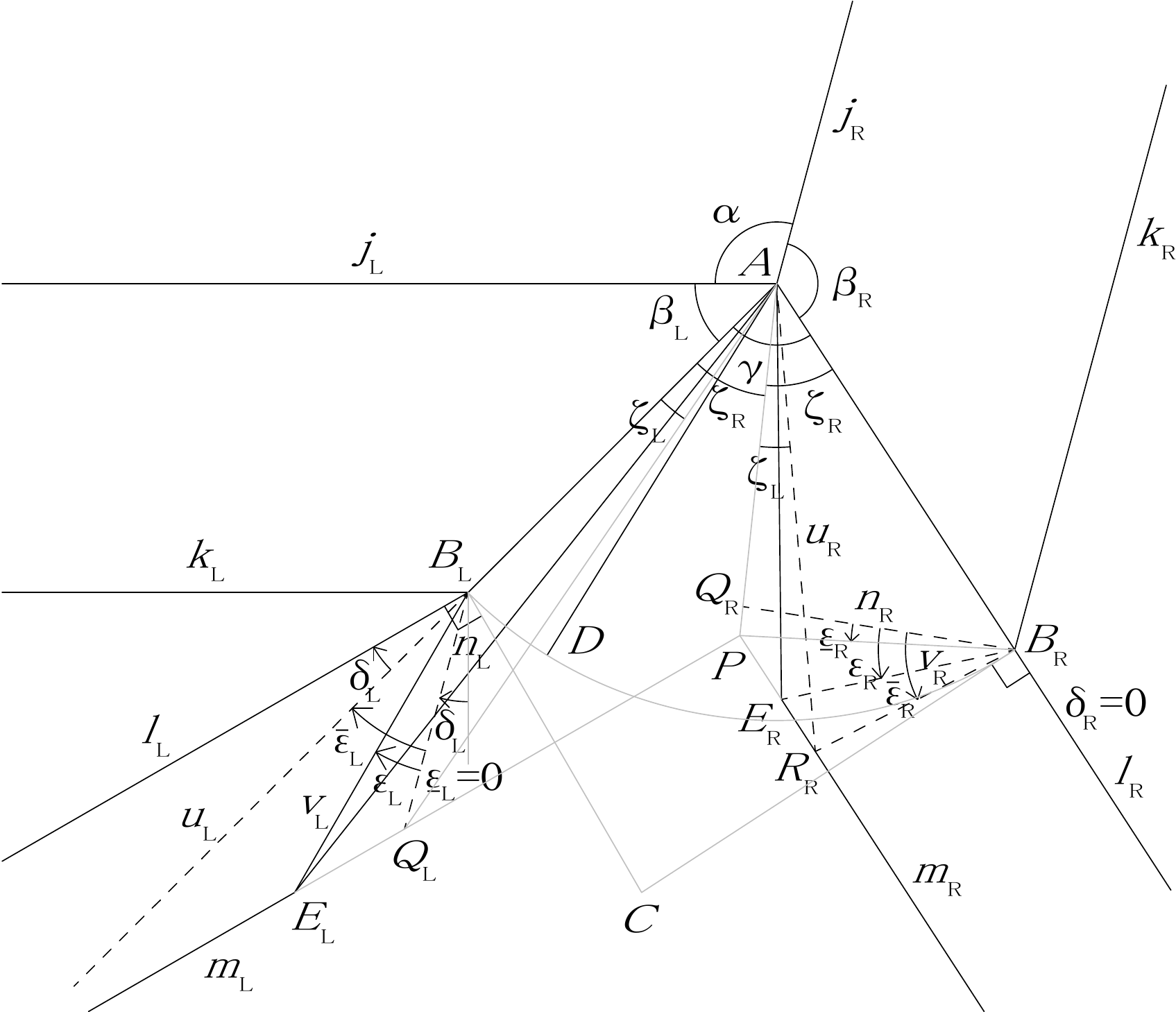}
    \caption{Alternative construction of our new gadget specifying $\epsilon_\Lt$ or $\epsilon_\Rt$}
    \label{fig:const_new_alt}
\end{center}
\end{figure}
\begin{proposition}
In procedure $(3)$ of Construction $\ref{const:new_alt}$, we have $\angle Q_\sigma B_\sigma R_\sigma =\beta_\Lt +\beta_\Rt +\gamma /2-\pi$
if $\zeta_{\sigma'}<\gamma /2$. 
Thus we can skip procedure $(2)$ if we directly specify a value of $\epsilon_\sigma$ in procedure $(5)$. 
\end{proposition}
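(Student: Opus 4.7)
The plan is to identify the point $R_\sigma$ as the position that $E_\sigma$ would occupy in the extremal configuration where $\phi_{\sigma'}/2 = \zeta_{\sigma'}$, and then to read off $\angle Q_\sigma B_\sigma R_\sigma$ from the identity $\eqref{eq:epsilon_L+R}$.

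First, I would unpack the construction of $R_\sigma$. By procedure $(2)$ of Construction $\ref{const:new_alt}$, the ray $u_\sigma$ starts from $A$ and satisfies $\angle B_\sigma A u_\sigma = \gamma/2 - \zeta_{\sigma'}$. Since $R_\sigma$ is on $u_\sigma$, we have $\angle B_\sigma A R_\sigma = \gamma/2 - \zeta_{\sigma'}$. Moreover $R_\sigma$ lies on $m_\sigma$, which is precisely the locus of $E_\sigma$ as $D$ varies along the arc $B_\Lt B_\Rt$ (the ray $AE_\sigma$ always bisects $\angle B_\sigma A C$, so $E_\sigma$ sweeps out $m_\sigma$ as $D$ moves). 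Hence $R_\sigma$ coincides with $E_\sigma$ precisely in the configuration with $\phi_\sigma = 2\angle B_\sigma A E_\sigma = \gamma - 2\zeta_{\sigma'}$, equivalently $\phi_{\sigma'}/2 = \zeta_{\sigma'}$, i.e.\ the right-critical case viewed from the $\sigma'$ side. The assumption $\zeta_{\sigma'} < \gamma/2$ ensures that this configuration is attained by a genuine interior $D$ on the arc.

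Next I would invoke Theorem $\ref{thm:bound_phi}$ together with the definition $\eqref{eq:epsilon}$ of $\epsilon_\sigma$: the critical condition $\phi_{\sigma'}/2 = \zeta_{\sigma'}$ is equivalent to equality in $\eqref{ineq:bound_psi_rho}$ on the $\sigma'$ side, which in turn is equivalent to $\epsilon_{\sigma'} = 0$. Combined with $\eqref{eq:epsilon_L+R}$, this forces $\epsilon_\sigma = \beta_\Lt + \beta_\Rt + \gamma/2 - \pi$ in this extremal configuration. Since, by the very definition preceding Construction $\ref{const:new_alt}$, $\epsilon_\sigma = \angle Q_\sigma B_\sigma E_\sigma$, and in this configuration $E_\sigma = R_\sigma$, we obtain the desired identity
\begin{equation*}
\angle Q_\sigma B_\sigma R_\sigma = \epsilon_\sigma = \beta_\Lt + \beta_\Rt + \gamma/2 - \pi.
\end{equation*}

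The second assertion then becomes essentially a corollary. Since $\overline{\epsilon}_\sigma = \angle Q_\sigma B_\sigma R_\sigma$ is given by the closed-form expression $\beta_\Lt + \beta_\Rt + \gamma/2 - \pi$ (depending only on data already fixed in $\mathrm{(i)}$--$\mathrm{(iii)}$), there is no need to construct $u_\sigma$ and $R_\sigma$ in order to know the admissible range of $\epsilon_\sigma$: one can specify $\epsilon_\sigma \in [\underline{\epsilon}_\sigma, \overline{\epsilon}_\sigma] \cap [0,\beta_\sigma - \delta_\sigma)$ directly in procedure $(5)$ and then recover $\phi_\sigma$ via procedures $(6)$--$(7)$. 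The main conceptual obstacle is the identification of $R_\sigma$ with the critical-case value of $E_\sigma$; once that is pinned down, the algebraic content is supplied entirely by $\eqref{eq:epsilon_L+R}$.
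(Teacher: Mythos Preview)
Your argument is correct and is essentially the paper's own proof: identify $R_\sigma$ with the position of $E_\sigma$ in the $\sigma'$-critical configuration, observe that there $\epsilon_{\sigma'}=0$, and read off $\epsilon_\sigma=\angle Q_\sigma B_\sigma R_\sigma$ from $\eqref{eq:epsilon_L+R}$.

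One small slip worth fixing: your parenthetical ``the ray $AE_\sigma$ always bisects $\angle B_\sigma A C$, so $E_\sigma$ sweeps out $m_\sigma$ as $D$ moves'' is garbled. If $AE_\sigma$ bisected the fixed angle $\angle B_\sigma A C$, then $E_\sigma$ would be pinned, not sweeping. The correct justification is that $E_\sigma$, being the circumcenter of $\triangle B_\sigma C D$, always lies on the perpendicular bisector $m_\sigma$ of $B_\sigma C$, while $AE_\sigma$ bisects $\angle B_\sigma A D$ (so $\angle B_\sigma A E_\sigma=\phi_\sigma/2$); hence in the configuration $\phi_\sigma=\gamma-2\zeta_{\sigma'}$ the two determining loci of $E_\sigma$ coincide with those of $R_\sigma$. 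This does not affect the substance of your proof.
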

\begin{proof}
If $\zeta_{\sigma'}<\gamma /2$, then the $\sigma'$-critical gadget exists, 
for which $\epsilon_{\sigma'}=0$ holds and $E_\sigma$ coincides with $R_\sigma$.
Thus $\epsilon_\sigma =\angle Q_\sigma B_\sigma R_\sigma =\beta_\Lt +\beta_\Rt +\gamma /2-\pi$ by $\eqref{eq:epsilon_L+R}$.
\end{proof}

With the following construction, we can avoid thin angles in the crease pattern arising from $\angle E_\sigma DG_\sigma$ or $\angle G_\sigma DH_\sigma$.
\begin{example}\rm
We show in Figure $\ref{fig:roof_CP}$ a crease pattern of an extruded pyramid of an isosceles right triangle standing on the largest rectangular face,
in which there are four different $3$D gadgets compatible with each other up to an inversion. 
The upper left, upper right, and lower left gadgets are critical, orthogonal, and balanced respectively,
and the lower right is designed with Construction $\ref{const:new_alt}$ so that $\epsilon_\Lt =\epsilon_\Rt$.
We see that there arise thin angles in the lower left balanced gadget, and the upper left critical gadget is better for folding.
\end{example}
\begin{example}\rm
The interference coefficients of the left critical cube gadget of height $1$ are calculated from Proposition $\ref{prop:kappa}$ as
\begin{equation*}
\kappa_{\inn ,\Lt}=1/4,\quad\kappa_{\out ,\Lt}=1/3 ,\quad\kappa_{\inn ,\Rt}=\kappa_{\out ,\Rt}=1/2,
\end{equation*}
where we denote $\kappa_\inn (B_\sigma )$ and $\kappa_\out (B_\sigma )$ simply by $\kappa_{\inn ,\sigma}$ and $\kappa_{\out ,\sigma}$.
Thus if we place the left and the right critical cube gadgets of height $1$ alternately, 
we can extrude a prism of a rectangle of dimensions $7/12\times 1\times 1$, whose crease pattern is shown in Figure $\ref{fig:rect_prism_crit_CP}$.
On the other hand, if we use only the left or the right critical cube gadgets of height $1$, 
then we can extrude a square prism of dimensions $1\times 1\times 4/3$.
However, as we will see in Example $\ref{ex:highest_prism}$, 
we can make the extruded prism with the same square base as high as $1.416\left( >\sqrt{2}\right)$ with other cube gadgets.
\end{example}
\begin{figure}[htbp]
  \begin{center}
    \begin{tabular}{c}
\addtocounter{theorem}{1}
      \begin{minipage}{0.52\hsize}
        \begin{center}
          \includegraphics[width=\hsize]{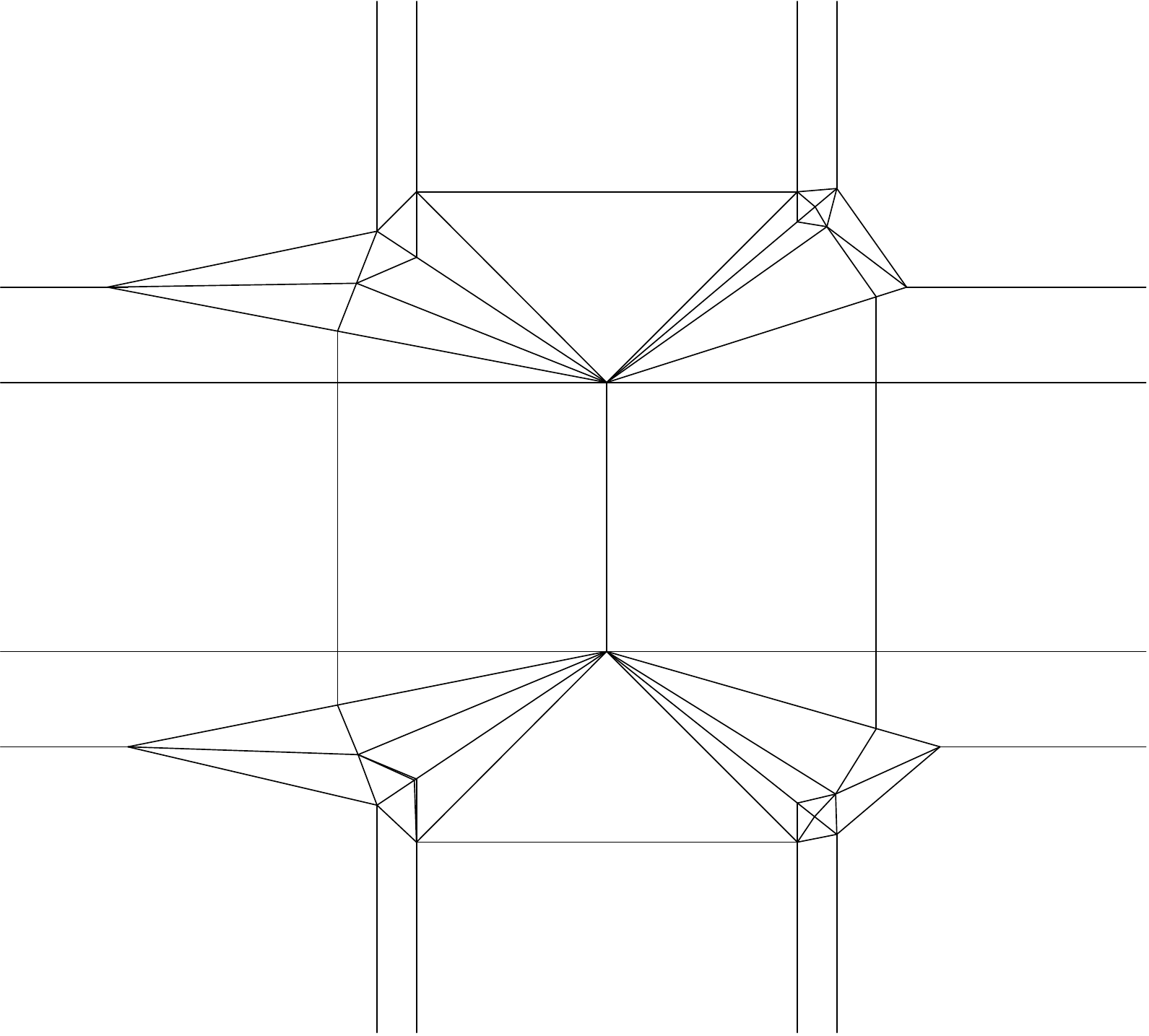}
        \end{center}
    \caption{Four different $3$D gadgets compatible with each other, used in a CP of an extruded pyramid standing on a rectangular face}
    \label{fig:roof_CP}
      \end{minipage}
\addtocounter{theorem}{1}
      \begin{minipage}{0.48\hsize}
        \begin{center}
          \includegraphics[width=\hsize]{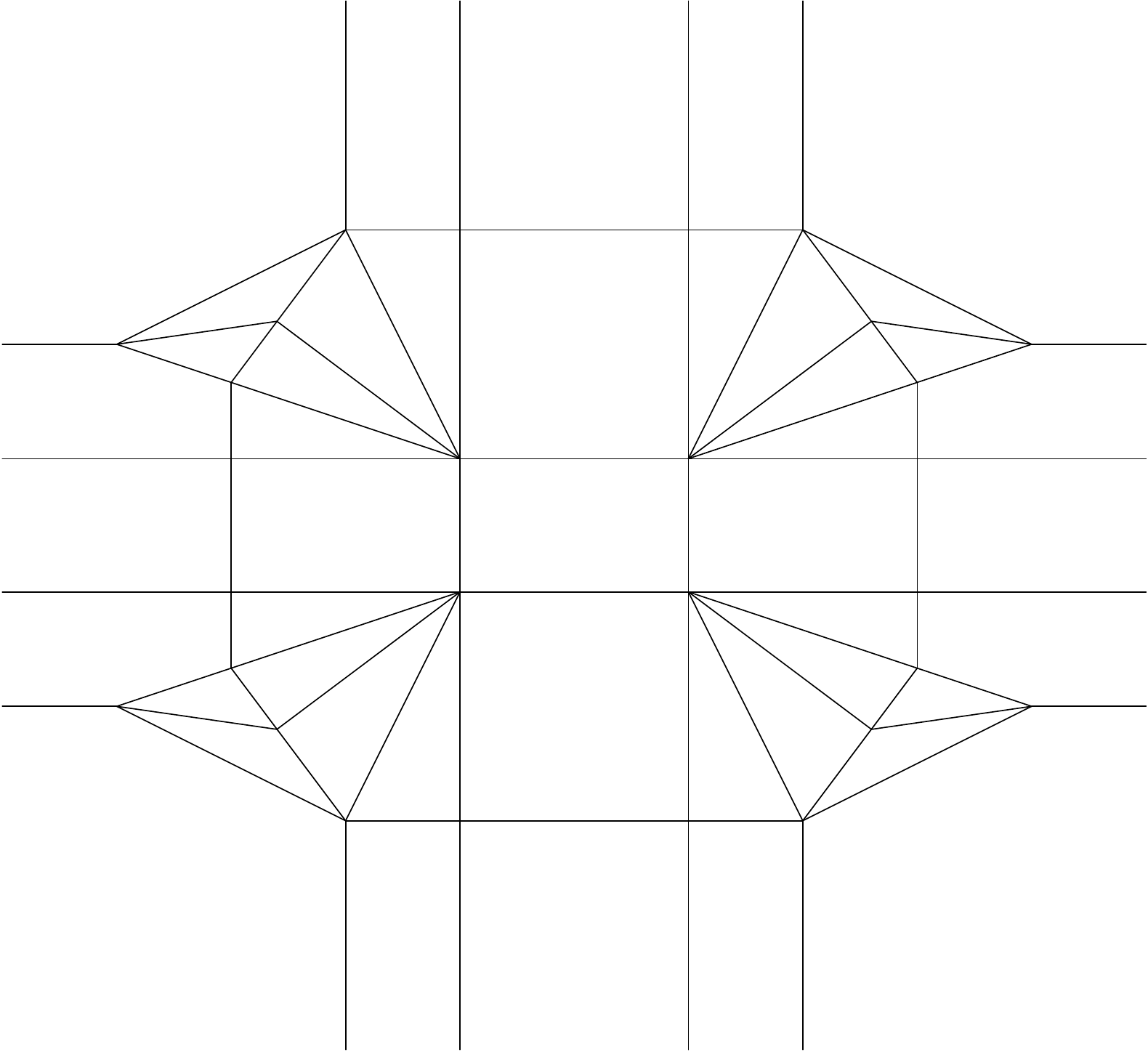}
        \end{center}
    \caption{CP of a prism of a rectangle of dimensions $7/12\times 1\times 1$ extruded with the critical cube gadgets}
    \label{fig:rect_prism_crit_CP}
     \end{minipage}
    \end{tabular}
  \end{center}
\end{figure}
\begin{example}\label{ex:highest_prism}\rm
For an integer $n\geqslant 3$, consider an improved $3$D gadget with $\alpha =(1-2/n)\pi ,\beta_\Lt =\beta_\Rt =\pi /2, \gamma =2\pi /n$ and 
$\delta_\Lt =\delta_\Rt =0$.
We can use $n$ copies of this gadget to construct a prism of a regular polygon with $n$ sides,
so that each left and right ear is an inner and an outer pleat respectively.
Then setting $c=\tan (\gamma /2)$ and $t_\Lt =\tan (\phi_\Lt /2)$, we calculate the interference coefficients as
\begin{equation*}
\kappa_{\inn ,\Lt}=\frac{1}{2}\cdot\frac{(c^2+4)t_\Lt^2 -4c\cdot t_\Lt +c^2}{c\cdot t_\Lt^2 -2t_\Lt +c},\quad
\kappa_{\out ,\Rt}=\frac{c-t_\Lt}{1+c\cdot t_\Lt}.
\end{equation*}
Also, we have $\tan\zeta_\Lt =\tan\zeta_\Rt =c/2$, so that condition $\eqref{ineq:bound_phi}$, which is now written as
\begin{equation*}
\tan\left(\frac{\gamma}{2}-\zeta_\Rt\right)\leqslant\tan\frac{\phi_\Lt}{2}\leqslant\tan\zeta_\Lt ,
\end{equation*}
gives the range of $t_\Lt$ as
\begin{equation}\label{ineq:range_t}
\frac{c}{c^2+2}\leqslant t_\Lt\leqslant\frac{c}{2},
\end{equation}
where $t_\Lt =t_{\Lt ,0}$ with
\begin{equation}\label{eq:t_0}
t_{\Lt ,0} =\frac{\sqrt{c^2+1}-1}{c}
\end{equation}
corrensponds to the orthogonal case $\phi_\Lt =\psi_\Rt =\gamma /2$, i.e., $\psi_\Lt =\psi_\Rt =0$.

Now Let $\kappa_\imp$ be the sum of the above terms $\kappa_{\inn ,\Lt}$ and $\kappa_{\out ,\Rt}$, considered as a function of $t_\Lt$.
Then all sides of the bottom face of the extruded prism have the same interference coefficient $\kappa_\imp$.
This $\kappa_\imp$ means the minimum side length of the regular $n$-gon in the bottom for the resulting prism of height $1$.
In other words, if the side length of the regular $n$-gon in the bottom is $1$, then the maximum height of the resulting prism is given by $\kappa_\imp^{-1}$.
Define $\kappa_{\min}$ to be the minimum value of $\kappa_\imp$ for all $t_\Lt$ in the range $\eqref{ineq:range_t}$.
Also, define $\kappa_0$ to be the value of $\kappa_\imp$ for $t_{\Lt ,0}$ given by $\eqref{eq:t_0}$.
We give in Table the minimum values $\kappa_{\min}$, and the values of $t_\Lt ,\phi_\Lt ,\psi_\Lt$ which attain $\kappa_{\min}$ for $n=3,4,5,6,8,12$,
calculated to four significant figures.
We also give the values of $\kappa_\imp$ for $\psi_\Lt =0$ and $\kappa_\conv$ for comparison.

Note that the $3$D gadget which gives $\kappa_{\min}$ for $n=4$ is almost orthogonal ($\kappa_\imp =\kappa_0 =\sqrt{2}$ for the orthogonal gadget), 
that for $n=6$ is almost critical ($\kappa_\imp =0.404145$ for the critical gadget, and $\kappa_{\min} =0.404133$ to six significant figures),
and those for $n=8$ and $n=12$ are exactly critical.
\end{example}
\addtocounter{theorem}{1}
\begin{table}[h]
\begin{tabular}{c|cccccc}
$n$&$3$&$4$&$5$&$6$&$8$&$12$\\
\hline$\kappa_{\min}$&$1.133$&$0.7061$&$0.5153$&$0.4041$&$0.2836$&$0.1807$\\
$t_\Lt$ attaining $\kappa_{\min}$&$0.6670$&$0.4046$&$0.3044$&$0.2481$&$0.1907$&$0.1293$\\
$\psi_\Lt ({}^\circ )$ attaining $\kappa_{\min}$&$-7.404$&$0.9397$&$2.136$&$2.137$&$0.9018$&$0.2615$\\
$\kappa_0$&$1.155$&$0.7071$&$0.5257$&$0.4226$&$0.3066$&$0.1998$\\
$\kappa_\conv$&$1.732$&$1$&$0.7265$&$0.5774$&$0.4142$&$0.2679$\\
$\kappa_{\min}^{-1}$&$0.8826$&$1.416$&$1.941$&$2.474$&$3.526$&$5.533$\\
$\kappa_0^{-1}$&$0.8660$&$1.414$&$1.902$&$2.366$&$3.262$&$5.005$\\
$\kappa_\conv^{-1}$&$0.5774$&$1$&$1.376$&$1.732$&$2.414$&$3.732$\\
$\kappa_{\min}^{-1}/\kappa_0^{-1}$&$1.019$&$1.001$&$1.020$&$1.046$&$1.081$&$1.106$\\
$\kappa_{\min}^{-1}/\kappa_\conv^{-1}$&$1.529$&$1.416$&$1.410$&$1.428$&$1.460$&$1.483$
\end{tabular}
\caption{Values of interference coefficients for the extrusion of the prism of a regular $n$-gon for various $n$, calculated to four significant figures} 
\label{tbl:kappa_min}
\end{table}

\section{Division of the improved $3$D gadgets}\label{sec:7}
In both of Construcions $\ref{const:conv}$ and $\ref{const:new}$, since the paper is flat outside the extruded object,
we can repeat the same gadgets to make the extrusion higher as long as no interference occurs.
Changing the viewpoint, we can divide a gadget into smaller ones which extrude the same height in total. 
Although naive repetition of our $3$D gadgets does not keep the back sides flat,
we presented in \cite{Doi}, Section $8$ a modification of our $3$D gadgets for repetition with flat back sides, which we shall call \emph{repeating gadgets}.
Furthermore, we saw that these gadgets have no interference with other gadgets.
In this section we present repeating gadgets which can be applied to our improved $3$D gadgets.
We shall deal with proportional division of a gadget into $d$ gadgets in the ratio $p_1:\dots :p_d$ with $p_1+\dots +p_d=d$.
Throught this section, we will suppose $\delta_\Lt =\delta_\Rt =0$.
\begin{construction}\label{const:division_new}\rm
Consider a development as shown in Figure $\ref{fig:development_division_new}$, for which we require the following conditions.
\begin{enumerate}[(i)]
\item $\alpha <\beta_\Lt+ \beta_\Rt$, $\beta_\Lt <\alpha+ \beta_\Rt$ and $\beta_\Rt <\alpha+ \beta_\Lt$.
\item $\alpha +\beta_\Lt +\beta_\Rt <2\pi$.
\item $\alpha +\beta_\Lt +\beta_\Rt >\pi$ .
\end{enumerate}
Let $\zeta_\Lt ,\zeta_\Rt$ be the critical angles in Definition $\ref{def:zeta}$ and $\phi_\Lt ,\phi_\Rt\in (0,\gamma )$ with $\phi_\Lt +\phi_\Rt =\gamma$ be
chosen so that $\phi_\sigma\in [\gamma -2\zeta_{\sigma'},2\zeta_\sigma ]\cap (0,\gamma )$.
Then the crease pattern of the proportional division of our new $3$D gadget into $d$ gadgets in the ratio $p_1:\dots :p_d$ from the bottom with $p_1+\dots +p_d=d$
is constructed as follows, where all procedures are done for both $\sigma =\Lt ,\Rt$, and the numbering of procedures
corresponds to that in \cite{Doi}, Construction $8.1$.
\begin{enumerate}
\item Draw a perpendicular to $\ell_\sigma$ through $B_\sigma$ for both $\sigma =\Lt ,\Rt$, letting $C$ be the intersection point.
\item Divide segment $CA$ (not $AC$) into $d$ parts proportionally in the ratio $p_1:\dots :p_d$, 
letting $A^{(1)},\dots ,A^{(d-1)}$ to be the divided points in order from the side of $C$.
\item For $n=1,\dots ,d-1$, draw a ray $\ell_\sigma^{(n)}$ parallel to $\ell_\sigma$, starting from $A^{(n)}$ and going in the same direction as $\ell_\sigma$,
letting $B_\sigma^{(n)}$ be the intersection point with $B_\sigma C$.
\item For $n=1,\dots ,d$, let $E_\sigma^{(n)}$ be the intersection point of 
a bisector of $\angle B_\sigma^{(n)}A_\sigma^{(n)}C$ and a perpendicular bisector of $B_\sigma^{(n-1)}B_\sigma^{(n)}$, 
where we set $A^{(n)}=A,A^{(0)}=B_\sigma^{(0)}=C$ and $B_\sigma^{(n)}=B_\sigma$. 
Also, draw a ray $m_\sigma^{(n)}$ parallel to $\ell_\sigma$, starting from $E_\sigma^{(n)}$ and going in the same direction as $\ell_\sigma$.
Thus we have $2d$ parallel rays $m_\sigma^{(1)},\ell_\sigma^{(1)},\dots ,m_\sigma^{(n)},\ell_\sigma^{(n)}=\ell_\sigma$ 
in order from the side of $C$.
\item For $n=1,\dots ,d$, let ${F'}^{(n)}$ be a point on segment $E_\Lt E_\Rt$ with $\angle E_\Lt^{(n)}A^{(n)}{F'}^{(n)}=\phi_\Lt^{(n)}$
(and so $\angle E_\Rt^{(n)}A^{(n)}{F'}^{(n)}=\phi_\Rt^{(n)}$). 
Here $\phi_\Lt^{(n)}$ and $\phi_\Rt^{(n)}$ satisfy $\phi_\Lt^{(n)} +\phi_\Rt^{(n)}=\gamma$ and
\begin{equation}\label{ineq:phi^n}
\tan\frac{\phi_\sigma^{(n)}}{2}\geqslant\frac{1}{1/\tan (\phi_\sigma^{(n-1)}/2)+2/\tan (\gamma /2)}
\end{equation}
for $n=2,\dots ,d$, where we set $\phi_\sigma^{(1)}=\phi_\sigma$.

Note that we can always choose $\phi_\sigma^{(n)}$ as $\phi_\sigma^{(n)}=\phi_\sigma^{(n-1)}$ because then $\eqref{ineq:phi^n}$ certainly holds. 
Thus we usually set $\phi_\sigma^{(2)}=\dots =\phi_\sigma^{(n)}$, which may differ from $\phi_\sigma^{(1)}=\phi_\sigma$.
The meaning of inequality $\eqref{ineq:phi^n}$ and a practical choice of $\phi_\sigma^{(n)}$ 
are explained in Remarks $\ref{rem:meaning_ineq_phi^n}$ and $\ref{rem:choice_phi^n}$ respectively.
\item For $n=1,\dots ,d-1$, draw a ray $k_\sigma^{(n)}$ parallel to $k_\sigma$ from $B_\sigma^{(n)}$ to the side of $m_\sigma^{(n+1)}$.
If $k_\sigma^{(n)}$ intersects segment $A^{(n)}E_\sigma^{(n+1)}$, then let ${G'_\sigma}^{\! (n)}$ be the intersection point.
If not, then let $J_\sigma^{(n+1)}$ be the intersection point with $m_\sigma^{(n)}$. 
\item For $n=1,\dots ,d$, draw a ray ${k'_\sigma}^{\! (n)}$ starting from $B_\sigma^{(n)}$ 
which is a reflection of $k_\sigma^{(n)}$ across $\ell_\sigma^{(n)}$, where we set $k_\sigma^{(d)}=k_\sigma$.
If ${k'_\sigma}^{\! (n)}$ intersects segment $A^{(n)}E_\sigma^{(n)}$, then let $G_\sigma^{(n)}$ be the intersection point.
If not, then it ${k'_\sigma}^{\! (n)}$ intersects $m_\sigma^{(n)}$ at $J_\sigma^{(n)}$ given in $(6)$.
Note that $G_\sigma^{(1)}$ always exists, and for $2\leqslant n\leqslant d$, $G_\sigma^{(n)}$ exists if and only if ${G'_\sigma}^{\! (n-1)}$ exists.
\item For $n=1,\dots ,d$, draw a circle with center $A^{(n)}$ through $B_\Lt^{(n)}$ and $B_\Rt^{(n)}$.
If the circle intersects segment $A^{(n)}{F'}^{(n)}$ excluding endpoint ${F'}^{(n)}$, then let $D^{(n)}$ be the intersection point, 
and if $n\leqslant d-1$ and it intersects segment $A^{(n)}{F'}^{(n+1)}$ excluding endpoint ${F'}^{(n+1)}$, then let ${D'}^{(n)}$ be the intersection point.
Note that $D^{(1)}=D$ always exists. 
Also, as we will see in $\eqref{eq:AF'-AB}$ in Lemma $\ref{lem:length_AF'}$, we have
\begin{equation*}
\norm{A^{(n)}{F'}^{(n)}}-\norm{A^{(n)}B_\sigma^{(n)}}=\norm{A^{(n-1)}{F'}^{(n)}}-\norm{A^{(n-1)}B_\sigma^{(n-1)}}\quad\text{for }2\leqslant n\leqslant d,
\end{equation*}
which implies that $D^{(n)}$ exists if and only if ${D'}^{(n-1)}$ exists.
\item For $n=2,\dots ,d$ such that $D^{(n)}$ and ${D'}^{(n-1)}$ in $(8)$ exist, draw a parallel line to segment $D^{(1)}E_\sigma^{(1)}$ through $D^{(n)}$ 
and a parallel line to segment $D^{(1)}E_{\sigma'}^{(1)}$ through ${D'}^{(n-1)}$, 
letting $K_\sigma^{(n)}$ be the common intersection point with segment $E_\Lt^{(n)}E_\Rt^{(n)}$.
\item For $n=2,\dots ,d$ 
such that $D^{(n)}$ and ${D'}^{(n-1)}$ in $(8)$ exist, 
draw a ray starting from $D^{(n)}$ which is a reflection of ${k'_\sigma}^{\! (n)}$ across $A^{(n)}E_\sigma^{(n)}$.
If the ray intersects $A^{(n)}E_\sigma^{(n)}$, then $G_\sigma^{(n)}$ is the intersection point.
If not, then the ray intersects $E_\Lt^{(n)}E_\Rt^{(n)}$, letting $M_\sigma^{(n)}$ be the intersection point.
Also, draw a ray starting from ${D'_\sigma}^{\! (n-1)}$ which is a reflection of $k_\sigma^{(n-1)}$ across $A^{(n-1)}E_\sigma^{(n)}$.
Then the ray passes through ${G'_\sigma}^{\! (n-1)}$ (resp. $M_\sigma^{(n)}$) if ${G'_\sigma}^{\! (n-1)}$ exists (resp. does not exist). 
\item For $n=2,\dots ,d$ such that $G_\sigma^{(n)},{G'_\sigma}^{\! (n-1)}$ exist and $D^{(n)},{D'}^{(n-1)}$ do not, 
draw a line through $G_\sigma^{(n)}$ which is a reflection of ${k'_\sigma}^{\! (n)}$ across $A^{(n)}E_\sigma^{(n)}$,
and a line through ${G'_\sigma}^{\! (n-1)}$ which is a reflection of ${k'_\sigma}^{\! (n)}$ across $A^{(n)}E_\sigma^{(n)}$, 
letting $M_\sigma^{(n)}$ be their common intersection point with segment $E_\Lt^{(n)} E_\Rt^{(n)}$.
(If the two lines overlap $E_\Lt^{(n)} E_\Rt^{(n)}$, 
which happens only if $\norm{A^{(n)}F^{(n)}}=\norm{A^{(n)}B_\sigma^{(n)}}$ and $G_\sigma^{(n)}={G'_\sigma}^{\! (n-1)}=E_\sigma^{(n)}$, 
then let $M_\sigma^{(n)}=E_\sigma^{(n)}$.)
\item The crease pattern is shown as the solid lines in Figure $\ref{fig:division_new_CP_1}$, 
and the assignment of mountain folds and valley folds is given in Table $\ref{tbl:assignment_division_new}$ if $\phi_\sigma /2<\zeta_\sigma$,
and Table $\ref{tbl:assignment_division_new_crit}$ if $\phi_\sigma /2=\zeta_\sigma$.
\end{enumerate}
\end{construction}
\begin{figure}[htbp]
  \begin{center}
\addtocounter{theorem}{1}
          \includegraphics[width=0.75\hsize]{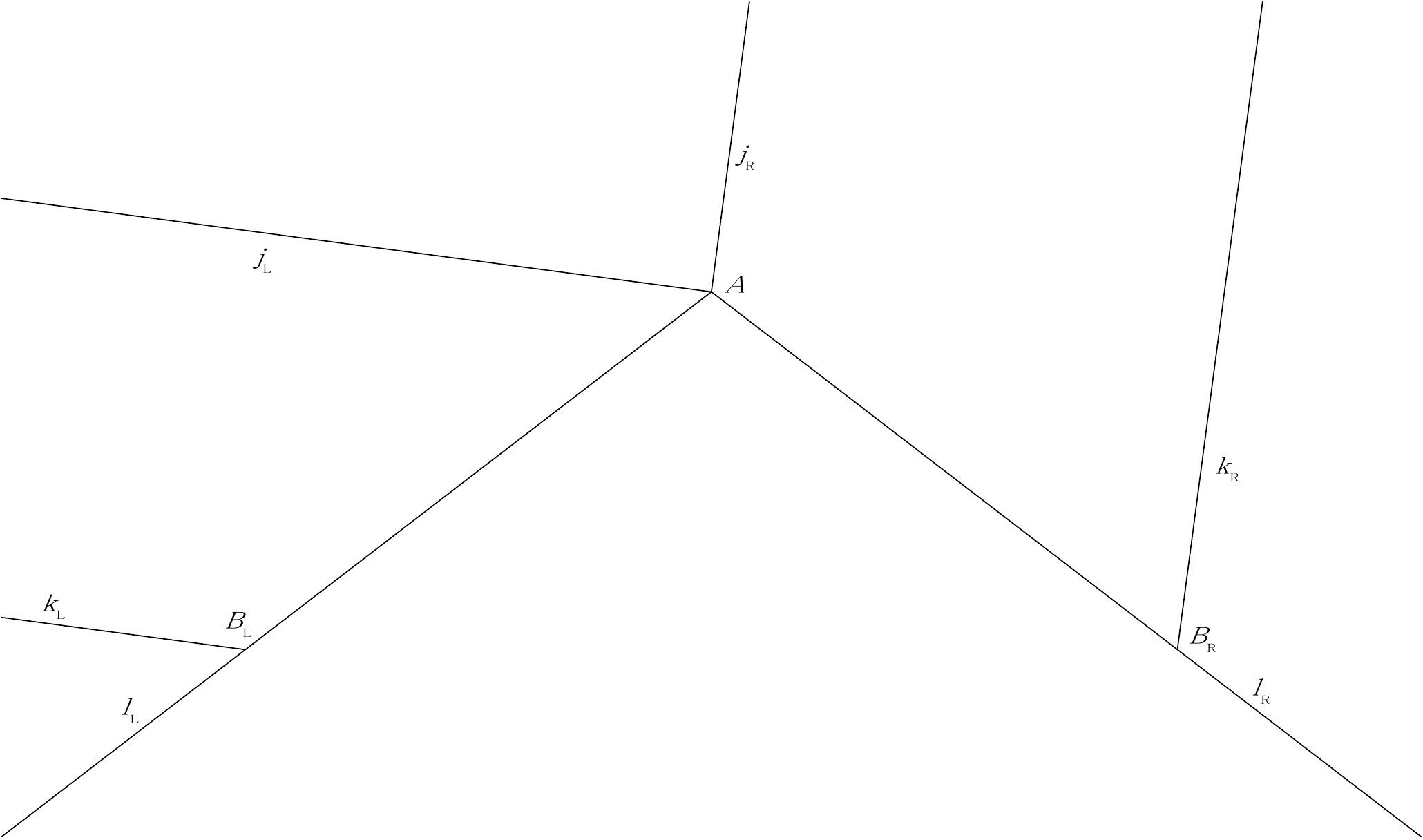}
    \caption{Development to which we apply $d$ $3$D gadgets successively}
    \label{fig:development_division_new}
\end{center}
\end{figure}
\begin{figure}[htbp]
  \begin{center}
\addtocounter{theorem}{1}
          \includegraphics[width=\hsize]{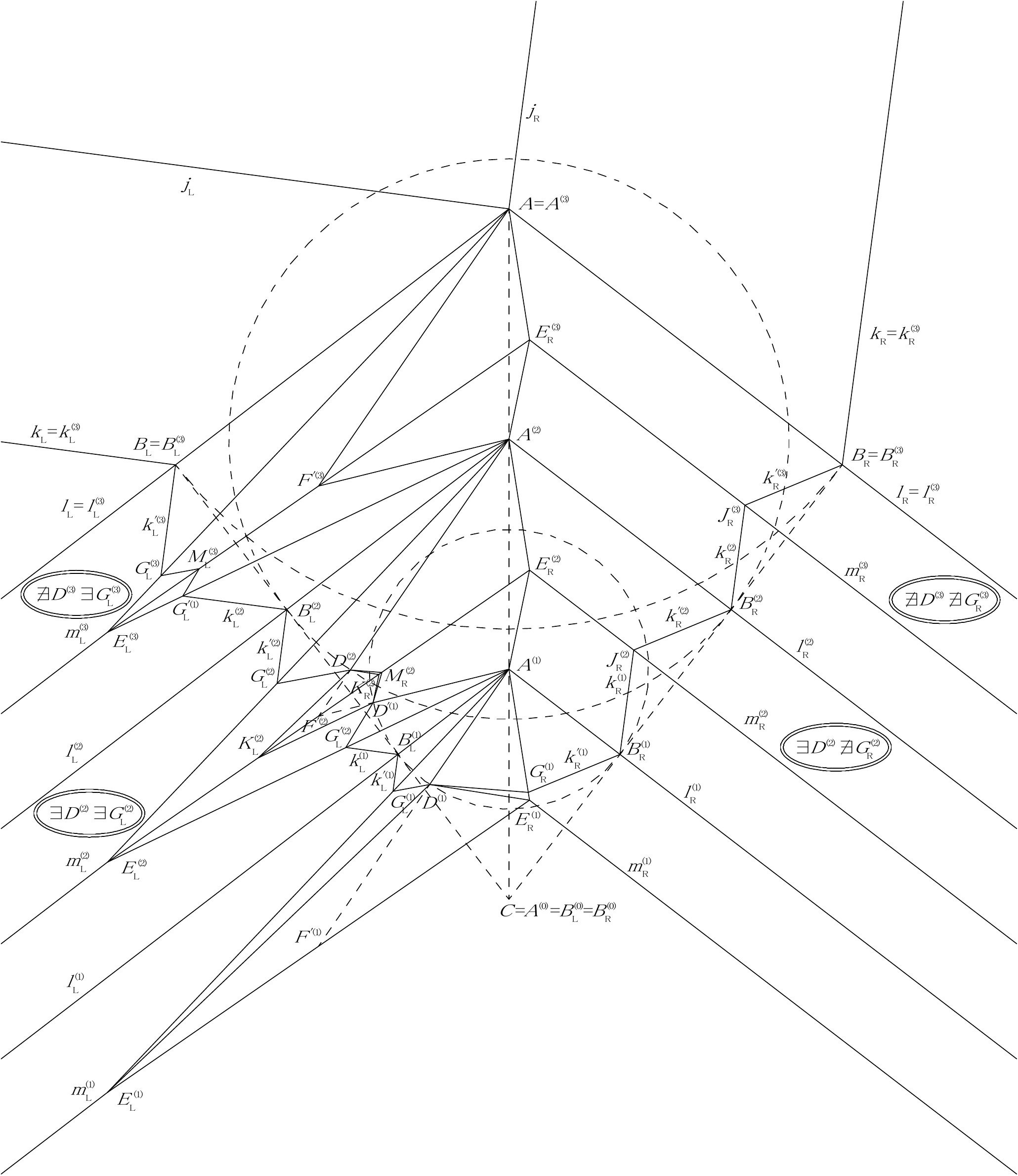}
    \caption{CP of the division of our improved $3$D gadget}
    \label{fig:division_new_CP_1}
\end{center}
\end{figure}
\addtocounter{theorem}{1}
\begin{table}[h]
\begin{tabular}{c|c|c|c|c|c}
&common&\multicolumn{4}{c}{$n$ with $2\leqslant n\leqslant d$ such that}\\
\cline{3-6}&creases&$\exists D^{(n)}\exists G_\sigma^{(n)}$&$\exists D^{(n)}\nexists G_\sigma^{(n)}$
&$\nexists D^{(n)}\exists G_\sigma^{(n)}$&$\nexists D^{(n)}\nexists G_\sigma^{(n)}$\\
\hline&$j_\sigma ,\ell_\sigma^{(n)},$&\multicolumn{2}{c|}{$A^{(n-1)}E_\sigma^{(n)}$}&\multicolumn{2}{c}{$(A^{(n-1)}E_\Lt^{(n)},A^{(n-1)}E_\Rt^{(n)})^*,$}\\
mountain&$AB_\sigma ,A^{(1)}D^{(1)},$&\multicolumn{2}{c|}{$A^{(n)}D^{(n)},D^{(n)}K_\sigma^{(n)}$}&\multicolumn{2}{c}{$A^{(n)}{F'}^{(n)}$}\\
\cline{3-6}folds&$B_\sigma^{(1)}G_\sigma^{(1)},D^{(1)}E_\sigma^{(1)}$&$B_\sigma^{(n-1)}{G'_\sigma}^{\! (n-1)},$
&$B_\sigma^{(n)}J_\sigma^{(n)},$&$B_\sigma^{(n-1)}{G'_\sigma}^{\! (n-1)},$&$B_\sigma^{(n)}J_\sigma^{(n)}$\\
&&${D'_\sigma}^{\! (n-1)}{G'_\sigma}^{\! (n-1)}$&${D'_\sigma}^{\! (n-1)}M_\sigma^{(n)}$&${G'_\sigma}^{\! (n-1)}M_\sigma^{(n)}$&\\
\hline valley&$k_\sigma ,m_\sigma^{(n)},$&\multicolumn{2}{c|}{$A^{(n-1)}{D'}^{(n-1)},{D'}^{(n-1)}K_\sigma^{(n)}$}
&\multicolumn{2}{c}{$A^{(n-1)}{F'}^{(n)}, (E_\Lt^{(n)}E_\Rt^{(n)})^*$}\\
\cline{3-6}folds&$A^{(n)}E_\sigma^{(n)},E_\Lt^{(n)}E_\Rt^{(n)},$&$B_\sigma^{(n-1)}G_\sigma^{(n)},$
&$B_\sigma^{(n-1)}J_\sigma^{(n)},$&$B_\sigma^{(n-1)}G_\sigma^{(n)},$&$B_\sigma^{(n-1)}J_\sigma^{(n)}$\\
&$D^{(1)}G_\sigma^{(1)}$&$D^{(n)}G_\sigma^{(n)}$&$D^{(n)}M_\sigma^{(n)}$&$G_\sigma^{(n)}M_\sigma^{(n)}$&
\end{tabular}\vspace{0.5cm}
\caption{Assignment of mountain folds and valley folds for the division of the new gadget,
where the folds of starred creases can be inverted at the same time for individual $n\geqslant 2$}
\label{tbl:assignment_division_new}
\end{table}
\begin{lemma}\label{lem:length_AF'}
Suppose $\norm{AB}=d$ and let $q_n=p_1+\dots +p_n$, so that $\norm{A^{(n)}B_\sigma^{(n)}}=q_n$. 
Then we have
\begin{equation*}
\norm{A^{(n-1)}{F'}^{(n)}}=\left\{\frac{r^2-1}{2(r\cos\psi_\Lt -1)}-1\right\}\cdot p_n, \quad\norm{A^{(n)}{F'}^{(n)}}=\frac{r^2-1}{2(r\cos\psi_\Lt -1)}\cdot p_n .
\end{equation*}
Also, we have
\begin{equation}\label{eq:AF'-AB}
\norm{A^{(n-1)}{F'}^{(n)}}-\norm{A^{(n-1)}B_\sigma^{(n-1)}}=\norm{A^{(n)}{F'}^{(n)}}-\norm{A^{(n)}B_\sigma^{(n)}}.
\end{equation}
\end{lemma}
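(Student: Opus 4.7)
The plan is to reduce everything to a direct computation in Cartesian coordinates. Place $A$ at the origin with $AC$ along the positive $x$-axis. Since $\delta_\Lt=\delta_\Rt=0$, the triangle $\triangle AB_\sigma C$ is right-angled at $B_\sigma$ with $\angle B_\sigma AC=\gamma/2$; hence $r=\sec(\gamma/2)$, $B_\sigma=d(\cos(\gamma/2),\pm\sin(\gamma/2))$, $C=(rd,0)$, and $D=d(\cos\psi_\Lt,\sin\psi_\Lt)$. Procedures $(2)$--$(3)$ of Construction \ref{const:division_new} amount to a similar-triangle scaling of $\triangle ACB_\sigma$ by factor $q_n/d$ with the same right angle at $B_\sigma^{(n)}$, so $A^{(n)}=((d-q_n)r,0)$ and $B_\sigma^{(n)}=A^{(n)}+q_n(\cos(\gamma/2),\pm\sin(\gamma/2))$, giving immediately $\|A^{(n)}B_\sigma^{(n)}\|=q_n$.

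Next I apply procedure $(4)$. The bisector of $\angle B_\sigma^{(n)}A^{(n)}C$ from $A^{(n)}$ has direction $(\cos(\gamma/4),\pm\sin(\gamma/4))$, and the perpendicular bisector of $B_\sigma^{(n-1)}B_\sigma^{(n)}$ passes through the midpoint of that segment with direction $(\cos(\gamma/2),\pm\sin(\gamma/2))$. Solving the two linear equations and using $2\cos^2(\gamma/4)-\cos(\gamma/2)=1$ yields $\|A^{(n)}E_\sigma^{(n)}\|=p_n r\cos(\gamma/4)$; combined with the half-angle identity $2\cos^2(\gamma/4)=1+\cos(\gamma/2)=(r+1)/r$, this shows that the horizontal offset of $E_\sigma^{(n)}$ from $A^{(n)}$ is $p_n(r+1)/2$. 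By symmetry $E_\Lt^{(n)}$ and $E_\Rt^{(n)}$ share the same $x$-coordinate, so $E_\Lt^{(n)}E_\Rt^{(n)}$ is the vertical segment at $x=(d-q_n)r+p_n(r+1)/2$.

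The decisive step is locating $F'^{(n)}$. Unpacking procedure $(5)$ together with the collinearity conditions of procedures $(8)$--$(9)$---namely that $D^{(n)}$ lies on $\overline{A^{(n)}F'^{(n)}}$ at distance $q_n$, that $D'^{(n-1)}$ lies on $\overline{A^{(n-1)}F'^{(n)}}$ at distance $q_{n-1}$, and that the two lines drawn in $(9)$ (respectively parallel to $D^{(1)}E_\Lt^{(1)}$ and $D^{(1)}E_\Rt^{(1)}$) meet at a single point $K_\sigma^{(n)}$ on $E_\Lt^{(n)}E_\Rt^{(n)}$---forces $F'^{(n)}$ to lie on the ray from $A^{(n)}$ parallel to $\vec{AD}$. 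Writing $F'^{(n)}=A^{(n)}+t_n(\cos\psi_\Lt,\sin\psi_\Lt)$, the remaining parallel-lines compatibility condition between the $n$-th and $(n{-}1)$-th subgadgets reduces to a single linear equation in $t_n$, with solution
\[
t_n=\frac{(r^2-1)\,p_n}{2(r\cos\psi_\Lt-1)},
\]
which is the second claimed formula.

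For the first formula, observe that $A^{(n-1)}=A^{(n)}+(p_n r,0)$ gives
\[
\|A^{(n-1)}F'^{(n)}\|^{2}=(t_n\cos\psi_\Lt-p_n r)^{2}+(t_n\sin\psi_\Lt)^{2}=t_n^{2}-2t_n p_n r\cos\psi_\Lt+p_n^{2}r^{2}.
\]
Substituting the value of $t_n$ and placing everything over the common denominator $4(r\cos\psi_\Lt-1)^{2}$, the numerator expands to $(r^{2}+1)^{2}-4r\cos\psi_\Lt(r^{2}+1)+4r^{2}\cos^{2}\psi_\Lt$ times $p_n^{2}$, which is the perfect square $(r^{2}-2r\cos\psi_\Lt+1)^{2}p_n^{2}$. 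Taking the positive square root yields
\[
\|A^{(n-1)}F'^{(n)}\|=\frac{(r^{2}-2r\cos\psi_\Lt+1)\,p_n}{2(r\cos\psi_\Lt-1)}=\Bigl(\frac{r^{2}-1}{2(r\cos\psi_\Lt-1)}-1\Bigr)p_n.
\]
The identity $\eqref{eq:AF'-AB}$ then follows by inspection: both sides equal $\|A^{(n)}F'^{(n)}\|-\|A^{(n-1)}F'^{(n)}\|-p_n=0$ since $q_n-q_{n-1}=p_n$. The main obstacle is the third step: disentangling the construction's implicit definition of $F'^{(n)}$ and certifying that it lies on the ray from $A^{(n)}$ parallel to $\vec{AD}$ at the prescribed distance. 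Once that geometric assertion is in hand, the remainder is the routine perfect-square identity displayed above.
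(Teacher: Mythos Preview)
There is a genuine gap at exactly the place you flag as the ``main obstacle''. You carry out two independent computations and they do not fit together. On the one hand, reading step~(4) literally (bisector of $\angle B_\sigma^{(n)}A^{(n)}C$) you obtain $E_\Lt^{(n)}E_\Rt^{(n)}$ as the \emph{vertical} segment at $x=(d-q_n)r+p_n(r+1)/2$. On the other hand, you assert that $F'^{(n)}=A^{(n)}+t_n(\cos\psi_\Lt,\sin\psi_\Lt)$. But step~(5) places $F'^{(n)}$ on $E_\Lt^{(n)}E_\Rt^{(n)}$, so these two facts alone already force
\[
t_n\cos\psi_\Lt=\tfrac{p_n(r+1)}{2},\qquad\text{i.e.}\qquad t_n=\frac{p_n(r+1)}{2\cos\psi_\Lt},
\]
which equals your claimed value $\dfrac{(r^{2}-1)p_n}{2(r\cos\psi_\Lt-1)}$ only when $\cos\psi_\Lt=1$. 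The ``parallel-lines compatibility condition'' you invoke from procedures (8)--(9) cannot supply an extra equation: those procedures define $D^{(n)},D'^{(n-1)},K_\sigma^{(n)}$ \emph{in terms of} $F'^{(n)}$; they do not constrain $F'^{(n)}$ further. In fact, with your $E_\Lt^{(n)}$ at angle $\gamma/4$ from $A^{(n)}$, the angular condition in step~(5) does not even put $A^{(n)}F'^{(n)}$ at angle $\psi_\Lt$. So at least one of your readings of steps (4)--(5) is off, and the value of $t_n$ is asserted rather than derived.

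The paper's argument sidesteps this entirely. It never locates $E_\sigma^{(n)}$ in coordinates; instead it records the two angle relations $\angle F^{(n)}A^{(n)}F'^{(n)}=\psi_\Lt^{(n)}$ and $\angle A^{(n)}F^{(n)}F'^{(n)}=\pi/2+\rho_\Lt^{(n)}$ (the second says precisely that $E_\Lt^{(n)}E_\Rt^{(n)}$ is tilted by $\rho$ from the vertical, contrary to your computation) and uses them to obtain the \emph{sum} $\|A^{(n-1)}F'^{(n)}\|+\|A^{(n)}F'^{(n)}\|$ in one line. The \emph{difference} $\|A^{(n)}F'^{(n)}\|-\|A^{(n-1)}F'^{(n)}\|=p_n$ then drops out from the equidistance property of $E_\Lt^{(n)}E_\Rt^{(n)}$ (it is a perpendicular bisector, so $\|D^{(n)}F'^{(n)}\|=\|A^{(n-1)}F'^{(n)}\|$). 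Solving the two linear equations gives both lengths at once, and \eqref{eq:AF'-AB} is then immediate. Your perfect-square verification of $\|A^{(n-1)}F'^{(n)}\|$ from the value of $t_n$ is correct and would be a legitimate alternative to the paper's sum--difference trick, but only once $t_n$ has actually been established.
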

\begin{proof}
Let $F^{(n)}$ be the intersection point of segment $AC$ and $E_\Lt^{(n)}E_\Rt^{(n)}$.
Also, set $\psi_\sigma^{(n)}=\gamma /2-\phi_\sigma^{(n)}$ and let $\rho_\sigma^{(n)}$ correspond to $\psi_\sigma^{(n)}$.
Then we can see easily that $\angle A^{(n)}F^{(n)}{F'}^{(n)}=\pi /2+\rho_\Lt^{(n)}$
and $\angle F^{(n)}A^{(n)}{F'}^{(n)}=\psi_\Lt^{(n)}$. 
Thus $\norm{A^{(n-1)}{F'}^{(n)}}+\norm{A^{(n)}{F'}^{(n)}}$ is calculated as
\begin{equation}\label{eq:sum_AF'}
\begin{aligned}
\norm{A^{(n-1)}{F'}^{(n)}}+\norm{A^{(n)}{F'}^{(n)}}&=\frac{\norm{A^{(n-1)}A^{(n)}}}{\cos\psi_\Lt^{(n)}-\sin\psi_\Lt^{(n)}\tan\rho_\Lt^{(n)}}\\
&=\frac{r\norm{A^{(n)}B^{(n)}}}{\cos\psi_\Lt^{(n)}-\sin^2\psi_\Lt^{(n)}/(r-\cos\psi_\Lt^{(n)})}=\frac{r(r-\cos\psi_\Lt^{(n)})}{r\cos\psi_\Lt^{(n)}-1}\cdot p_n,
\end{aligned}
\end{equation}
where we used $\eqref{eq:rho}$ in the first equality in the second line.
On the other hand, since $E^{(n)}_\Lt E^{(n)}_\Rt$ is a perpendicular bisector of $A^{(n-1)}A^{(n)}$, we have for $n=1$
\begin{equation*}
\norm{A^{(1)}{F'}^{(1)}}=\norm{A^{(1)}D^{(1)}}+\norm{D^{(1)}{F'}^{(1)}}=p_1+\norm{A^{(0)}{F'}^{(1)}},
\end{equation*}
and similarly that for all $n$ 
\begin{equation}\label{eq:diff_AF'}
\norm{A^{(n)}{F'}^{(n)}}=\norm{A^{(n)}D^{(n)}}+\norm{D^{(n)}{F'}^{(n)}}=p_n+\norm{A^{(n-1)}{F'}^{(n)}}.
\end{equation}
Thus by $\eqref{eq:sum_AF'}$ and $\eqref{eq:diff_AF'}$, we have 
\begin{equation*}
\norm{A^{(n-1)}{F'}^{(n)}}=\left\{\frac{r^2-1}{2(r\cos\psi_\Lt^{(n)}-1)}-1\right\}\cdot p_n, 
\quad\norm{A^{(n)}{F'}^{(n)}}=\frac{r^2-1}{2(r\cos\psi_\Lt^{(n)}-1)}\cdot p_n.
\end{equation*}
Consequently, we have
\begin{align*}
\norm{A^{(n-1)}{F'}^{(n)}}-\norm{A^{(n-1)}B_\sigma^{(n-1)}}&=\frac{r^2-1}{2(r\cos\psi_\Lt^{(n)}-1)}\cdot p_n-p_n-(p_1+\dots +p_{n-1})\\
&=\frac{r^2-1}{2(r\cos\psi_\Lt^{(n)}-1)}\cdot p_n -q_n =\norm{A^{(n)}{F'}^{(n)}}-\norm{A^{(n)}B_\sigma^{(n)}}.
\end{align*}
This completes the proof.
\end{proof}
\begin{remark}\label{rem:meaning_ineq_phi^n}\rm
Inequality $\eqref{ineq:phi^n}$ is equivalent to 
\begin{equation*}
\angle E_\sigma^{(n)}A^{(n-1)}\ell_\sigma^{(n-1)}\geqslant\angle E_\sigma^{(n-1)}A^{(n-1)}\ell_\sigma^{(n-1)}=\phi_\sigma^{(n-1)},
\end{equation*}
which is necessary when crease $A^{(n-1)}E_\sigma^{(n)}$ is a mountain fold, but unnceccsary when $D^{(n)}$ does not exist and 
$A^{(n-1)}E_\sigma^{(n)}$ is inverted to a valley fold in Table $\ref{tbl:assignment_division_new}$.

In Construction $\ref{const:division_new}$, $(8)$, if ${D'}^{(n)}$ exists, then ${D'}^{(n)}$ overlaps with $D^{(n)}$ 
because of $\norm{A^{(n)}D^{(n)}}=\norm{A^{(n)}{D'}^{(n)}}=p_1+\dots +p_n$.
Also, if ${D'}^{(n)}$ exists, then $D^{(n+1)}$ overlaps with ${D'}^{(n)}$ by $\eqref{eq:AF'-AB}$.
Thus $D^{(n)}$ and ${D'}^{(n-1)}$ overlap with $D^{(1)}$ as long as they exist.
In other words, if $D^{(n)}$ exists, then $\triangle A^{(n)}E_\Lt^{(n)}E_\Rt^{(n)}$ overlaps with the innermost point $D^{(1)}$ of the tongue of the lowest gadget. 
Also, if ${D'}^{(n-1)}$ exists, then $\triangle A^{(n-1)}E_\Lt^{(n)}E_\Rt^{(n)}$ overlaps with $D^{(1)}$.
\end{remark}
\begin{proposition}\label{prop:existence_D}
Suppose $n\geqslant 2$ and let $q_n=p_1+\dots +p_n$. 
Then points $D^{(n)}$ and ${D'}^{(n-1)}$ given in Construction $\ref{const:division_new}$, $(8)$ exist if and only if
\begin{equation*}
q_n<\frac{r^2-1}{2(r\cos\psi_\Lt^{(n)}-1)}\cdot p_n. 
\end{equation*}
In particular, suppose $p_1=\dots =p_d=1$ (equal division). 
Then if $D^{(n)}$ does not exist for some $n$, then neither do $D^{(m)}$ for all $m>n$ if we set $\psi_\sigma^{(m)}=\psi_\sigma^{(n)}$.
\end{proposition}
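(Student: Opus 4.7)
The plan is to reduce the geometric existence of $D^{(n)}$ and ${D'}^{(n-1)}$ to length inequalities along specific segments, and then apply Lemma \ref{lem:length_AF'} directly. By Construction \ref{const:division_new}, (8), the point $D^{(n)}$ is the intersection of the circle with center $A^{(n)}$ and radius $\norm{A^{(n)}B_\sigma^{(n)}}=q_n$ with segment $A^{(n)}{F'}^{(n)}$, excluding the endpoint ${F'}^{(n)}$. Since the segment starts at $A^{(n)}$, this intersection exists if and only if the segment is long enough, that is, if and only if
\[
q_n<\norm{A^{(n)}{F'}^{(n)}}.
\]
Similarly, ${D'}^{(n-1)}$ is the intersection of the circle with center $A^{(n-1)}$ and radius $\norm{A^{(n-1)}B_\sigma^{(n-1)}}=q_{n-1}$ with segment $A^{(n-1)}{F'}^{(n)}$, excluding ${F'}^{(n)}$, which exists if and only if $q_{n-1}<\norm{A^{(n-1)}{F'}^{(n)}}$.

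First I would invoke the identity $\eqref{eq:AF'-AB}$ from Lemma \ref{lem:length_AF'}, namely
\[
\norm{A^{(n-1)}{F'}^{(n)}}-q_{n-1}=\norm{A^{(n)}{F'}^{(n)}}-q_n,
\]
which shows that the two existence conditions above are equivalent. Then plugging the explicit formula from Lemma \ref{lem:length_AF'} into the inequality $q_n<\norm{A^{(n)}{F'}^{(n)}}$ yields the displayed condition of the proposition:
\[
q_n<\frac{r^2-1}{2(r\cos\psi_\Lt^{(n)}-1)}\cdot p_n.
\]

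For the particular case $p_1=\dots=p_d=1$, I would substitute $p_n=1$ and $q_n=n$ so that the condition becomes $n<(r^2-1)/\{2(r\cos\psi_\Lt^{(n)}-1)\}$. Suppose this fails for some index $n$, i.e., $n\geqslant (r^2-1)/\{2(r\cos\psi_\Lt^{(n)}-1)\}$. For any $m>n$ with $\psi_\sigma^{(m)}=\psi_\sigma^{(n)}$, the right-hand side $(r^2-1)/\{2(r\cos\psi_\Lt^{(m)}-1)\}\cdot p_m$ is unchanged because $p_m=1$, while the left-hand side $q_m=m$ is strictly larger than $n$. Hence the condition continues to fail for $m$, proving that $D^{(m)}$ does not exist either.

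The only mild subtlety is handling the strict inequality correctly (the endpoint ${F'}^{(n)}$ is excluded in the construction, so equality produces no valid $D^{(n)}$); everything else is a direct substitution, and I do not anticipate any real obstacle beyond carefully citing $\eqref{eq:AF'-AB}$ to collapse the two existence criteria into one.
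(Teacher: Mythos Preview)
Your argument is correct and is exactly the intended one: the paper states this proposition immediately after Lemma~\ref{lem:length_AF'} without giving a separate proof, so the result is meant to follow directly from that lemma by the length comparison you wrote down. Your handling of the strict inequality and of the equal-division case is also fine.
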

\begin{proposition}\label{prop:existence_G}
Suppose $n\geqslant 2$ and let $q_n=p_1+\dots +p_n$ as before.
In Construction $\ref{const:division_new}$, points $G_\sigma^{(n)}$ and ${G'_\sigma}^{\! (n-1)}$ given in $(7)$ exist if and only if 
\begin{equation*}
q_n<\frac{\tan (\gamma /2)}{2}\left(\frac{1}{\tan (\phi_\sigma^{(n)}/2)}-\frac{1}{\tan\beta_\sigma}\right)\cdot p_n.
\end{equation*}
In particular, suppose $p_1=\dots =p_d=1$ (equal division). 
Then if $G^{(n)}$ does not exist for some $n$, then neither do $G^{(m)}$ for all $m>n$ if we set $\psi_\sigma^{(m)}=\psi_\sigma^{(n)}$.
\end{proposition}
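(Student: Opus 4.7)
The plan is to reduce both existence statements to the single inequality $|A^{(n)}G_\sigma^{(n)}|<|A^{(n)}E_\sigma^{(n)}|$, which I compute via the law of sines, and then to use a reflective symmetry across $m_\sigma^{(n)}$ to force the ${G'_\sigma}^{\! (n-1)}$ condition to coincide with the $G_\sigma^{(n)}$ one.

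I would choose coordinates with $A^{(n)}$ at the origin and $B_\sigma^{(n)}$ on the positive $x$-axis, measuring lengths so that $|A^{(n-1)}A^{(n)}|=p_n$ and $|A^{(n)}C|=q_n$. Since $\delta_\sigma =0$ forces $\gamma_\sigma =\gamma /2$ and $\angle AB_\sigma C=\pi /2$, the right triangle $A^{(n)}B_\sigma^{(n)}C$ yields $|A^{(n)}B_\sigma^{(n)}|=q_n\cos (\gamma /2)$ and $|B_\sigma^{(n-1)}B_\sigma^{(n)}|=p_n\sin (\gamma /2)$, so $m_\sigma^{(n)}$ is horizontal at distance $p_n\sin (\gamma /2)/2$ below $\ell_\sigma^{(n)}$. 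The bisector at $A^{(n)}$ appearing in the definition of $E_\sigma^{(n)}$ is the perpendicular bisector of $B_\sigma^{(n)}D^{(n)}$ (passing through $A^{(n)}$ because $|A^{(n)}B_\sigma^{(n)}|=|A^{(n)}D^{(n)}|$), and so makes angle $\phi_\sigma^{(n)}/2$ with $A^{(n)}B_\sigma^{(n)}$; intersecting with $m_\sigma^{(n)}$ yields $|A^{(n)}E_\sigma^{(n)}|=p_n\sin (\gamma /2)/(2\sin (\phi_\sigma^{(n)}/2))$. The triangle $\triangle A^{(n)}B_\sigma^{(n)}G_\sigma^{(n)}$ has interior angles $\phi_\sigma^{(n)}/2$, $\pi -\beta_\sigma$, $\beta_\sigma -\phi_\sigma^{(n)}/2$, so the law of sines gives $|A^{(n)}G_\sigma^{(n)}|=q_n\cos (\gamma /2)\sin\beta_\sigma /\sin (\beta_\sigma -\phi_\sigma^{(n)}/2)$. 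Comparing these two lengths and applying the identity
\[
\frac{\sin (\beta_\sigma -\phi_\sigma^{(n)}/2)}{\sin\beta_\sigma\sin (\phi_\sigma^{(n)}/2)}=\frac{1}{\tan (\phi_\sigma^{(n)}/2)}-\frac{1}{\tan\beta_\sigma}
\]
rearranges the existence criterion for $G_\sigma^{(n)}$ to exactly the stated bound on $q_n$.

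To handle ${G'_\sigma}^{\! (n-1)}$ without duplicating the calculation, the key observation is that reflection across $m_\sigma^{(n)}$ fixes $E_\sigma^{(n)}$ (which lies on $m_\sigma^{(n)}$), swaps $B_\sigma^{(n-1)}$ with $B_\sigma^{(n)}$, and carries ${k'_\sigma}^{\! (n)}$ (from $B_\sigma^{(n)}$ in direction $(\cos\beta_\sigma ,-\sin\beta_\sigma )$) onto $k_\sigma^{(n-1)}$ (from $B_\sigma^{(n-1)}$ in direction $(\cos\beta_\sigma ,\sin\beta_\sigma )$). Hence ${k'_\sigma}^{\! (n)}$ passes through $E_\sigma^{(n)}$ exactly when $k_\sigma^{(n-1)}$ does, so the boundary cases $G_\sigma^{(n)}=E_\sigma^{(n)}$ and ${G'_\sigma}^{\! (n-1)}=E_\sigma^{(n)}$ coincide; combined with the fact that both $|A^{(n)}G_\sigma^{(n)}|$ and $|A^{(n-1)}{G'_\sigma}^{\! (n-1)}|$ depend monotonically on $q_n/p_n$, this forces the two existence conditions to be equivalent. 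I would also confirm this algebraically by parametrizing the intersection of $k_\sigma^{(n-1)}$ with line $A^{(n-1)}E_\sigma^{(n)}$ and using $q_n=q_{n-1}+p_n$ to collapse the resulting inequality to the same bound.

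The ``in particular'' corollary is then immediate: with $p_1=\dots =p_d=1$ we have $q_n=n$, and fixing $\psi_\sigma^{(m)}=\psi_\sigma^{(n)}$ (hence $\phi_\sigma^{(m)}=\phi_\sigma^{(n)}$) for $m>n$ makes the right-hand side of the inequality independent of $m$, while $q_m=m>n$, so failure at $n$ propagates to all $m>n$. The main technical obstacle I anticipate is the ${G'_\sigma}^{\! (n-1)}$ step, since the reflection across $m_\sigma^{(n)}$ does not map $A^{(n)}$ to $A^{(n-1)}$ and hence the equivalence of the two conditions cannot be argued by a naive congruence; it must rest on the boundary-coincidence observation or on a careful direct calculation.
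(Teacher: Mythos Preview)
The paper states this proposition without proof (as it does for the neighbouring Propositions on the existence of $D^{(n)}$ and the coincidence $K_\sigma^{(n)}=M_\sigma^{(n)}$), so there is no argument in the text to compare against. Your proposal is a correct way to supply the missing details.

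A few remarks. Your normalisation $|A^{(n-1)}A^{(n)}|=p_n$ differs from the paper's $|A^{(n)}B_\sigma^{(n)}|=q_n$ (Lemma~\ref{lem:length_AF'}); this is harmless since the target inequality is homogeneous, but it is worth flagging so the reader does not get confused by the extra factor $\cos(\gamma/2)$ in your intermediate lengths. Your law-of-sines computation for $|A^{(n)}G_\sigma^{(n)}|$ and the trigonometric identity reducing it to $\tfrac{\tan(\gamma/2)}{2}\bigl(1/\tan(\phi_\sigma^{(n)}/2)-1/\tan\beta_\sigma\bigr)$ are both right.

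For the ${G'_\sigma}^{\!(n-1)}$ half, your reflection observation is the clean way to do it, and it can be made into a complete argument without the separate algebraic confirmation you mention. The point is that both $A^{(n)},B_\sigma^{(n)}$ lie on $\ell_\sigma^{(n)}$ and both $A^{(n-1)},B_\sigma^{(n-1)}$ lie on $\ell_\sigma^{(n-1)}$, with $B_\sigma^{(n)}$ to the right of $A^{(n)}$ and $B_\sigma^{(n-1)}$ to the right of $A^{(n-1)}$; the rays ${k'_\sigma}^{\!(n)}$ and $k_\sigma^{(n-1)}$ traverse the strip between $\ell_\sigma^{(n)}$ (resp.\ $\ell_\sigma^{(n-1)}$) and the line of $m_\sigma^{(n)}$, and each crosses the corresponding segment $A^{(\cdot)}E_\sigma^{(n)}$ inside that strip if and only if its exit point on the line of $m_\sigma^{(n)}$ lies to the $A$-side of $E_\sigma^{(n)}$. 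Since the two rays are mirror images in $m_\sigma^{(n)}$, they share that exit point, and the two existence conditions coincide exactly---not merely at the boundary. This also recovers the assertion in step~(7) of Construction~\ref{const:division_new} that $G_\sigma^{(n)}$ exists if and only if ${G'_\sigma}^{\!(n-1)}$ does.
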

\begin{remark}\label{rem:choice_phi^n}\rm
We chose as $\alpha =90^\circ ,\beta_\Lt =45^\circ ,\beta_\Rt =120^\circ ,\gamma =105^\circ ,\phi_\Lt^{(n)}=18^\circ ,\psi_\Lt^{(n)}=34.5^\circ$ for all $n$
and $p_1=p_2=p_3=1$ in Construction $\ref{const:division_new}$,
so that all cases in Table $\ref{tbl:assignment_division_new}$ appear in Figure $\ref{fig:division_new_CP_1}$. 
Indeed, since we have 
\begin{equation*}
\frac{r^2-1}{2(\cos\psi_\Lt^{(n)}-1)}\approx 2.4004,\quad
\frac{\tan (\gamma /2)}{2}\left(\frac{1}{\tan (\phi_\sigma^{(n)}/2)}-\frac{1}{\tan\beta_\sigma}\right)\approx\begin{dcases}
3.4625&\text{for }\sigma =\Lt ,\\
1.0629&\text{for }\sigma =\Rt ,
\end{dcases}
\end{equation*}
we see from Proposition $\ref{prop:existence_D}$ that $D^{(n)}$ exists only for $n=2$, $G_\Rt^{(n)}$ exists for $n=2,3$, while no $G_\Lt^{(n)}$ exists. 

The above choice of $\phi_\Lt$ is instructive, but not practical. 
If we choose $\phi_\Lt^{(n)}=2\zeta_\Lt\approx 43.062^\circ$ instead, which leads to
\begin{equation*}
\frac{r^2-1}{2(\cos\psi_\Lt^{(n)}-1)}\approx 1.3687,\quad
\frac{\tan (\gamma /2)}{2}\left(\frac{1}{\tan (\phi_\sigma^{(n)}/2)}-\frac{1}{\tan\beta_\sigma}\right)\approx
\begin{dcases}
0.99998&\text{for }\sigma =\Lt ,\\
1.4620&\text{for }\sigma =\Rt ,
\end{dcases}
\end{equation*}
then the crease pattern becomes much simpler as in Figure $\ref{fig:division_new_CP_2}$
because the smaller $\norm{\psi_\Lt}$ is, the less number of $D^{(n)}$ exist. 
If we choose as $\phi_\Lt^{(n)}=\phi_\Rt^{(n)}=\gamma /2$ for $n=2,3$, we can also avoid the appearance of $D^{(n)}$ and $G^{(n)}$ for $n=2,3$.
\end{remark}
\begin{figure}[htbp]
  \begin{center}
\addtocounter{theorem}{1}
          \includegraphics[width=\hsize]{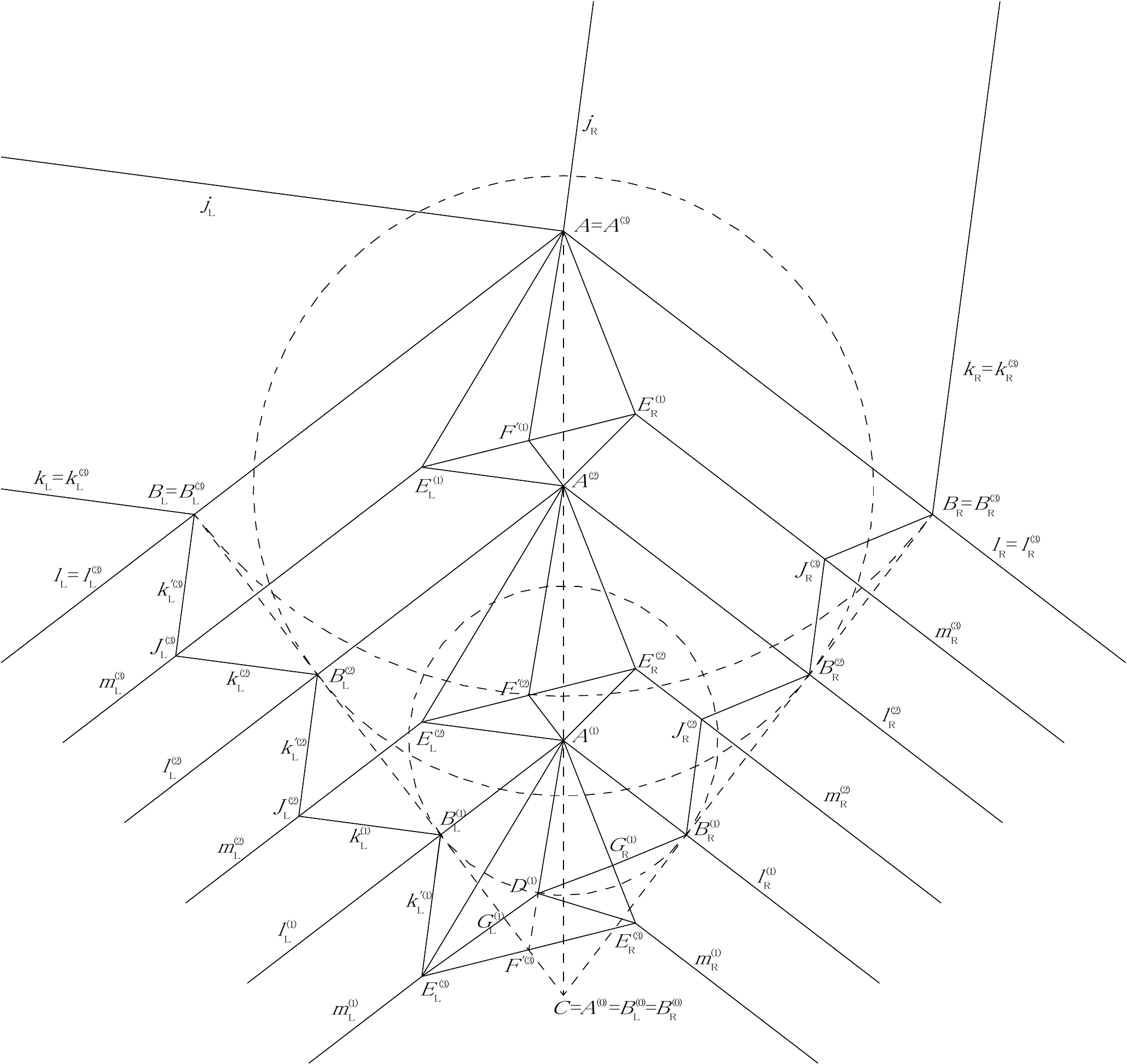}
    \caption{Simpler CP of the division of our improved $3$D gadgets}
    \label{fig:division_new_CP_2}
\end{center}
\end{figure}
\begin{proposition}\label{prop:K=M}
Suppose $\phi_\sigma /2=\zeta_\sigma$ in Construction $\ref{const:division_new}$. 
If $D^{(n)}$ and ${D'}^{(n-1)}$ exist in $(8)$, then $K_\sigma^{(n)}$ given in $(9)$ and $M_\sigma^{(n)}$ given in $(10)$ are identical points.
\end{proposition}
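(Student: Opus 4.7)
My approach is to show that the $\sigma$-side line defining $K_\sigma^{(n)}$ and the $\sigma$-side ray defining $M_\sigma^{(n)}$ are in fact the same line, so their common intersection with segment $E_\Lt^{(n)}E_\Rt^{(n)}$ is unambiguous and gives $K_\sigma^{(n)}=M_\sigma^{(n)}$. Both of these lines are anchored at $D^{(n)}$ (the first by construction in procedure~$(9)$, and the second because procedure~$(10)$ explicitly draws the reflected ray from $D^{(n)}$), so it suffices to verify that they have the same direction, namely that the reflection of ${k'_\sigma}^{\!(n)}$ across $A^{(n)}E_\sigma^{(n)}$ is parallel to segment $D^{(1)}E_\sigma^{(1)}$.

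The verification proceeds by propagating parallelism across levels, adopting the default choice $\phi_\sigma^{(n)}=\phi_\sigma$ for all $n\geqslant 2$ permitted by procedure~$(5)$. Procedures~$(3)$ and~$(6)$ give $\ell_\sigma^{(n)}\parallel\ell_\sigma^{(1)}$ and $k_\sigma^{(n)}\parallel k_\sigma^{(1)}$, whence ${k'_\sigma}^{\!(n)}\parallel{k'_\sigma}^{\!(1)}$ (reflections of parallel rays across parallel axes are parallel). The ridges $A^{(n)}B_\sigma^{(n)}$ are pairwise parallel, each lying along $\ell_\sigma^{(n)}$, and the chords $B_\sigma^{(n-1)}B_\sigma^{(n)}$ are parallel sub-segments of $B_\sigma C$; an angle-chase inside procedure~$(4)$ -- where the direction of $A^{(n)}E_\sigma^{(n)}$ is fixed by a pair of directions ($\ell_\sigma^{(n)}$ and the $CA$-direction) that do not depend on $n$ -- yields $A^{(n)}E_\sigma^{(n)}\parallel A^{(1)}E_\sigma^{(1)}$. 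Reflecting again, the reflection of ${k'_\sigma}^{\!(n)}$ across $A^{(n)}E_\sigma^{(n)}$ is parallel to the reflection of ${k'_\sigma}^{\!(1)}$ across $A^{(1)}E_\sigma^{(1)}$. At base-level criticality $\phi_\sigma/2=\zeta_\sigma$ we have $G_\sigma^{(1)}=E_\sigma^{(1)}$, so by procedure~$(10)$ applied at level~$1$ the reflected ray from $D^{(1)}$ passes through $G_\sigma^{(1)}=E_\sigma^{(1)}$ and thus has direction $D^{(1)}E_\sigma^{(1)}$; combining with the previous observation, the reflection direction at level~$n$ is parallel to $D^{(1)}E_\sigma^{(1)}$.

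Therefore the $\sigma$-side line defining $K_\sigma^{(n)}$ and the $\sigma$-side ray defining $M_\sigma^{(n)}$ are both lines through $D^{(n)}$ parallel to $D^{(1)}E_\sigma^{(1)}$, hence the same line; intersecting with segment $E_\Lt^{(n)}E_\Rt^{(n)}$ yields a common point, giving $K_\sigma^{(n)}=M_\sigma^{(n)}$. The main obstacle is the parallelism $A^{(n)}E_\sigma^{(n)}\parallel A^{(1)}E_\sigma^{(1)}$: although it is intuitive from the similar construction of the sub-gadgets, extracting it cleanly from procedure~$(4)$ requires a careful unwinding showing that the ``second ray'' of the defining angle bisector at each $A^{(n)}$ points in a direction independent of $n$. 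Once that parallelism is in hand, the remainder of the argument reduces to an elementary reflection-preserves-parallelism step combined with the identification of the critical crease $D^{(1)}G_\sigma^{(1)}=D^{(1)}E_\sigma^{(1)}$ from the base gadget.
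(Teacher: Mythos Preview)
The paper states this proposition without proof, so there is no argument of the author's to compare against. Your approach---show that the $\sigma$-side line through $D^{(n)}$ parallel to $D^{(1)}E_\sigma^{(1)}$ and the reflected ray from $D^{(n)}$ in procedure~$(10)$ are one and the same line---is the natural one, and the parallelism chain you run (parallel $\ell_\sigma^{(n)}$, parallel $k_\sigma^{(n)}$, hence parallel ${k'_\sigma}^{\!(n)}$; parallel bisector directions $A^{(n)}E_\sigma^{(n)}$; hence parallel reflected directions) is correct. The identification of the base direction with $D^{(1)}E_\sigma^{(1)}$ via $G_\sigma^{(1)}=E_\sigma^{(1)}$ in the critical case is also right, though strictly speaking you are invoking the base-gadget identity from Construction~\ref{const:new} (Table~\ref{tbl:assignment_new}) rather than ``procedure~$(10)$ at level~$1$'', since $(10)$ is only stated for $n\geqslant 2$.

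The one substantive point is your assumption $\phi_\sigma^{(n)}=\phi_\sigma$ for all $n$. This is not merely a convenient ``default choice'': it is \emph{necessary} for the conclusion. The direction of $A^{(n)}E_\sigma^{(n)}$ makes angle $\phi_\sigma^{(n)}/2$ with $\ell_\sigma^{(n)}$, so the reflected direction of ${k'_\sigma}^{\!(n)}$ across it depends on $\phi_\sigma^{(n)}$; if $\phi_\sigma^{(n)}\neq\phi_\sigma$ then that reflected direction is not parallel to $D^{(1)}E_\sigma^{(1)}$ and $K_\sigma^{(n)}\neq M_\sigma^{(n)}$ generically. So the proposition as written is only valid under this extra hypothesis, which the paper appears to take for granted in deriving Table~\ref{tbl:assignment_division_new_crit}. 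Your proof is complete once you make clear that you are supplying this missing hypothesis rather than merely exercising an option.
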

Thus from Propositions $\ref{prop:existence_D}$, $\ref{prop:existence_G}$ and $\ref{prop:K=M}$, we obtain Table $\ref{tbl:assignment_division_new_crit}$ 
which gives the assignment of mountain folds and valley folds in the case of $\phi_\sigma /2=\zeta_\sigma$ 
in the resulting crease pattern in Construction $\ref{const:division_new}$.
\addtocounter{theorem}{1}
\begin{table}[h]
\begin{tabular}{c|c|c|c}
&common&\multicolumn{2}{c}{$n$ with $2\leqslant n\leqslant d$ such that}\\
\cline{3-4}&creases&$\exists D^{(n)}$&$\nexists D^{(n)}$\\
\hline mountain&$j_\sigma ,\ell_\sigma^{(n)},A^{(1)}D^{(1)},$&\multicolumn{2}{c}{$A^{(n-1)}E_\sigma^{(n)},B_\sigma^{(n)}J_\sigma^{(n)}$}\\
\cline{3-4}folds&$B_\sigma^{(1)}E_\sigma^{(1)},D^{(1)}E_\sigma^{(1)}$&$A^{(n)}D^{(n)},D^{(n)}K_\sigma^{(n)}$&$A^{(n)}F^{(n)}$\\
\hline valley&$k_\sigma ,m_\sigma^{(n)},$&\multicolumn{2}{c}{$B_\sigma^{(n-1)}J_\sigma^{(n)}$}\\
\cline{3-4}folds&$A^{(n)}E_\sigma^{(n)},E_\Lt^{(n)}E_\Rt^{(n)}$&$A^{(n-1)}{D'}^{(n-1)},{D'}^{(n)}K_\sigma^{(n)}$&$A^{(n-1)}F^{(n)}$\\
\end{tabular}\vspace{0.5cm}
\caption{Assignment of mountain folds and valley folds for the division of the new gadget for $\phi_\sigma /2=\zeta_\sigma$}
\label{tbl:assignment_division_new_crit}
\end{table}

\section{Conclusion}\label{sec:8}
In this paper we presented a construction of flat-back $3$D gadgets completely downward compatible with the conventional pyramid-supported ones.
Since there are an infinite number of gadgets parametrized by $\phi_\sigma$ or $\psi_\sigma$ compatible with a given conventional gadget,
we face a new problem of what choice of gadget suits our requirements.
We suggested some choices of typical gadgets, that is, critical, orthogonal and balanced gadgets
to design the crease pattern as easy to fold as possible.
For a similar purpose, we also presented a construction of our $3$D gadgets specifying $\epsilon_\sigma$ instead of $\phi_\sigma$ or $\psi_\sigma$.
On the other hand, problems such as maximizing the height of an extrusion which we considered in Example $\ref{ex:highest_prism}$
may only be solved computationally.

Although our improved $3$D gadgets have the complete downward compatibility with the conventional ones in addition to the advantages mentioned in the introduction, 
our $3$D gadgets have one definite disadvantage that they need to fold more creases than the conventional ones, 
which makes the folding a little more complicated and time-consuming even if we make a good choice of gadgets.
Thus when we design and make an origami extrusion, it will be still useful to make prototypes with the conventional gadgets,
while we can use our improved gadgets in a finishing work by replacing the conventional ones.

In this paper we did not mention negative $3$D gadgets, which we gave two constructions for our previous $3$D gadgets in \cite{Doi}, Section $9$.
However, negative gadgets for our improved gadgets are also of our concern,
and will be treated in a sequel to this paper since we need a little more preparation for introducing a third construction.

It may be time to give a name to our $3$D gadgets to distinguish them from the conventional pyramid-supported ones.
Let us call our flat-back $3$D gadgets constructed in the previous and the present paper
\emph{origon gadgets} or simply \emph{origons}, which is a coined word combining `origami' and `polygon' 
or suffix `-on' (meaning a fundamental unit in high energy physics, molecular biology etc. such as photon and codon).
We expect that origon gadgets contribute to future developments of origami extrusions on the basis of our studies.


\begin{thebibliography}{9}
\bibitem{Doi} Mamoru Doi, New efficient flat-back $3$D gadgets in origami extrusions compatible with the conventional pyramid-supported $3$D gadgets,
arXiv:cs.CG/1908.07342.
\bibitem{Kawasaki} Toshikazu Kawasaki, On the Relation Between Mountain-Creases and Valley-Creases of a Flat Origami, 
{\it Proceedings of the 1st International Meeting on Origami Science and Technology (Ed. H. Huzita)}, Ferrara, Italy, 229--237, 1989.
\bibitem{Natan} Carlos~Natan~Lop\'{e}z Nazario, Origami~$3$D~tessellations, {\it Carlos~Natan~Lop\'{e}z Nazario's~photostream~at~flickr.com}, 
http://www.flickr.com/photos/origamiz/sets/72157606559615966/, 2010.
\end{thebibliography}
\end{document}